\newtheorem{theorem}{Theorem}
\newtheorem{corollary}[theorem]{Corollary}
\newtheorem{observation}{Observation}
\newtheorem{lemma}[theorem]{Lemma}
\newtheorem{claim}[theorem]{Claim}
\theoremstyle{definition}
\newtheorem{definition}{Definition}
\theoremstyle{remark}	
\newcommand{\opt}{\mathsf{opt}}
\newcommand{\OPT}{\mathsf{opt}}
\newcommand{\alternatingtodo}[2][]{%
    \iftoggle{lmargin}%
    {%
        \todo[#1]{#2}%
        \togglefalse{lmargin}%
    }{%
        {%
            \let\marginpar\marginnote%
            \reversemarginpar%
            \todo[#1]{#2}%
        }%
        \toggletrue{lmargin}%
    }%
    \ignorespaces%
}
\title{Controlling Tail Risk in Online Ski-Rental}
\author{Michael Dinitz\thanks{Department of Computer Science, Johns Hopkins University, Baltimore, MD.  \texttt{mdinitz@cs.jhu.edu}.  Supported in part by NSF grants CCF-1909111 and CCF-2228995.  Work partially done while a Visiting Researcher at Google Research New York, NY.} \and 
Sungjin Im\thanks{Electrical Engineering and Computer Science, University of California, 5200 N. Lake Road, Merced CA 95344. \texttt{sim3@ucmerced.edu}. Supported in part by NSF grants CCF-1844939  
 and CCF-2121745.}\and
Thomas Lavastida\thanks{Jindal School of Management, University of Texas at Dallas, Richardson, TX.  \texttt{thomas.lavastida@utdallas.edu}.}\and 
Benjamin Moseley\thanks{Tepper School of Business, Carnegie Mellon University, Pittsburgh, PA. \texttt{moseleyb@andrew.cmu.edu}.  Work supported in part by a Google Research Award, an Infor Research Award, a Carnegie Bosch Junior Faculty Chair and NSF Grants CCF-2121744 and CCF-1845146. } \and 
Sergei Vassilvitskii\thanks{Google Research New York, NY.  \texttt{sergeiv@google.com}.}}
\date{}
\begin{document}

\maketitle
\thispagestyle{empty}

\begin{abstract}
The classical ski-rental problem admits a textbook 2-competitive deterministic algorithm, and a simple randomized algorithm that is $\nicefrac{e}{e-1}$-competitive in expectation. The randomized algorithm, while optimal in expectation, has a large variance in its performance: it has more than a 37\% chance of competitive ratio exceeding 2, and a $\Theta(1/n)$ chance of the competitive ratio exceeding $n$!

We ask what happens to the optimal solution if we insist that the {\em tail risk}, i.e. the chance of the competitive ratio exceeding a specific value is bounded by some constant $\delta$. We find that this additional modification significantly changes the structure of the optimal solution. The probability of purchasing skis on a given day becomes non-monotone, discontinuous, and arbitrarily large (for sufficiently small tail risk $\delta$ and large purchase cost $n$). 

\end{abstract}

\newpage
 \setcounter {page} {1} 
\section{Introduction}
\label{sec:intro}
Decision-making under uncertainty about the future is a central topic in algorithm design; online algorithms, studied through the metric of competitive analysis, have been successful in guaranteeing worst-case performance against adversarial inputs.  Arguably, the most basic online problem is the \emph{Ski Rental} problem, which captures a commonly faced sub-problem, usually known as ``rent or buy'': we need to decide whether to stay in the current state, paying some cost per time unit, or switch to another state, which is expensive but requires no further payment.  In the specific ski rental problem, every morning Alice must decide whether to rent skis for \$1 or buy them for \$$n$, in which case she never needs to rent them again. Her choice is non-obvious because she does not know the number of days, denoted as $x$, she is going to ski---the weather may become too warm, she may get injured, or may just get tired of the sport.

Folklore analysis says that committing to buying skis on the morning of day $n$ is (deterministically) optimal, as Alice never over-spends by more than a factor of $2$ (i.e., this approach has a competitive ratio bounded by $2$), no matter how many days she ends up skiing.  If Alice is willing to randomize, she can do even better---she can commit to buying the skis on day $i\in [n]$ with probability proportional to $\exp(i/n)$. This method gives the best possible $\nicefrac{e}{e-1} \approx 1.58$ competitive ratio in expectation~\cite{KarlinMRS88}.

The competitive ratio is for the worst-case action by the adversary (who decides on the number of skiing days), it only holds {\em in expectation} for Alice. An easy calculation shows that an adversary that ends the ski season on day $n/2$ ensures that Alice exceeds the competitive ratio of 2 with probability $(\sqrt{e} - 1)/(e-1) \approx 37\%$; for more details, see Appendix~\ref{app:worse-than-deterministic}.  Thus more than a third of the time, Alice is better off following the deterministic strategy. Furthermore, the competitive ratio is $\Omega(n)$ with probability $\Omega(1/ n)$\footnote{This can occur when $x = \Theta(1)$, yet Alice ends up buying skis.}, illustrating that the ``best'' randomized algorithm has a considerable chance of returning a solution significantly worse than the deterministic alternative.  Importantly, in the online setting decisions are irrevocable, so results in expectation do not immediately lead to high-probability bounds. This is in contrast to an offline algorithm where bounding the approximation ratio in expectation often leads to giving the same bound with high probability by running the algorithm a logarithmic number of times independently and taking the best solution.

What if one desires upper-case bounds on the chance the randomized algorithm is worse than the deterministic algorithm? A natural direction is to find the optimal algorithm and study its competitive ratio as a function of the probability of the competitive ratio exceeding 2. We have the two endpoints---the deterministic algorithm with a ratio of $2$, and the randomized algorithm with an {\em expected} ratio of $\nicefrac{e}{e-1}$ and a probability of $\nicefrac{(\sqrt{e} - 1)}{(e-1)}$ of exceeding $2$. The question we study in this work is what happens in between?  What is possible if we put nontrivial constraints on the tail performance, and what is the structure of these optimal solutions?

\begin{figure*}[t!]
\includegraphics[width=0.99\textwidth]{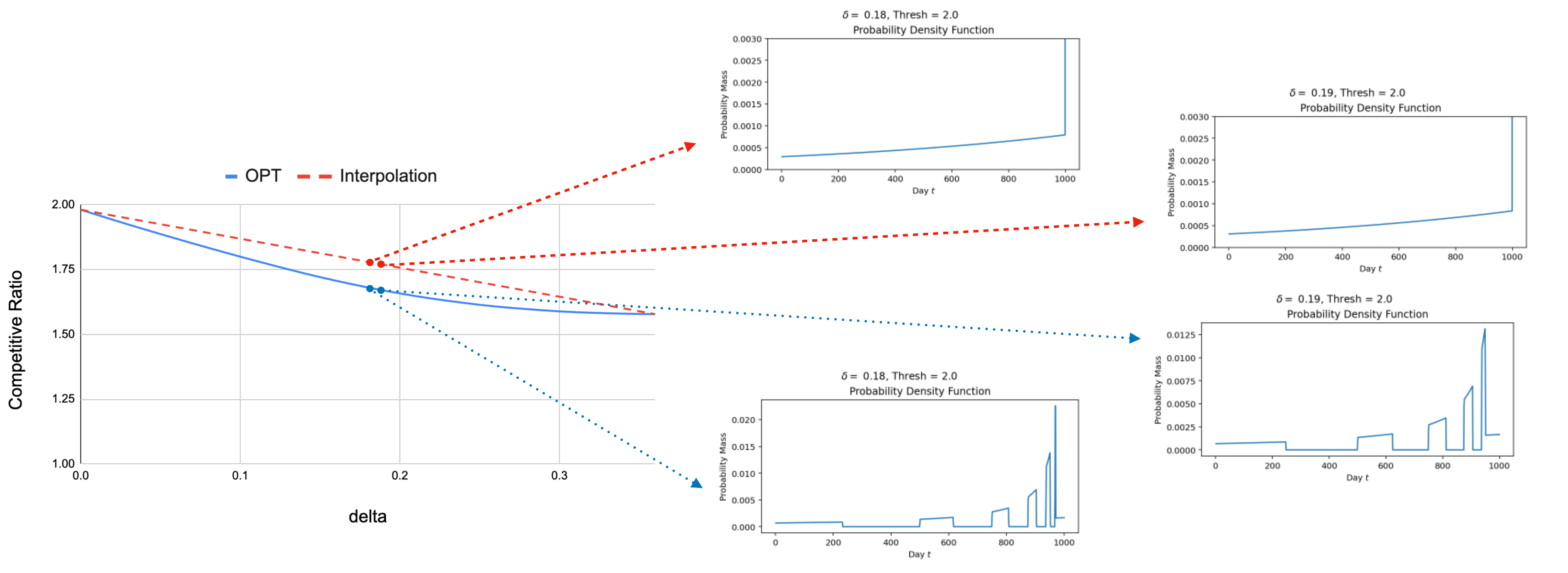}
\caption{Figure of the competitive ratio as a function of $\delta$ for both the optimal strategy and the interpolation strategy. For fixed values of $\delta = \{0.18, 0.19\}$ we show the probability of purchasing skis on day $t \in [1000]$ for the two algorithms. Observe that the interpolation algorithm (red dashed line) puts a  weight of $1 -  \delta/\delta^* \approx \nicefrac{1}{2}$ on the last day in both cases. The optimal algorithm distributes the weight across the days, but does so in a non-continuous manner. For instance, observe that the figure for $\delta = 0.18$ has six non-zero intervals, whereas $\delta = 0.19$ has five non-zero intervals.   }
\label{fig:ratios}
\end{figure*}

To formalize the question we study, let a $(\gamma, \delta)$-tail constraint denote the restriction that the probability (over the choices made by the algorithm) that the worst-case competitive ratio exceeds $\gamma$ is at most $\delta$. Let $\delta^* = \nicefrac{(\sqrt{e} -1)}{(e-1)}$. For the ski rental problem, the deterministic algorithm is optimal and satisfies a $(2, 0)$-tail constraint, whereas the randomized algorithm optimizes the competitive ratio among all of the algorithms satisfying a $(2, \delta^*)$-tail constraint, or more generally, a $(2, \delta)$-tail constraint for any $\delta \geq \delta^*$.  

Given a collection of arbitrary $(\gamma, \delta)$ constraints, or even a single nontrivial tail constraint, what is the optimal algorithm? One may suspect that the solution is well-structured with behavior that is similar to the classical deterministic and randomized solutions.  For example, suppose that we are given a single $(2, \delta)$-tail constraint with $\delta < \delta^*$.  It it not hard to see that one way of achieving this is to interpolate between the two classical solutions: with probability $\delta/\delta^*$ we run the classical randomized algorithm, and with probability $1-\delta/\delta^*$ we run the classical deterministic algorithm.  This clearly satisfies the tail constraint, is a very simple algorithm, and inherits a number of nice properties (monotonically increasing probabilities which increase only exponentially, a single discontinuity at day $n$, non-zero probability on every day, etc.).  Is this interpolation optimal?  If not, does the optimal solution at least have these nice properties, or look ``simple''?  

\subsection{Our Contribution}
The answer to these questions is an emphatic \emph{no}. Not only is the the optimal solution \emph{not} an interpolation between the classical algorithms, its structure is wildly different from any previously considered ski rental algorithm and does not satisfy any of the ``nice'' properties mentioned earlier.  In particular, we show that the optimum solution has the following surprising properties:

\begin{itemize}
    \item \textbf{Non-monotonicity.} The purchase probability does not grow monotonically in time $t$.
    \item \textbf{Arbitrarily many discontinuities.} Even for a single $(\gamma, \delta)$ tail constraint, the purchase probability becomes zero and then positive $\Omega(1/ \delta)$ times for 
    any sufficiently large $n$ compared to any fixed sufficiently small $\delta$.    
    \item \textbf{Arbitrarily fast-growing purchase probability.} Again, even for a single $(\gamma, \delta)$ tail constraint, when the probability becomes positive, it grows much faster than before---the exponent doubles each time.  This results in continuous sections of the purchase distribution with arbitrarily fast growth. 
\end{itemize}

 To illustrate these points, we show the interpolation and the optimal competitive ratios as a function of $\delta$ in Figure \ref{fig:ratios}. To add, we show the purchase distributions for two nearby points (namely $\delta = 0.18$ and $\delta = 0.19$) for both cases. Note how the solution, while structured enough to reason about, is not in any way ``smooth''.  For example, even for these very close $\delta$, the number of nonzero regions is distinct---there is a ``discontinuity'' in how the optimal solution behaves as $\delta$ changes from $0.19$ to $0.18$.  This is despite the fact that the achieved competitive ratio \emph{is} smoothly changing with $\delta$, as is apparent from Figure~\ref{fig:ratios}.

Proving these properties of the optimal solution is our main technical contribution.  The formal statements are quite complex, but can be found in Section~\ref{sec:general-gamma}, particularly, Lemma~\ref{lem:explicit-general-gamma} and Corollary~\ref{cor:big-exponentials-general-gamma}.
As a side effect of our techniques we can also design an algorithm to actually compute the optimal solution (see Section~\ref{sec:algorithm}). 

We would like to emphasize that ski rental is traditionally considered to be an extraordinarily \emph{simple} setting for online algorithms.  The fact that adding a single tail constraint (which as discussed is something we naturally want in an online setting where we cannot run multiple times) results in such a complex and nonobvious solution structure is, in our opinion, extremely surprising.  We hope that this opens up a new set of questions on tail bounds in online algorithms. 

\subsubsection{Technical Overview}

In order to introduce our techniques, we first need to introduce some notation.  Consider the following  characterization of any randomized algorithm for ski rental. Prior to skiing on the first day, Alice flips a coin and commits to buying skis on the morning of day $i$ with probability $f_i$. Note the $f_i$'s form a distribution, i.e., $\sum_i f_i = 1$. We call $f = \{f_i\}_{i\in [n]}$, the {\em purchase distribution}. 

The purchase distribution, along with the adversary's choice of the last skiing day $x$, induces the competitive ratio, which itself is a random variable. We denote the {\em competitiveness} function $\alpha_f$, where $\alpha_f(x)$ is the {\em expected} competitive ratio when Alice stops skiing at the end of day $x$ and chooses her purchase day from the distribution $f$. 

Let $f^*$ be the randomized algorithm minimizing the expected competitive ratio, i.e., minimizing $\max_x \alpha_{f^*}(x)$.  It is known that this algorithm sets $f^*_i \propto \exp(i)$. The optimal choice of $f^*$ balances two competing objectives: buying early is good if the number of skiing days exceeds $n$, on the other hand, buying late is good if the adversary's choice for the number of skiing days is small. The balance is done in the worst-case over all adversary's choices, but in expectation over the random coins flipped by the algorithm. In the optimal solution the balance has the effect that the competitiveness function $\alpha_{f^*}(x) = \nicefrac{e}{e-1}$, that is, it is constant over all $x \in [n]$.  Thus, no matter what day the adversary picks as the last ski day, Alice will have the exact same expected competitive ratio of $\nicefrac{e}{e-1}$.

\paragraph{Optimal Solution Characterization}

In order to understand the structure of the optimal purchase distribution in the presence of tail bounds, we first need to give a characterization of this optimal distribution.  It turns out that this characterization will naturally lead to an efficient algorithm to \emph{construct} the optimal purchase distribution under any combination of tail constraints, although there are some technical complications which have to be overcome along the way (see Section~\ref{sec:algorithm} for details).  While our characterization and algorithm holds for arbitrary combinations of tail constraints, for simplicity we will now assume that there is only a single $(\gamma, \delta)$-tail constraint.

For every day $x$, if we condition on the adversary choosing $x$ to be the final day of skiing, then there are essentially two constraints which the optimal purchase distribution $f$ must obey:  i) \emph{competitiveness constraint}: the expected competitive ratio using $f$ must be at most the optimal competitive ratio $\opt$, i.e., $\alpha_f(x) \leq \opt$; and ii) \emph{tail constraint}: the probability when we choose a purchase day from $f$ that we achieve competitive ratio worse than $\gamma$ must be at most $\delta$ (for a formal version of this in terms of ``bad intervals'', see Section~\ref{sec:general-tail}). 

These constraints are both inequalities, but our main characterization theorem (see Theorem~\ref{thm:opt-tight-general} in Section~\ref{sec:characterization}) is that for every $x$, \emph{at least one of these is tight}.  That is, for every day $x$, when we condition on the adversary picking day $x$ as the final day, \emph{either} the competitive ratio of $f$ is equal to $\opt$ \emph{or} the probability of $f$ achieving competitive ratio worse than $\gamma$ is equal to $\delta$.  

To gain some intuition for this, recall that if we do not have tail constraints, then in the optimal purchase distribution the competitive ratio is exactly the same, ($\nicefrac{e}{e-1}$), for every time $x$ which the adversary might choose.  This is no longer true in the presence of tail constraints, but if there is some day when the competitive ratio is strictly less than the optimal competitive ratio, then it can only be because the tail bound is tight on that day.  

This structure theorem leads to solutions with surprisingly complex structure.  But it also means that this complex structure is purely a function of which constraint is tight at which days.  This is also (at a very high level) how our algorithm works: we guess $\opt$ (which is non-obvious, see Appendix~\ref{sec:guessing_OPT}), and then can iteratively set the probability for each day to be whatever makes one of the two constraints tight.  To see this theorem ``in action'', consider Appendix \ref{app:figure}.  
The days in which the expected competitive ratio dips below $\opt$ are precisely the days where the tail constraint is tight, and these transitions also obviously correspond to transitions in the purchase distribution.

Notably, while this characterization allows us to find the optimal purchase distribution algorithmically, the exact value of the purchase distribution may not have a closed form with elementary functions. We give an example in Appendix \ref{app:lambert} where the exact solution involves the Lambert $W$ function.

\paragraph{Single Tail Constraint}  With the characterization of the optimal solution in place, we  consider  the solution in the presence of a single $(\gamma, \delta)$ tail constraint.  Here the pair of invariants we described above leads to a solution with a non-trivial shape, which we illustrate in the bottom panes of Figure \ref{fig:ratios}.
Observe that the purchase distribution, $f$,  oscillates between periods of $0$ probability of buying, together with periods of ever higher probability of buying. 

Although we do not compute the exact closed form of the competitive ratio due to the difficulty mentioned above, we characterize the optimum purchase probability distribution quite precisely when $t$ is small---including exactly the times when the probability distribution drops to 0 and the exact exponentially growing exponents. 

As the optimum solution for the discrete version of ski-rental problem is technically complex, to make our presentation transparent, here we only present our result for the continuous setting. In the continuous setting, by scaling we can assume without loss of generality that Alice can buy skis for \$1 and rent for $\$dt$ per $dt$ time; she will ski for $x$ time where $x$ is a fractional value in $(0, 1]$. Thus, the randomized algorithm can be described as a purchase probability distribution $f(t)$ over $(0, 1]$; see Section~\ref{sec:prelim} for more details. The following theorem reveals the optimum solution's structure.

\begin{theorem}
    \label{thm:single-tail-intro}    \emph{(Corollary~\ref{cor:continuous-general-gamma})}
    Let $\gamma \geq 2$ be an integer. 
    Let $\{P_j\}_{j \geq 0}$ be a partition of $(0, 1 / (\gamma - 1)]$, defined as $P_0 = (0, \frac{1}{\gamma}]$ and  $P_j = (\frac{1}{\gamma-1} \cdot (1 - \frac{1}{\gamma^j}), \frac{1}{\gamma-1} \cdot (1 - \frac{1}{\gamma^{j+1}})]$ for all $j \geq 1$. Let $\ell_j$ be the start point of $P_j$.
    In the continuous version of the ski rental problem with a single tail constraint $(\gamma, \delta)$, the optimum solution $f(t)$ has the following structure in every $P_j$ such that $j 
 < 1 / (2\gamma\delta) - 1$:
    \begin{itemize}
        \item $f(t) = c_j \cdot e^{\gamma^j t}$ for $t$ in the interval $(\ell_j, \ell_j + \hat t_j]$, and
        \item $f(t) = 0$ for all $t \in (\ell_j + \hat t_j, \ell_{j+1}]$,
    \end{itemize}
    where $c_j$ and $\hat t_j <  \ell_{j+1} - \ell_j$ are some constants depending on $\delta$ and $j$.
\end{theorem}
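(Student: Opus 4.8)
The plan is to prove the theorem by induction on $j$, at each step invoking the optimal‑solution characterization of Theorem~\ref{thm:opt-tight-general}: for every adversary choice $x$, \emph{at least one} of the competitiveness constraint $\alpha_f(x)\le\opt$ and the tail constraint is tight. Throughout I write $F(x)=\int_0^x f(t)\,dt$ for the c.d.f.\ of the purchase distribution and $G(x)=\int_0^x t\,f(t)\,dt$, and I use the explicit forms $\alpha_f(x)=\tfrac{1}{x}\int_0^x(t+1)f(t)\,dt+\int_x^1 f(t)\,dt=\tfrac{1}{x}\big(G(x)+F(x)\big)+1-F(x)$, together with the fact that buying at time $t$ yields competitive ratio exceeding $\gamma$ iff $\max(0,\gamma x-1)<t<x$ (so the ``bad interval'' is nonempty exactly for $x<1/(\gamma-1)$, and for $x\ge 1/\gamma$ it equals $(\gamma x-1,x)$). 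Two arithmetic facts drive the induction: the affine map $\tau\mapsto\gamma\tau-1$ carries $P_j$ bijectively onto $P_{j-1}$ (in particular $\gamma\ell_j-1=\ell_{j-1}$), and $\ell_j$ is always a point where the tail constraint is tight, because I will maintain the invariant $F(\ell_j)=j\delta$, so that $F(\ell_j)-F(\ell_{j-1})=\delta$.

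\textbf{Base case $P_0=(0,1/\gamma]$.} First, $f$ has no atom at $0$, since an atom of mass $p>0$ would give $\alpha_f(x)\ge p/x\to\infty$ as $x\to0^+$, contradicting $\alpha_f(x)\le\opt$; hence $F(x)\to0$ as $x\to0^+$, the tail constraint $F(x)\le\delta$ (valid for $x\le1/\gamma$) is strictly slack near $0$, and by Theorem~\ref{thm:opt-tight-general} the competitiveness constraint is tight on a neighborhood of $0$. Differentiating $x\,\alpha_f(x)=\opt\cdot x$ and simplifying gives $f(x)=\opt-1+F(x)$, hence $f'(x)=f(x)$, so $f(t)=c_0 e^{t}$ there with $c_0=\opt-1$ forced by $F(0^+)=0$; this is the exponent $\gamma^0=1$. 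The form persists while $F(x)=c_0(e^x-1)<\delta$; let $\hat t_0$ be the first point where $F$ reaches $\delta$ (one checks $\hat t_0<1/\gamma$ in the stated regime $\delta<1/(2\gamma)$, using $c_0=\opt-1\ge\tfrac{1}{e-1}$). On $(\hat t_0,1/\gamma]$ the c.d.f.\ cannot exceed $\delta$, so $f\equiv0$, which is consistent with Theorem~\ref{thm:opt-tight-general} (there the tail is tight while competitiveness has gone slack). This is exactly the asserted structure on $P_0$, and it yields $F(\ell_1)=\delta$.

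\textbf{Inductive step.} Assume the structure on $P_0,\dots,P_{j-1}$ and $F(\ell_j)=j\delta$. The crux is the claim that $\alpha_f(x)<\opt$ for every $x\in P_j$. Granting it, Theorem~\ref{thm:opt-tight-general} forces the tail constraint to be tight throughout $P_j$: $F(x)-F(\gamma x-1)=\delta$ for all $x\in P_j$, and differentiating gives the functional equation $f(x)=\gamma\,f(\gamma x-1)$. For $x$ with $\gamma x-1\in(\ell_{j-1},s_{j-1}]$ (writing $s_i:=\ell_i+\hat t_i$), the inductive hypothesis $f(\gamma x-1)=c_{j-1}e^{\gamma^{j-1}(\gamma x-1)}$ gives $f(x)=\big(\gamma c_{j-1}e^{-\gamma^{j-1}}\big)e^{\gamma^j x}$, i.e.\ $f(t)=c_j e^{\gamma^j t}$ with $c_j=\gamma c_{j-1}e^{-\gamma^{j-1}}$ --- the exponent is multiplied by $\gamma$. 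This holds up to $x=s_j:=(s_{j-1}+1)/\gamma$, equivalently $\hat t_j=\hat t_{j-1}/\gamma=\hat t_0/\gamma^j$, which is $<\ell_{j+1}-\ell_j$ in the stated regime. For $x\in(s_j,\ell_{j+1}]$ the preimage $\gamma x-1$ lies in $(s_{j-1},\ell_j]$, where $f\equiv0$ and $F\equiv j\delta$, so tightness of the tail constraint forces $F(x)\equiv(j+1)\delta$, i.e.\ $f\equiv0$ on $(s_j,\ell_{j+1}]$. This is the structure on $P_j$ and gives $F(\ell_{j+1})=(j+1)\delta$, closing the induction.

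\textbf{Main obstacle.} The hard part is the crux inequality $\alpha_f(x)<\opt$ on $P_j$ for $1\le j<1/(2\gamma\delta)-1$: without it one cannot exclude the competitiveness constraint re‑tightening, which would reintroduce a rate‑$1$ exponential and destroy the structure. I would prove it by substituting the piecewise description of $f$ obtained so far (rate‑$\gamma^i$ exponential on $(\ell_i,s_i]$, zero on $(s_i,\ell_{i+1}]$ for $i\le j$) into $\alpha_f(x)=\tfrac{1}{x}(G(x)+F(x))+1-F(x)$ and estimating: since $F$ grows by only $\delta$ per block (so $F\le(j+1)\delta$ on $P_j$) while $x$ grows geometrically toward $1/(\gamma-1)$, the deficit of $\alpha_f$ below $\opt$ can be bounded below, and I expect the hypothesis $j<1/(2\gamma\delta)-1$ to be exactly what is needed for this slack to survive. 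A secondary, routine obstacle is verifying the partition identities ($\gamma\ell_j-1=\ell_{j-1}$, $\gamma s_j-1=s_{j-1}$, $\hat t_j=\hat t_0/\gamma^j<\ell_{j+1}-\ell_j$), the c.d.f.\ invariant $F(\ell_j)=j\delta\le1$ over the stated range of $j$, and enough regularity of $f$ to justify differentiating the integral constraints.
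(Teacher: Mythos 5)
Your proposal is correct in its logical skeleton and that skeleton coincides with the paper's: mass at most $\delta$ per block $P_j$, competitiveness tight at the start of $P_0$ giving a rate-$1$ exponential, a proof that $\alpha_f<\opt$ throughout $P_j$ for $j\ge 1$ so that Theorem~\ref{thm:opt-tight-general} forces the tail constraint to be tight there, and then the tail-tightness recursion that multiplies the exponent by $\gamma$ and propagates the zero region. The genuine difference is methodological: you work directly in the continuous setting, deriving the ODE $f'=f$ on the initial segment and the functional equation $f(x)=\gamma f(\gamma x-1)$ by differentiating the tight constraints, whereas the paper deliberately carries out the entire induction in the discrete model (Lemmas~\ref{lem:P0-general-gamma}, \ref{lem:mass-tight-general-gamma}, \ref{lem:Pj-general-gamma}, where the recursion is $f_t=f_{\gamma t-(n-1)-(\gamma-1)}+\cdots+f_{\gamma t-(n-1)}$, a sum of $\gamma$ consecutive entries of $P_{j-1}$) and only passes to the limit $n\to\infty$ at the very end (Section~\ref{sec:continuous-general-gamma}). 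The paper's stated reason for the discrete route is precisely the regularity issue you flag as a ``secondary obstacle'': discretely there is nothing to differentiate, and the recursive bookkeeping is exact. Your route buys a cleaner derivation of the exponent doubling but owes a justification that the optimal continuous $f$ is piecewise smooth and that the characterization theorem survives the limit.

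One more substantive remark on what you call the main obstacle. The inequality $\alpha_f(x)<\opt$ on $P_j$ is not left open in the paper; it is Lemma~\ref{lem:mass-tight-general-gamma}, proved by exactly the estimate you sketch: writing $\alpha_f(x)-1=\frac{1}{x}\int_0^x(1+t-x)f(t)\,dt\le F(x)/x\le\gamma(j+1)\delta<1/2$ using $x\ge 1/\gamma$ and $\delta<1/(2\gamma(i+1))$, then comparing against $\opt\ge e/(e-1)>3/2$. So your guess that the hypothesis $j<1/(2\gamma\delta)-1$ is exactly what makes the slack survive is right, and you should make explicit the lower bound $\opt\ge e/(e-1)$ (the tail-constrained problem is at least as hard as the unconstrained one), since without it $\alpha_f(x)<3/2$ does not yet give $\alpha_f(x)<\opt$. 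With that estimate written out and the regularity of $f$ addressed (or with the argument discretized as in the paper), your proof is complete.
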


Let's assume $\gamma = 2$ to illustrate the theorem. 
The theorem shows that with a single tail constraint $(2, \delta)$, the optimum solution has certain recurring structures over time intervals, $P_0, P_1, \ldots $, which are of exponentially decreasing lengths by a factor of 2. Here, note that we can only show the structure for early intervals. For example, if $\delta < 1/ 12$, we can analytically show the recurring structure over $P_0 = (0, 1/2], P_1 = (1/2, 3/4]$ and $P_2 = (3/4, 7/8]$. In such an interval $P_j$, $f(t)$ initially grows in proportion to $e^{2^j t}$ and then drops to 0 and remains 0 until the next interval $P_{j+1}$ starts. 

We sketch the analysis of this result. We begin with a key observation that the probability mass of $f$ on each $P_j$ must be at most $\delta$, which follows from taking a close look at the tail constraint. More precisely, the tail constraint can be shown to be equivalent to the probability mass on each interval $I(x)$ ending at $x$ being at most $\delta$. It turns out that $P_0, P_1, \ldots $ are such intervals. 

We show that in beginning of $P_0$, the competitiveness constraint must be tight, thus $f$ grows in proportion to $e^t$ and then at some point to respect the tail constraint it drops to 0. Afterwards, we show that the tail constraint must be tight. Since $f$ must restart collecting probability mass from the beginning of each $P_j$, which is shorter than the previous interval $P_{j-1}$, it has to accelerate the collection.

The actual analysis is based on a careful induction argument, but this sketches the high-level proof idea.
We note that in the discrete version we can show that 
the number of times $f$ drops to 0 is at least $\Omega(1/\delta)$ (for large enough $n$).  See Section~\ref{sec:general-gamma} for more details.

\paragraph{Single Pure Tail Constraint}
 As a side effect of our techniques, we are actually able to give a precise and explicit description of the optimal solution when there is a single tail constraint of the form $(\gamma, 0)$, i.e., the competitive ratio is never allowed to exceed $\gamma$. We call these {\em pure} tail constraints.  A simple calculation shows that such a constraint implies that $f_i = 0$ for $i < n/(\gamma - 1)$: we must always rent for the first $n/(\gamma - 1)$ days. Since this is the only tail constraint, the remaining question is how to allocate the probability mass such that maximum of the competitiveness function on the remaining interval is minimized. In this case we can precisely show the value of the optimal competitive ratio and the shape of the optimal solution as follows (see the theorems in Appendix~\ref{sec:pure} for more precise statements).

\begin{theorem} 
    In the ski-rental problem with tail constraint $(\gamma, 0)$, the optimum solution is the following (assuming $\gamma - 1$ divides $n-1$): 
    \begin{align*}
        f_t = \begin{cases} 0 & t < \frac{n-1}{\gamma-1} \\ 
        \frac{\lambda-1}{\gamma}-1 & t = \frac{n-1}{\gamma-1} \\
        \frac{\gamma(\lambda-1)}{(n-1)(\gamma-1)} \cdot \left( 1 + \frac{1}{n-1}\right)^{t - t_1 - 1} & t > \frac{n-1}{\gamma-1} 
        \end{cases}
    \end{align*}
    where the optimum competitive ratio $\lambda$ tends to  $     1 + \frac{\gamma-1}{1 + \gamma \left(e^{1 - \frac{1}{\gamma-1}} - 1\right) } $ as $n \rightarrow \infty$.
\end{theorem}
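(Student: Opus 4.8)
The plan is to reduce the problem to an \emph{equalizing} computation on a suffix of days, solve the resulting linear recurrence in closed form, and then certify optimality with a matching adversary distribution. For the reduction: conditioning on the adversary stopping exactly on day $x$, the pure strategy ``buy on day $t$'' (with $t \le x \le n$) pays $t-1+n$ against an optimum of $x$, so the worst ratio it can ever incur is $\frac{t-1+n}{t}$, attained at $x=t$. A $(\gamma,0)$ tail constraint permits no outcome whose ratio exceeds $\gamma$, hence forces $f_t = 0$ whenever $\frac{t-1+n}{t} > \gamma$, i.e.\ whenever $t < \frac{n-1}{\gamma-1}$. Writing $t_1 = \frac{n-1}{\gamma-1}$ (an integer under the stated divisibility assumption), the problem becomes: minimize $\max_x \alpha_f(x)$ over purchase distributions $f$ supported on $\{t_1,\dots,n\}$. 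One checks $\alpha_f(x) = 1$ for $x < t_1$ and $\alpha_f(x) = \alpha_f(n)$ for $x \ge n$, so the maximum is attained on $\{t_1,\dots,n\}$; moreover the identity $\frac{t_1-1+n}{t_1} = \gamma$ will supply the boundary condition at day $t_1$.

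I would then look for $f$ with $\alpha_f(x) = \lambda$ constant on $\{t_1,\dots,n\}$ and prove afterwards that this is optimal. Writing $\alpha_f(x) = \tfrac1x\sum_{t\le x} f_t(t-1+n) + \bigl(1 - \sum_{t\le x} f_t\bigr)$ and subtracting the equations for consecutive days $x-1$ and $x$ gives, after simplification, a first-order linear recurrence for the cumulative mass $F(x) = \sum_{t \le x} f_t$, namely $F(x) = \bigl(1 + \tfrac1{n-1}\bigr) F(x-1) + \tfrac{\lambda-1}{n-1}$. The substitution $u(x) = F(x) + \lambda - 1$ turns this into the geometric recurrence $u(x) = \bigl(1 + \tfrac1{n-1}\bigr) u(x-1)$, so $f_x = u(x) - u(x-1)$ has exactly the claimed exponential form, with leading constant fixed by $u(t_1) = f_{t_1} + \lambda - 1$. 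The boundary condition $\alpha_f(t_1) = \lambda$ together with $\frac{t_1-1+n}{t_1} = \gamma$ yields $1 + (\gamma-1)f_{t_1} = \lambda$, i.e.\ the middle case $f_{t_1} = \frac{\lambda-1}{\gamma-1}$, whence $u(t_1) = \frac{\gamma(\lambda-1)}{\gamma-1}$ and the third case follows; since $\lambda > 1$ all $f_t > 0$, so $f$ is a genuine distribution. Finally, normalization $F(n) = 1$ is the single remaining equation, $\lambda = \frac{\gamma}{\gamma-1}(\lambda-1)\bigl(1 + \tfrac1{n-1}\bigr)^{n-t_1}$; using $n - t_1 = \frac{(\gamma-2)n+1}{\gamma-1}$ so that $\bigl(1 + \tfrac1{n-1}\bigr)^{n-t_1} \to e^{\,1 - 1/(\gamma-1)}$ as $n \to \infty$, this becomes $\lambda = \frac{\gamma}{\gamma-1} e^{\,1-1/(\gamma-1)}(\lambda - 1)$, which rearranges to $\lambda = 1 + \frac{\gamma-1}{1 + \gamma(e^{\,1-1/(\gamma-1)} - 1)}$.

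The last step, and the main obstacle, is certifying that no feasible distribution beats $\lambda$. By the minimax theorem for the bilinear game between Alice's purchase distribution and the adversary's stopping day, it suffices to produce weights $(w_x)_{x=t_1}^n \ge 0$ summing to $1$ with $\sum_x w_x\, \alpha_{e_t}(x) = \lambda$ for every pure strategy $e_t$ (``buy on day $t$''), $t \in \{t_1,\dots,n\}$; then $\sum_x w_x \alpha_{f'}(x) = \lambda$ for every feasible $f'$ by linearity, so $\max_x \alpha_{f'}(x) \ge \lambda$, matching our construction. This dual system is square --- one equation per $t$, plus normalization, against the unknowns $w_x$ and $\lambda$ --- and is solved by a recurrence dual to the one above; the real work is checking nonnegativity of the $w_x$. (Alternatively one can bypass the dual by a direct induction on $\Phi(x) := x\,\alpha_{f'}(x)$, using the telescoping identity $\Phi(x) - \Phi(x-1) = (n-1)f'_x + 1 - F'(x-1)$ together with $\alpha_{f'}(t_1) \le \lambda \Rightarrow f'_{t_1} \le \frac{\lambda-1}{\gamma-1}$ to force any feasible $f'$ with $\max_x \alpha_{f'}(x) \le \lambda$ to coincide with the equalizer.) Everything in the first two steps is a routine, if slightly fiddly, computation; the discrete/divisibility bookkeeping and the $n \to \infty$ limit are the only other points requiring care, and are handled by the more precise statements in Appendix~\ref{sec:pure}.
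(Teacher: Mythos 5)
Your constructive half is correct and lands exactly on the paper's answer. The forced support $\{t_1,\dots,n\}$ with $t_1=\frac{n-1}{\gamma-1}$, the boundary value $f_{t_1}=\frac{\lambda-1}{\gamma-1}$ (which, note, silently corrects the typo ``$\frac{\lambda-1}{\gamma}-1$'' in the statement), the recurrence $F(x)=\bigl(1+\tfrac{1}{n-1}\bigr)F(x-1)+\tfrac{\lambda-1}{n-1}$, and the normalization/limit computation all check out. The paper does the same computation but packages the ``equalizing forces an exponential'' step as Lemma~\ref{lem:discrete-exponential} rather than re-deriving it via the substitution $u(x)=F(x)+\lambda-1$; these are interchangeable. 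The real divergence is in how optimality is certified. The paper gets it essentially for free: Theorem~\ref{thm:opt-tight-general} (proved earlier by an exchange argument) says that at every $x\ge t_1$, where the bad interval is empty, the competitiveness constraint must be tight in the optimum, so the optimal $f^*$ \emph{is} the equalizer, and uniqueness comes from Theorem~\ref{thm:alg-optimal}. You instead propose to certify the lower bound from scratch, either by exhibiting an equalizing adversary distribution $(w_x)$ and invoking minimax, or by a direct induction forcing any $\lambda$-feasible $f'$ to coincide with the equalizer.

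That last step is where the gap sits: neither route is executed. For the minimax route, the nonnegativity of the $w_x$ is not a formality you can wave at --- it is the entire content of the lower bound, and the restriction of the adversary to $x\ge t_1$ (since $\alpha_f(x)=1$ for $x<t_1$, the adversary gains nothing there, but you must still verify the equalizing system over $\{t_1,\dots,n\}$ has a nonnegative solution). For the direct route, note that a naive pointwise induction ``$f'_t\le f_t$ for all $t$'' does not close on its own: if $f'_t<f_t$ strictly at some early $t$, the competitiveness constraint at later $x$ permits $f'_x>f_x$, so you cannot conclude $F'(n)<1$ without an exchange step (move mass earlier to restore tightness without hurting feasibility) --- which is precisely the argument the paper runs in Theorem~\ref{thm:opt-tight-general} and Lemma~\ref{lem:lp_lambda_prime_tightness}. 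So the plan is sound and the missing piece is known to be fillable, but as written the proposal proves only that the displayed $f$ is \emph{a} feasible equalizer with the stated limiting ratio, not that it is optimal.
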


Note that when $\gamma = 2$ we get that $\lim_{n \rightarrow \infty} \lambda = 2$, i.e., we recover the classical deterministic bound, and when $\gamma \rightarrow \infty$ we get that $\lim_{n \rightarrow \infty} \lambda = 1 + \frac{1}{e-1} = \frac{e}{e-1}$, i.e., we recover the classical randomized bound.

\subsection{Related Work} \label{sec:related}

\paragraph{Ski-Rental and its Variants:} The classical ski-rental problem captures the fundamental ``rent or buy'' dilemma that exists at the heart of many online problems.  The deterministic 2-competitive break-even strategy was first analyzed by Karlin et al.~\cite{KarlinMRS88} as a special case of Snoopy Caching and later the optimal $e/(e-1)$-competitive randomized strategy was described in \cite{KarlinMMO94}. We note that the online primal-dual approach was used to give an optimal $e/(e-1)$-competitive algorithm \cite{buchbinder2009design}. In separate work, Karlin et al.~
 \cite{KarlinKR03} gave optimal $e/(e-1)$-competitive algorithms for dynamic TCP acknowledgment and the Bahncard problem by exploiting a connection to the classic ski-rental problem. Here, we expand the classical ski-rental problem to include the tail of the competitive ratio as a new metric in addition to the expected competitive ratio.  Even with this small change, we find that the optimal algorithm exhibits interesting behaviour that is not observed in the standard setting.

There are several lines of work concerned with generalizing and analyzing new variants of the classical ski-rental problem.  Motivated by applications in cloud cost optimization, Khanafer et al.~\cite{KhanaferMPK2013} considered a variant of ski-rental where the number of days is randomly chosen from a distribution with known first or second moments (but otherwise unspecified). Wang et al.~\cite{WuBY21} consider a variant of ski-rental with multiple commodities that can be rented, purchased individually, or purchased as a group.  Recently there has been significant interest in incorporating predictions derived from machine learning into online algorithms ~\cite{MVSurvey} which has resulted in a sequence of work applying this to the ski-rental problem \cite{PurohitSK18,Banerjee20,WangLW20,WeiZ20,AntoniadisCEPS21,Zeynali0HW21,ShahR21}.

\paragraph{Online Algorithms Beyond Expectation:} It is standard to analyze the expected performance of algorithms in several contexts, such as minimizing the competitive ratio of an online algorithm or the regret of a policy in a multi-arm bandit setting.  As we point out for the ski-rental problem, there may be cases where tailoring an algorithm to the mean may not be sufficient and other metrics such as the tail may be important.  As another example, in multi-arm bandits achieving low regret on average necessitates a certain amount of exploration which can increase the variability in the attained reward.  Wu et al.~\cite{WuSLS16} consider the conservative bandits setting in which the goal is to minimize the cumulative regret of the policy while constraining the total reward earned at each time to be above some baseline level.  Another example is the survival bandit problem~\cite{PerottoBSV19, riou2023survival} in which rewards can be both positive and negative and the objective is to maximize the total reward while keeping the ``risk of ruin''---the probability of the current total reward going below a fixed threshold---small.

\section{Preliminaries}
\label{sec:prelim}

In the (discrete version of) ski rental problem, also a.k.a. rent-or-buy, every morning Alice must decide whether to rent skis for \$1 or buy them for \$$n$, in which case she never needs to rent them again. The number of days (times), denoted as $x$, she will ski is unknown a priori, and it is revealed to Alice only at the end of day $x$. 

Although Alice is allowed to rent beyond day $n$, a moment's thought shows that she wouldn't benefit it. Thus, without loss of generality, we can describe any algorithm by its purchase distribution $f = \{f_t\}_{t \in [n]}$. 
  For $x,t \in [n]$, let $\alpha(t,x)$ denote the competitive ratio of the algorithm when the adversary chooses for $x$ to be the last skiing day and Alice buys skis on the morning of day $t$.  We have that, 
\[
\alpha(t,x) = \begin{cases}
    1 & \text{if } t > x \\
    \frac{n+t-1}{x} & \text{if } t \leq x
\end{cases}
\]

Given a distribution $f$ over $[n]$, let $\alpha_f(x)$ be the expectation of $\alpha(t,x)$ where $t$ is drawn from $f$.  So we have that
\begin{align}
\alpha_f(x) &= \sum_{t \in [n]} \alpha(t,x) f_t = \sum_{t \leq x} \frac{n+t-1}{x} f_t + \sum_{t > x} f_t 
= \sum_{t \leq x} \frac{n+t-1}{x} f_t + 1 - \sum_{t \leq x} f_t \label{eq:alpha}
\end{align}

The following observations are immediate from the definition.

\begin{observation}
    \label{obs:bad-interval}
    Given a purchase distribution $f$, 
    $\alpha_f(t)$ only depends on $f_1, f_2, \ldots, f_t$. If we move the probability mass from a later time to an earlier time, $\alpha_f(n)$ decreases.     
\end{observation}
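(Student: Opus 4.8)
The plan is to read both assertions straight off the closed form for $\alpha_f$ established in \eqref{eq:alpha}. First rewrite that identity by pulling out the constant term, so that the ``excess over $1$'' is displayed explicitly:
\[
\alpha_f(x) \;=\; 1 \;+\; \sum_{t \le x} \left( \frac{n+t-1}{x} - 1 \right) f_t \;=\; 1 \;+\; \sum_{t \le x} \frac{\,n+t-1-x\,}{x}\, f_t .
\]
Every summand on the right involves only $f_t$ with $t \le x$, so the first part of the observation is immediate: $\alpha_f(x)$ is a function of $f_1,\dots,f_x$ alone, and in particular $\alpha_f(t)$ depends only on $f_1,\dots,f_t$.

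For the monotonicity claim, specialize to $x = n$. The coefficient of $f_t$ becomes $\frac{n+t-1-n}{n} = \frac{t-1}{n}$, giving
\[
\alpha_f(n) \;=\; 1 \;+\; \frac{1}{n}\sum_{t=1}^{n} (t-1)\, f_t .
\]
The weight $\frac{t-1}{n}$ is (strictly) increasing in $t$ over $[n]$. Hence, if we modify $f$ into another distribution $f'$ by transferring a positive amount of probability mass from a day $t'$ to an earlier day $t < t'$, we replace a contribution with coefficient $t'-1$ by one with the strictly smaller coefficient $t-1$ while leaving all other weights untouched; therefore $\sum_t (t-1) f_t$ strictly decreases, and so does $\alpha_f(n)$. (Here ``decreases'' should be read as strictly decreases whenever the transferred mass is positive; nothing changes if the transfer is trivial.)

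There is no genuine obstacle — the argument is a one‑line computation plus a monotone‑coefficient observation. The only point worth flagging is \emph{why} the statement is phrased about $\alpha_f(n)$ rather than about $\alpha_f(x)$ for general $x$: for smaller $x$ the sum only ranges over $t \le x$, so moving mass from some $t' > x$ down to a day $t \le x$ turns a zero contribution into the positive quantity $\frac{n+t-1-x}{x} f \ge 0$ and can \emph{increase} $\alpha_f(x)$; when $x = n$ there simply is no day later than $x$, so all mass being moved already lives in the regime $t \le x$ where the coefficient is monotone, and the clean conclusion holds.
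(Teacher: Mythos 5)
Your proposal is correct and uses the same (essentially definitional) reasoning the paper intends: the observation is stated without proof as ``immediate from the definition,'' and your rewrite $\alpha_f(x) = 1 + \sum_{t \le x} \frac{n+t-1-x}{x} f_t$ together with the monotone coefficients $\frac{t-1}{n}$ at $x=n$ is exactly the one-line computation that justifies it. The remark about why the monotonicity claim is specific to $x=n$ (for smaller $x$, shifting mass from beyond $x$ into $[1,x]$ can increase $\alpha_f(x)$) is a correct and useful clarification.
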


It is well known that the  deterministic algorithm that buys on day $n$, i.e., $f_1 = \ldots = f_{n-1} = 0$ and $f_n = 1$, is $2 - 1/ n$-competitive and it is the best possible competitive ratio achievable for any deterministic algorithms. Also, when $f_t = (1 / n) (1 + 1/n)^{t -1} / ((1 + 1/ n)^n - 1)$ we recover the celebrated randomized algorithm whose competitive ratio is $(1 + 1/ n)^n  / ((1 + 1/ n)^n - 1)$, which tends to $e / (e - 1)$ as $n \rightarrow \infty$. An easy calculation shows $\alpha_f(x)$ is a constant which is exactly the claimed competitive ratio and in fact keeping $\alpha_f(x)$ constant for all $x$ makes the purchase probability grows exponentially in $t$.

\paragraph{Continuous Case.}
The ski rental problem is often discussed in the continuous setting as it exhibits a cleaner representation of the optimum solution; for example, see~\cite{KarlinKR03,deb-ski-lecture}.
More precisely, after scaling, we can assume wlog that  Alice can ski for $x$ amount of time where $x \in (0, 1]$ and the rental price for $dt$ amount of time is $dt$ and the purchase price is $1$. Then, the optimum randomized algorithm has pdf $f(t) = e^t / (e - 1)$ and the competitive ratio is exactly $e / (e - 1)$. In our problem with tail constraints, we will mostly consider the discrete version as it seems to resist a simple closed form for the competitive ratio unlike the problem  without tail constraints. 

\paragraph{A Useful Lemma.}
The following lemma (proof in Appendix~\ref{app:omitted-prelims}) will prove to be useful in a few different places in our analysis.  To interpret it, first note that thanks to Equation~\eqref{eq:alpha}, the competitive ratio at some time $x$ can be calculated using only the probabilities for times $t \leq x$.  So the following lemma lets us say that if we have built ``part'' of the distribution, then we can extend it so that the competitive ratio stays constant by increasing each successive probability by a multiplicative $\left(1+ \frac{1}{n-1}\right)$ factor.  In other words, an appropriate exponential function keeps the competitive ratio constant.  This explains, for example, the appearance of an exponential function in Figure~\ref{fig:constraints}(a) at the same times the competitive ratio is flat in Figure~\ref{fig:constraints}(b).

\begin{lemma} \label{lem:discrete-exponential}
    Let $1 < a < n$, let $f : [a-1] \rightarrow [0,1]$ such that $\sum_{t=1}^{a-1} f_t < 1$, and let $a \leq x' \leq n$.  Then $\alpha_f(x) = \alpha_f(a-1)$ for all $a \leq x \leq x'$ if and only if a) $f_x = \left(1+\frac{1}{n-1}\right)^{x-a} f_a$ for all $a \leq x \leq x'$, where $f_a = \frac{1}{(a-1)(n-1)} \sum_{t=1}^{a-1} (n+t-1)f_t$, and b) $\sum_{t=1}^{x'} f_t \leq 1$. 
\end{lemma}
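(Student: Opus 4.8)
The plan is to reduce the whole statement to a direct identity-chase with the closed form \eqref{eq:alpha}. Write $S(x) = \sum_{t \le x}(n+t-1)f_t$ and $F(x) = \sum_{t\le x} f_t$, so that $\alpha_f(x) = S(x)/x + 1 - F(x)$. First I would compute the one-step difference for $a \le x \le x'$: using $S(x) = S(x-1) + (n+x-1)f_x$ and $F(x) = F(x-1) + f_x$, together with $\tfrac1x - \tfrac1{x-1} = -\tfrac1{x(x-1)}$, a short manipulation gives
\[
\alpha_f(x) - \alpha_f(x-1) = \frac{n-1}{x}\,f_x - \frac{S(x-1)}{x(x-1)}.
\]
Since $\alpha_f$ is constant on $\{a-1,a,\dots,x'\}$ iff all these consecutive differences vanish, this shows $\alpha_f(x)=\alpha_f(a-1)$ for all $a\le x\le x'$ iff the recurrence $f_x = \frac{S(x-1)}{(x-1)(n-1)}$ holds for every $a\le x\le x'$. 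The $x=a$ instance is exactly the claimed formula $f_a = \frac{1}{(a-1)(n-1)}\sum_{t=1}^{a-1}(n+t-1)f_t$ (note $S(a-1)$ depends only on the given prefix $f_1,\dots,f_{a-1}$), and $a\ge 2$ keeps $x-1\ge1$ in all denominators.

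Second, I would solve this recurrence to extract the exponential shape. The key algebraic fact is that whenever $f_x = \frac{S(x-1)}{(x-1)(n-1)}$ holds, one gets $S(x) = S(x-1) + (n+x-1)f_x = \big[(x-1)(n-1) + (n+x-1)\big] f_x = xn\, f_x$; feeding this into the recurrence at step $x+1$ gives $f_{x+1} = \frac{xn f_x}{x(n-1)} = \left(1 + \frac{1}{n-1}\right) f_x$. A one-line induction from the base case then yields $f_x = \left(1 + \frac{1}{n-1}\right)^{x-a} f_a$ for all $a\le x\le x'$, i.e.\ condition (a). For the converse I would run the same identity in reverse: assuming $f$ has the exponential form of (a) with $f_a$ as specified, I induct on $x$ to verify that $S(x-1) = (x-1)(n-1)f_x$, i.e.\ that the recurrence — hence constancy of $\alpha_f$ on $\{a-1,\dots,x'\}$ — holds.

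Finally, condition (b) is just the feasibility clause. Because $\alpha_f$ is only evaluated at times $\le x'$, the content of the lemma is about the partial specification $f_1,\dots,f_{x'}$, and $\sum_{t=1}^{x'} f_t \le 1$ is precisely the requirement that it can be completed to a genuine purchase distribution on $[n]$ (dump any leftover mass on day $n$); it is automatic when $f$ is already a distribution, so it is vacuous for the ``only if'' direction and merely records well-definedness for the ``if'' direction. I do not expect a genuine obstacle here — the lemma is an identity — so the only thing to be careful about is index bookkeeping: that the recurrence and its solution are asserted for exactly $a\le x\le x'$, that the $x=a$ step correctly couples the fixed prefix to the first new value $f_a$, and that $1<a<n$ rules out degenerate denominators.
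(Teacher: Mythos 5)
Your proposal is correct, and it shares the paper's starting point --- the one-step difference
$\alpha_f(x)-\alpha_f(x-1)=\frac{n-1}{x}f_x-\frac{S(x-1)}{x(x-1)}$ --- but organizes the rest differently. The paper proves the ``if'' direction by substituting the explicit exponential formula into this difference and grinding through a geometric-series computation in the inductive step, and then gets ``only if'' from a separate first-discrepancy uniqueness argument. You instead record once and for all that constancy of $\alpha_f$ on $\{a-1,\dots,x'\}$ is \emph{equivalent} to the recurrence $f_x=\frac{S(x-1)}{(x-1)(n-1)}$ (valid because the difference is linear in $f_x$ with positive coefficient, so each step pins $f_x$ down uniquely --- this absorbs the ``only if'' direction for free), and then convert the recurrence into the closed form via the identity $S(x)=S(x-1)+(n+x-1)f_x=\bigl[(x-1)(n-1)+(n+x-1)\bigr]f_x=xn\,f_x$, which immediately yields $f_{x+1}=\bigl(1+\frac{1}{n-1}\bigr)f_x$. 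This derives the exponential shape rather than verifying it, eliminates the messy summation, and unifies both directions of the equivalence; the paper's version is more computational but requires no insight beyond patience. Your handling of condition (b) as a well-definedness/extendability clause is also consistent with the lemma's intent (the paper's own proof silently ignores it). No gaps.
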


\section{Characterizing the Optimal Solution} \label{sec:general-tail}
In this section, we consider the general case where there can be multiple tail constraints, each with an arbitrary threshold at least $2$.  Our goal is to prove a characterization of the optimal purchase distribution that, while not necessarily allowing us to write  it explicitly, will enable us to reason about its properties.  Moreover, a side effect of our characterization will be an efficient algorithm to actually construct the optimal purchase distribution.  

Suppose that we are given a collection $\{(\gamma_i, \delta_i)\}_{i \in [k]}$ where $\gamma_i \geq 2-1/n$ and $0 \leq \delta_i < 1$ for all $i \in [k]$\footnote{Under this assumption we have that $f_n$ does not affect any tail constraints, see Definition~\ref{def:bad_interval_def} and Observation~\ref{obs:no-bad-n}.  We focus on the case that $\gamma_i \geq 2-1/n$ since 
 $2 - 1/n$ is the competitive ratio that is achieved by a deterministic algorithm, but note that smaller values of $\gamma_i$ can be handled so long as $\delta_i$ is large enough for there to exist a feasible solution.}  The goal is to find the randomized algorithm with minimum expected competitive ratio (worst case over times the adversary might choose) subject to the requirements that for all $i \in [k]$, the probability that our algorithm has a competitive ratio larger than $\gamma_i$ is at most $\delta_i$, where again this is the worst-case over times the adversary might choose. The optimal solution's properties we will characterize for general tail constraints will be useful for discovering some surprising structural properties of the optimal solution later for
a single tail constraint.

\begin{definition} \label{def:bad_interval_def}
The \emph{bad interval} for threshold $\gamma$ and time $x$, which we denote by $I_{\gamma}(x)$, consists of all $t \in [x]$ such that if we buy at time $t$ and the adversary chooses time $x$, then our competitive ratio is larger than $\gamma$.  By definition, 
\[
I_{\gamma}(x) = \{t \in [n] : \alpha(t,x) > \gamma\} = \left\{ t \in [x] : \frac{n+t-1}{x} > \gamma \right\} = \left\{ t : \max(0, \gamma x - n +1) < t \leq x\right\}
\]
\end{definition}

\begin{observation}
    \label{obs:no-bad-n}
    For all $\gamma \geq 2 - 1/ n$ and all $x$, it is the case that $n \not\in I_{\gamma}(x)$.
\end{observation}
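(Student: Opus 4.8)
The plan is a direct case analysis on whether $n$ lies in the domain $[x]$ and, if so, a one-line comparison using the definition of $\alpha(t,x)$. Fix $\gamma \geq 2 - 1/n$ and $x \in [n]$. By Definition~\ref{def:bad_interval_def}, $n \in I_\gamma(x)$ would require both $n \in [x]$ (i.e. $x \geq n$) and $\alpha(n,x) > \gamma$. Since $x \leq n$ always, the condition $x \geq n$ forces $x = n$, so it suffices to rule out the single case $x = n$; for all $x < n$ we have $n \notin [x]$ and hence $n \notin I_\gamma(x)$ immediately (alternatively, $\alpha(n,x) = 1 \leq \gamma$ there).

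For the remaining case $x = n$, I would plug into the formula $\alpha(t,x) = \frac{n+t-1}{x}$ with $t = x = n$, obtaining $\alpha(n,n) = \frac{2n-1}{n} = 2 - \frac{1}{n}$. Since by hypothesis $\gamma \geq 2 - \frac{1}{n}$, we get $\alpha(n,n) \leq \gamma$, so $n$ does not satisfy the strict inequality defining $I_\gamma(n)$, i.e. $n \notin I_\gamma(n)$. Combining the two cases gives $n \notin I_\gamma(x)$ for every $x$, as claimed. One can equivalently derive this from the closed form $I_\gamma(x) = \{t : \max(0,\gamma x - n + 1) < t \leq x\}$: membership of $n$ needs $n \leq x \leq n$, hence $x = n$, and then also $\gamma n - n + 1 < n$, i.e. $\gamma < 2 - 1/n$, contradicting the hypothesis.

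There is no real obstacle here; the only thing to be careful about is the boundary behavior — the inequality in the definition of the bad interval is strict ($\alpha(t,x) > \gamma$), and the deterministic value $2 - 1/n$ is exactly the threshold, so the hypothesis $\gamma \geq 2 - 1/n$ (rather than a strict inequality) is precisely what makes the claim hold with equality permitted. I would state the argument in one short paragraph in the final writeup rather than formally separating the cases.
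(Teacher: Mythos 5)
Your proof is correct and is exactly the calculation the paper has in mind: the paper states this as an observation immediate from Definition~\ref{def:bad_interval_def} without writing it out, and your reduction to the case $x=n$ with $\alpha(n,n)=2-1/n\leq\gamma$ is the intended (and only natural) argument.
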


We say that a purchase distribution $f$ is feasible if it satisfies all tail constraints. The following lemma
shows the solution is feasible if and only if it satisfies a collection of packing constraints pertaining to the bad intervals.
The proof, as well as all the other missing proofs, are deferred to Appendix~\ref{app:omitted-general-tail}.

\begin{lemma} \label{lem:bad-interval}
A distribution $f$ is feasible if and only if $\sum_{t \in I_{\gamma_i}(x)} f_t \leq \delta_i$ for all $i \in [k]$ and for all $x \in [n]$.
\end{lemma}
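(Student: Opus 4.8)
The plan is to prove the two directions of the equivalence separately, using the definition of the bad interval $I_{\gamma}(x)$ and the structure of the competitive ratio random variable. Recall that for a fixed purchase distribution $f$ and a fixed adversary choice $x$, the competitive ratio is the random variable $\alpha(t,x)$ with $t \sim f$. The $i$-th tail constraint, applied against this adversary choice $x$, says that $\Pr_{t \sim f}[\alpha(t,x) > \gamma_i] \leq \delta_i$. By Definition~\ref{def:bad_interval_def}, the event $\{\alpha(t,x) > \gamma_i\}$ is exactly $\{t \in I_{\gamma_i}(x)\}$ (this holds because for $t > x$ we have $\alpha(t,x) = 1 \leq \gamma_i$, and for $t \leq x$ we have $\alpha(t,x) = (n+t-1)/x$, so the condition $\alpha(t,x) > \gamma_i$ is precisely $t \in I_{\gamma_i}(x)$). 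Hence $\Pr_{t \sim f}[\alpha(t,x) > \gamma_i] = \sum_{t \in I_{\gamma_i}(x)} f_t$, and the tail constraint against adversary choice $x$ is literally the inequality $\sum_{t \in I_{\gamma_i}(x)} f_t \leq \delta_i$.

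The only remaining subtlety is the definition of ``feasible.'' A distribution $f$ is feasible iff it satisfies all $k$ tail constraints, and each tail constraint is a worst-case statement over the adversary's choice of $x$: the probability that the algorithm's competitive ratio exceeds $\gamma_i$ should be at most $\delta_i$ no matter which last skiing day $x \in [n]$ the adversary picks. So the $i$-th tail constraint holds iff $\max_{x \in [n]} \Pr_{t \sim f}[\alpha(t,x) > \gamma_i] \leq \delta_i$, which by the identity above is iff $\sum_{t \in I_{\gamma_i}(x)} f_t \leq \delta_i$ for all $x \in [n]$. Taking the conjunction over $i \in [k]$ gives exactly the claimed characterization. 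Thus the forward direction (feasible $\Rightarrow$ packing constraints) and the reverse direction (packing constraints $\Rightarrow$ feasible) both follow immediately once the event identity is established; there is essentially nothing to do beyond unwinding definitions and swapping the order of ``for all $i$'' and ``for all $x$'' quantifiers.

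There is no real obstacle here — the lemma is a definitional reformulation — but if I had to name the one place that needs care, it is confirming the event identity $\{t : \alpha(t,x) > \gamma_i\} = I_{\gamma_i}(x)$ at the boundary, i.e.\ making sure that the case $t > x$ (where the ratio is $1$) is correctly excluded and that the algebra $\frac{n+t-1}{x} > \gamma_i \iff t > \gamma_i x - n + 1$ is applied only over $t \le x$, which is exactly what Definition~\ref{def:bad_interval_def} already records. One should also note (via Observation~\ref{obs:no-bad-n}) that $n \notin I_{\gamma_i}(x)$ for any $x$ when $\gamma_i \geq 2 - 1/n$, so $f_n$ plays no role in any packing constraint; this is consistent with, though not strictly needed for, the statement. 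I would write the proof as: first establish the event identity and hence $\Pr_{t\sim f}[\alpha(t,x) > \gamma_i] = \sum_{t \in I_{\gamma_i}(x)} f_t$; then observe that feasibility of $f$ means precisely that for every $i$ and every adversary choice $x$ this probability is at most $\delta_i$; conclude.
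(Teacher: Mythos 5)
Your proposal is correct and follows the same route as the paper's proof: both directions reduce to the single identity $\Pr_{t\sim f}[\alpha(t,x) > \gamma_i] = \sum_{t \in I_{\gamma_i}(x)} f_t$, which is immediate from Definition~\ref{def:bad_interval_def}. The extra care you take at the boundary (excluding $t > x$) is fine but the paper treats it as definitional; there is nothing to add or fix.
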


\subsection{Main Characterization Theorem} \label{sec:characterization}
Let $f^*$ be the optimal solution.  By Lemma~\ref{lem:bad-interval}, this means that $f^*$ is the distribution $f$ minimizing $\max_{x \in [n]} \alpha_{f}(x)$ subject to $\sum_{t \in I_{\gamma_i}(x)} f_t \leq \delta_i$ for all $i \in [k]$ and for all $x \in [n]$.  Let $\opt = \max_{x \in [n]} \alpha_{f^*}(x)$.

We now prove some useful properties about $f^*$ and $\opt$.  First we show that $\opt < 2-1/n$ whenever the set of tail bounds is $\{(\gamma_i, \delta_i)\}_{i\in [k]}$ with $\gamma_i \geq 2 - 1/ n$. Since the $2 - 1/ n$-competitive deterministic algorithm satisfies all tails constraints, it must be the case that $\opt \leq 2 - 1/ n$. Intuitively, randomization should yield a better competitive ratio and the following lemma formally proves it. 

\begin{lemma} \label{lem:better_than_deterministic_soln}
    If $\gamma_i \geq 2-1/n$ for all $i\in[k]$ then $\OPT < 2-1/n$.
\end{lemma}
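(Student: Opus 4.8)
The plan is to exhibit a single feasible randomized distribution $g$ whose worst-case expected competitive ratio is strictly below $2-1/n$; since $\OPT$ is, by definition, the minimum of $\max_{x\in[n]}\alpha_f(x)$ over all feasible $f$, this immediately gives $\OPT<2-1/n$. Note first that the deterministic solution ($f_n=1$, $f_t=0$ for $t<n$) is feasible: by Observation~\ref{obs:no-bad-n} we have $n\notin I_{\gamma_i}(x)$ for every $i\in[k]$ and every $x\in[n]$, so $\sum_{t\in I_{\gamma_i}(x)}f_t=0\le\delta_i$, and Lemma~\ref{lem:bad-interval} yields feasibility. Hence $\OPT\le 2-1/n$ already, and the whole content of the lemma is the \emph{strict} inequality.

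The distribution I would use is the perturbation of the deterministic solution that shifts a small mass $\varepsilon>0$ from day $n$ to day $n-1$: set $g_{n-1}=\varepsilon$, $g_n=1-\varepsilon$, and $g_t=0$ otherwise. Using Equation~\eqref{eq:alpha} one computes $\alpha_g(x)=1$ for $x\le n-2$ (both support points exceed $x$); $\alpha_g(n-1)=\varepsilon\cdot\frac{2n-2}{n-1}+(1-\varepsilon)=1+\varepsilon$ since $\alpha(n-1,n-1)=2$; and $\alpha_g(n)=\varepsilon\cdot\frac{2n-2}{n}+(1-\varepsilon)\cdot\frac{2n-1}{n}=2-\frac1n-\frac{\varepsilon}{n}$, which is strictly below $2-1/n$. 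Consequently $\max_{x\in[n]}\alpha_g(x)=\max\!\left(1+\varepsilon,\ 2-\tfrac1n-\tfrac{\varepsilon}{n}\right)$, and this is strictly less than $2-1/n$ as soon as $\varepsilon<1-1/n$ (this is the only place $n\ge 2$ is used; for $n=1$ the statement is vacuous). So for any positive $\varepsilon$ below this threshold, $g$ beats the deterministic bound in the worst case.

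The step that requires care — and the main obstacle — is verifying that $g$ is still feasible, since we have moved probability mass onto an earlier day. The support of $g$ is only $\{n-1,n\}$; day $n$ lies in no bad interval by Observation~\ref{obs:no-bad-n}, and day $n-1$ can lie only in $I_{\gamma_i}(n-1)$ or $I_{\gamma_i}(n)$, where $\alpha(n-1,n-1)=2$ and $\alpha(n-1,n)=2-2/n$. Thus if every $\gamma_i\ge 2$ then day $n-1$ belongs to no bad interval either, and $g$ is feasible for every $\varepsilon$; more generally it suffices to choose $\varepsilon\le\min_i\delta_i$ so that even a $\gamma_i<2$ constraint is respected, the only subtle point being that there is enough slack near day $n-1$ to absorb the perturbation. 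Fixing such an $\varepsilon$ and applying Lemma~\ref{lem:bad-interval}, $g$ is a feasible distribution with $\max_{x\in[n]}\alpha_g(x)<2-1/n$, whence $\OPT\le\max_{x\in[n]}\alpha_g(x)<2-1/n$, as claimed.
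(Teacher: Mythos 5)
Your proposal is correct and follows essentially the same route as the paper's own proof: perturb the deterministic solution by shifting $\epsilon$ mass from day $n$ to day $n-1$, verify feasibility via Observation~\ref{obs:no-bad-n} and Lemma~\ref{lem:bad-interval} by taking $\epsilon\le\min_i\delta_i$, and check that all values $\alpha(x)$ drop strictly below $2-1/n$. Your computation $\alpha_g(n)=2-\tfrac{1}{n}-\tfrac{\epsilon}{n}$ is the correct one (the paper's displayed value $2+\tfrac{1-\epsilon}{n}$ is evidently a sign typo for $2-\tfrac{1+\epsilon}{n}$), and your explicit condition $\epsilon<1-1/n$ for the $x=n-1$ case is a point the paper glosses over.
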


By definition of $\opt$, it is immediate that $\alpha_{f^*}(x) \leq \opt$ for any adversarial choice of $x$. The following lemma shows that the  competitiveness function $\alpha$ is maximized on the last day $n$ for the optimum distribution $f^*$. The proof easily follows from Observations~\ref{obs:bad-interval} and \ref{obs:no-bad-n}: If the lemma were false, 
 we can move the probability mass from an earlier time to time $n$, which keeps the solution feasible while increasing $\alpha_{f^*}(n)$.

\begin{lemma} \label{lem:CR-n}
    Let $f^*$ be an optimal solution.  Then $\alpha_{f^*}(n) = \OPT$.
\end{lemma}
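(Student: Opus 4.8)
The plan is to prove Lemma~\ref{lem:CR-n} by contradiction: assume $\alpha_{f^*}(n) < \OPT$ and show we can construct a feasible distribution with strictly smaller maximum competitiveness, contradicting optimality of $f^*$. The intuition, as hinted in the excerpt, is that if the competitive ratio at the last day is not the bottleneck, then $f^*$ is ``wasting'' probability mass on early days where it is not needed; shifting a bit of that mass toward day $n$ both decreases $\alpha$ at every earlier day (by Observation~\ref{obs:bad-interval}) and cannot violate any tail constraint (by Observation~\ref{obs:no-bad-n}, since $n \notin I_{\gamma_i}(x)$ for any $x$).

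Here are the steps I would carry out. First, suppose for contradiction that $\alpha_{f^*}(n) = \OPT - 2\eta$ for some $\eta > 0$. Let $x^* \in [n]$ be a day achieving $\alpha_{f^*}(x^*) = \OPT$; since $\alpha_{f^*}(n) < \OPT$ we have $x^* < n$, and in particular there must be some index $s \le x^*$ with $f^*_s > 0$ (otherwise $\alpha_{f^*}(x^*) = 1 < \OPT$ by Lemma~\ref{lem:better_than_deterministic_soln}, using $\OPT < 2 - 1/n$; actually $\OPT \ge$ something $>1$ always, so this is fine). Second, construct $f'$ from $f^*$ by moving a small amount $\epsilon > 0$ of probability mass from day $s$ to day $n$: set $f'_s = f^*_s - \epsilon$, $f'_n = f^*_n + \epsilon$, and $f'_t = f^*_t$ otherwise. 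Third, verify feasibility: for every tail constraint $i$ and every $x$, we have $\sum_{t \in I_{\gamma_i}(x)} f'_t \le \sum_{t \in I_{\gamma_i}(x)} f^*_t \le \delta_i$, because moving mass \emph{into} day $n$ never increases any bad-interval sum (Observation~\ref{obs:no-bad-n} guarantees $n \notin I_{\gamma_i}(x)$), and moving mass \emph{out of} day $s$ only decreases sums. So $f'$ is feasible by Lemma~\ref{lem:bad-interval}.

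Fourth, bound $\max_x \alpha_{f'}(x)$. By Observation~\ref{obs:bad-interval} (moving mass from later to earlier decreases $\alpha_f(n)$, and more generally $\alpha_f(x)$ depends only on $f_1,\dots,f_x$), we get $\alpha_{f'}(x) \le \alpha_{f^*}(x)$ for every $x < n$; in fact we can quantify that $\alpha_{f'}(x) \le \alpha_{f^*}(x) \le \OPT$ for $x < n$, and moreover for $x < s$ nothing changes at all. For $x = n$, a direct calculation from Equation~\eqref{eq:alpha} shows $\alpha_{f'}(n) = \alpha_{f^*}(n) + \epsilon\cdot\frac{(n+n-1) - (n+s-1)}{n} = \alpha_{f^*}(n) + \epsilon\cdot\frac{n-s}{n} \le \alpha_{f^*}(n) + \epsilon < \OPT - 2\eta + \epsilon$. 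Choosing $\epsilon < \min(\eta, f^*_s)$ makes $\alpha_{f'}(n) < \OPT - \eta < \OPT$ while keeping $f'$ a valid distribution. Combining, $\max_{x\in[n]} \alpha_{f'}(x) < \OPT$, contradicting the optimality of $f^*$. Hence $\alpha_{f^*}(n) = \OPT$.

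The main obstacle, and the only place requiring care, is the fourth step: we must ensure that pushing mass onto day $n$ does not raise $\alpha$ at some \emph{earlier} day above $\OPT$. This is exactly handled by the monotonicity in Observation~\ref{obs:bad-interval} — since $\alpha_f(x)$ depends only on $\{f_t\}_{t\le x}$, and for $x < n$ we only ever moved mass from $s$ to $n > x$, the sum $\sum_{t \le x} \alpha(t,x) f_t$ can only have decreased (moving mass from a position $t=s \le x$ with coefficient $\frac{n+s-1}{x} > 1$ to a position $t = n > x$ with coefficient $1$ strictly decreases the contribution). A secondary subtlety is confirming such an $s$ with $f^*_s > 0$ and $s \le x^*$ exists; this follows because $\alpha_{f^*}(x^*) = \OPT > 1$ forces $\sum_{t \le x^*}\left(\frac{n+t-1}{x^*} - 1\right) f^*_t > 0$, so some $f^*_t > 0$ with $t \le x^*$. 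Everything else is routine arithmetic with Equation~\eqref{eq:alpha}.
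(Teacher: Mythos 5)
Your proposal follows the same strategy as the paper's proof of Lemma~\ref{lem:CR-n}: shift a small amount of mass from an early day carrying positive probability onto day $n$, invoke Observation~\ref{obs:no-bad-n} to preserve feasibility of all tail constraints, and argue that the worst-case expected competitive ratio strictly drops, contradicting the optimality of $f^*$. Your feasibility check and your computation of $\alpha_{f'}(n) = \alpha_{f^*}(n) + \epsilon\frac{n-s}{n}$ are both correct.

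There is, however, one genuine (though easily repaired) gap in your fourth step. You take $s$ to be \emph{some} index $s \le x^*$ with $f^*_s > 0$, and for days $x < s$ you correctly observe that nothing changes, which only yields $\alpha_{f'}(x) = \alpha_{f^*}(x) \le \OPT$. Your conclusion ``$\max_x \alpha_{f'}(x) < \OPT$'' therefore does not follow as written: if some day $x' < s$ already satisfies $\alpha_{f^*}(x') = \OPT$, that day is untouched by the shift, the maximum stays at $\OPT$, and no contradiction is obtained. The fix is exactly the choice the paper makes: take $s$ to be the \emph{smallest} index with $f^*_s > 0$. Then $\alpha_{f^*}(x) = 1 < \OPT$ for every $x < s$ (no mass lies at or before such $x$, and $\OPT > 1$ since any distribution with $\alpha_f(n)=1$ concentrates all mass on day $1$ and hence has $\alpha_f(1)=n$), the shift strictly decreases $\alpha$ at every $x$ with $s \le x < n$, and $\alpha_{f'}(n)$ stays below $\OPT$ for $\epsilon < \min(\eta, f^*_s)$. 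With that one modification your argument coincides with the paper's.
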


Now we can prove the main structural theorem that shows that for any time $x$ the optimum solution must have the maximum competitiveness function value or the bad time interval packing constraint at the time must be tight. We show the theorem by showing that if the theorem were false at time $t_1$, then we can move a probability mass to $t_1$ from a later time, thereby improving the competitive ratio. 

\begin{theorem} \label{thm:opt-tight-general}
    Let $f^*$ be an optimal solution.  Then for every $x \in [n]$, at least one of the following is true:    
    \begin{itemize}
        \item (competitiveness constraint) $\alpha_{f^*}(x) = \OPT$, or
        \item (tail constraint) $\sum_{t \in I_{\gamma_i}(x)} f^*_t = \delta_i$ for some $i \in [k]$
    \end{itemize}
\end{theorem}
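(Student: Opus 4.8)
The plan is a contradiction argument via a local exchange, drawing the contradiction from Lemma~\ref{lem:CR-n}. Suppose $f^*$ is optimal but the conclusion fails at some time $x_0$; that is, $\alpha_{f^*}(x_0) < \OPT$ and $\sum_{t \in I_{\gamma_i}(x_0)} f^*_t < \delta_i$ for every $i \in [k]$. Since Lemma~\ref{lem:CR-n} gives $\alpha_{f^*}(n) = \OPT$, we must have $x_0 \le n-1$; write $t_1 := x_0$. I will construct a new feasible distribution $f$ with $\max_{x}\alpha_f(x) \le \OPT$ but $\alpha_f(n) < \OPT$. Such an $f$ must be optimal (by Lemma~\ref{lem:bad-interval}, $\OPT$ is exactly the minimum of $\max_x \alpha_f(x)$ over feasible $f$), so Lemma~\ref{lem:CR-n} applied to $f$ would force $\alpha_f(n) = \OPT$, a contradiction.

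First I would locate the time $b$ from which to borrow mass. If $f^*_t = 0$ for all $t > t_1$, then all mass lies in $[1,t_1]$ and \eqref{eq:alpha} yields $\alpha_{f^*}(t_1) = \tfrac{n}{t_1}\,\alpha_{f^*}(n) > \alpha_{f^*}(n) = \OPT$, contradicting $\OPT = \max_x \alpha_{f^*}(x)$; hence some later time carries mass, and I take $b$ to be the smallest such time. Because $(t_1,b-1]$ then carries no mass, \eqref{eq:alpha} shows that for every $x \in [t_1,b-1]$ we have $\alpha_{f^*}(x) = \tfrac{1}{x}\sum_{t\le t_1}(n+t-1)f^*_t + 1 - \sum_{t\le t_1}f^*_t$, which is non-increasing in $x$, so $\alpha_{f^*}(x) \le \alpha_{f^*}(t_1) < \OPT$ on that whole range. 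Now define $f$ by $f_{t_1} = f^*_{t_1} + \epsilon$, $f_b = f^*_b - \epsilon$, and $f_t = f^*_t$ otherwise, for a small $\epsilon > 0$ to be fixed below.

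Next I would verify the three required properties for all sufficiently small $\epsilon$. The identity $\alpha_f(n) - \alpha_{f^*}(n) = \tfrac{\epsilon}{n}(t_1 - b) < 0$ is immediate from \eqref{eq:alpha}, giving $\alpha_f(n) < \OPT$. For $\alpha_f(x) \le \OPT$ at all $x$: nothing changes for $x < t_1$; for $x \in [t_1,b-1]$ the change is $\epsilon\cdot\tfrac{n+t_1-1-x}{x} > 0$, but there $\alpha_{f^*}(x)$ was strictly below $\OPT$, so small $\epsilon$ keeps it $\le \OPT$; and for $x \ge b$ the competitiveness only decreases (the change is $\tfrac{\epsilon}{x}(t_1-b)<0$). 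Finally, feasibility: by Lemma~\ref{lem:bad-interval} it suffices that $\sum_{t\in I_{\gamma_i}(x)}f_t \le \delta_i$ for all $i,x$, and relative to $f^*$ this sum can only increase at those $x$ with $t_1 \in I_{\gamma_i}(x)$ and $b \notin I_{\gamma_i}(x)$. If $x \ge b$, this case is vacuous: $t_1 \in I_{\gamma_i}(x) = (\gamma_i x - n + 1,\,x]$ forces $\gamma_i x - n + 1 < t_1 < b \le x$, hence $b \in I_{\gamma_i}(x)$ too, and the swap is mass-neutral there. If $x \in [t_1,b-1]$ (the only remaining case, since $b > x$ automatically puts $b$ outside $I_{\gamma_i}(x)$), then $I_{\gamma_i}(x)$ carries no $f^*$-mass above $t_1$ by the minimal choice of $b$, while $I_{\gamma_i}(x) \cap [1,t_1] \subseteq I_{\gamma_i}(t_1)$ because $x \ge t_1$; hence $\sum_{t\in I_{\gamma_i}(x)}f^*_t \le \sum_{t\in I_{\gamma_i}(t_1)}f^*_t < \delta_i$, using that the tail condition fails at $t_1$. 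Only finitely many strict inequalities are involved (together with $f^*_b > 0$), so a single small $\epsilon > 0$ makes all of these hold, completing the contradiction.

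I expect the feasibility verification to be the main obstacle: shifting mass \emph{into} $t_1$ could, in principle, overload the tail constraint of some \emph{other} time $x$, and the argument only closes because (a) choosing $b$ minimal leaves $(t_1,b-1]$ free of probability mass, so every bad interval ending in that range has $f^*$-weight bounded by the already-slack bad interval at $t_1$, and (b) for $x \ge b$ any bad interval containing $t_1$ must also contain $b$, making the exchange mass-neutral there. The other two properties are routine computations with \eqref{eq:alpha}.
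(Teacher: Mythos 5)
Your proof is correct and follows essentially the same strategy as the paper's: shift a small mass $\epsilon$ from a later time to the violating time $t_1$, verify feasibility and that all competitive ratios stay at most $\OPT$, and derive a contradiction with Lemma~\ref{lem:CR-n} via $\alpha_f(n) < \OPT$. The only difference is bookkeeping---you choose the donor $b$ as the first positive-mass time after $t_1$ and verify slackness on $[t_1, b-1]$ directly via the containment $I_{\gamma_i}(x)\cap[1,t_1]\subseteq I_{\gamma_i}(t_1)$, whereas the paper takes $t_1$ minimal and the donor $t_2$ to be the first time a constraint becomes tight---but both verifications go through.
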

\begin{proof}
    Suppose for contradiction that this is false.  Let $t_1$ be the smallest value for which both conditions are false, and observe that Lemma~\ref{lem:CR-n} implies that $t_1 < n$.  Let $t_2 > t_1$ be the smallest value larger than $t_1$ such that at least one of the two conditions holds, and again observe that Lemma~\ref{lem:CR-n} implies that such a $t_2$ must exist.  It it also easy to see that $f^*_{t_2} > 0$, since otherwise at least one of the conditions must hold at $t_2-1$. 
    More formally, suppose $f^*_{t_2} = 0$ for the sake of contradiction. By definition of $t_2$, we have $\alpha_{f^*}(t_2 -1) < \opt$. It is easy to verify 
    $\alpha_{f^*}(t_2 -1) > \alpha_{f^*}(t_2)$; thus we have $\alpha_{f^*}(t_2) < \opt$. Again by definition of $t_2$, we know $\sum_{t \in I_{\gamma_i}(t_2-1)} f^*_t < \delta_i$ for any $i \in [n]$. Using the facts that  $I_{\gamma_i}(t_2) \subseteq  I_{\gamma_i}(t_2 -1) \cup \{t_2\}$ and $f^*_{t_2} = 0$, we have 
    $\sum_{t \in I_{\gamma_i}(t_2)} f^*_t \leq \sum_{t \in I_{\gamma_i}(t_2-1)} f^*_t < \delta_i$. This implies none of the conditions hold true at time $t_2$, which is a contradiction.     
    
    Therefore, we can define the distribution that we get by shifting some very small $\epsilon > 0$ mass from $t_2$ to $t_1$, i.e., $f_t = f^*_t + \epsilon$  if  $t = t_1$; $f_t = f^*_t - \epsilon$   if  $t = t_2$; \text{ and } $f_t =  f^*_t \text{ otherwise}$.
    Obviously $f$ is still a probability distribution on $[n]$.  Moreover, we claim that it is still a feasible solution if we choose a small enough $\epsilon$.  To see this, let $B = \{1,2, \dots, t_1-1\}$, let $M = \{t_1, t_1 + 1, \dots, t_2-1\}$, and let $A = \{t_2, t_2+1, \dots, n\}$.  For all $x \in B$, we have that $\sum_{t \in I_{\gamma_i}(x)} f_t = \sum_{t \in I_{\gamma_i}(x)} f^*_t \leq \delta_i$ for all $i \in [k]$, as required.  

    For $x \in M$, the mass in $I_{\gamma_i(x)}$ could be larger in $f$ than in $f^*$ (since $t_1$ could be in their bad interval), but by definition of $t_2$ this was strictly less than $\delta$ in $f^*$, so by choosing a small enough $\epsilon$ we can keep it below $\delta$.  Slightly more formally, we have that $\sum_{t \in I_{\gamma_i}(x)} f_t \leq \sum_{t \in I_{\gamma_i}(x)} f^*_t + \epsilon \leq \delta$ for small enough $\epsilon$.  

    For $x \in A$, note that it is impossible for $\{t_1, t_2\} \cap I_{\gamma_i}(x) = \{t_1\}$ for any $i \in [k]$; this is straightforward from the definition of $I_{\gamma_i}(x)$. So there are three cases.
    \begin{enumerate}
        \item If $\{t_1, t_2\} \cap I_{\gamma_i}(x) = \{t_1, t_2\}$, then $\sum_{t \in I_{\gamma_i}(x)} f_t = \sum_{t \in I_{\gamma_i}(x)} f^*_t \leq \delta$ since $f^*$ is feasible.
        \item If $\{t_1, t_2\} \cap I_{\gamma_i}(x) = \{t_2\}$, then $\sum_{t \in I_{\gamma_i}(x)} f_t < \sum_{t \in I_{\gamma_i}(x)} f^*_t \leq \delta$.
        \item If $\{t_1, t_2\} \cap I_{\gamma_i}(x) = \emptyset$, then $\sum_{t \in I_{\gamma_i}(x)} f_t = \sum_{t \in I_{\gamma_i}(x)} f^*_t \leq \delta$.
    \end{enumerate}
    
    Hence $f$ is feasible by Lemma~\ref{lem:bad-interval}.

    Now let's consider the competitive ratios $\alpha_f(x)$.  We break into three cases for $x$.
    \begin{enumerate}
        \item If $x \in B$, then we have that $\alpha_f(x) = \alpha_{f^*}(x) \leq \OPT$ (the final inequality is due to the optimality of $f^*$).
        \item If $x \in M$ then the competitive ratio is worse in $f$ than in $f^*$, i.e., $\alpha_f(x) > \alpha_{f^*}(x)$.  But by the definition of $M$ (and $t_2$) we know that $\alpha_{f^*}(x) < \OPT$, so by choosing a small enough $\epsilon$ we still have that $\alpha_f(x) \leq \OPT$.
        \item If $x \in A$, then it is not hard to see that the competitive ratio decreases, i.e., $\alpha_f(x) \leq \alpha_{f^*}(x) \leq \OPT$.
    \end{enumerate}

    Thus, we have a feasible solution $f$ with $\max_{x \in [n]} \alpha_f(x) \leq \OPT$, so $f$ is actually optimal.  But since $n \in A$, we know that $\alpha_f(n) < \alpha_{f^*}(n) \leq \OPT$.  This contradicts Lemma~\ref{lem:CR-n}.  Hence no such $t_1$ can exist, which implies the theorem.    
\end{proof}

\subsection{Algorithm and Analysis} \label{sec:algorithm}

With Theorem~\ref{thm:opt-tight-general} in hand, we can now give an algorithm to compute the optimal solution.  Intuitively, since Theorem~\ref{thm:opt-tight-general} says that for every day $i$ either the competitiveness constraint or the tail constraint is tight, we can just iterate through the days, increasing the probability for day $j$ until some constraint becomes tight.

We now formalize this.  We are given $n$ and $\{(\gamma_i, \delta_i)\}_{i \in [k]}$. 
As before, we will be concerned with the case where $\delta_i < 1$ (since larger values of $\delta_i$ imply the tail constraint is trivially satisfied) and $\gamma_i \geq 2 - 1/n$ for all $i$.  In what follows we will assume that we have a guess $\lambda$ which is equal to $\OPT$; we will discuss later in Appendix~\ref{sec:guessing_OPT} how to remove this assumption.  Our algorithm is the following.

\begin{itemize}
    \item Set $f_1 = \min(\min_{i \in [k]} \delta_i, \frac{\lambda-1}{n-1})$.
    \item For $j = 2$ to $n$: Set
    \begin{align*}
        f_j &= \min \left(\min_{i \in [k] : j \in I_{\gamma_i}(j)} \left( \delta_i - \sum_{t \in I_{\gamma_i}(j) \setminus \{j\}} f_t\right),  \frac{j}{n-1}  (\lambda-1) - \sum_{t=1}^{j-1} \left(1 - \frac{j-t}{n-1}\right) f_t\right) 
    \end{align*}
 \end{itemize}

We can now prove that this algorithm is optimal. The proof proceeds by showing that an optimum solution must coincide with 
$f$ to make the competitiveness or tail constraint tight, as required by Theorem~\ref{thm:opt-tight-general}. The proof is deferred to Appendix~\ref{app:omitted-general-tail}.

\begin{theorem} \label{thm:alg-optimal}
    The function $f$ returned by our algorithm is the unique optimal solution.
\end{theorem}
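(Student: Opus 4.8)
\textbf{Proof proposal for Theorem~\ref{thm:alg-optimal}.}

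The plan is to show two things: first, that the $f$ produced by the algorithm is a feasible solution with $\alpha_f(x) \le \lambda = \OPT$ for every $x$, so it is an optimal solution; and second, that any optimal solution $f^*$ must equal $f$ coordinate by coordinate, which gives uniqueness. The second part is where Theorem~\ref{thm:opt-tight-general} does the heavy lifting, so I would structure the argument so that the characterization theorem is invoked cleanly.

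For feasibility, I would first observe that the two arguments of the outer $\min$ in the update rule are exactly the largest values $f_j$ can take while (i) keeping every tail packing constraint $\sum_{t \in I_{\gamma_i}(j)} f_t \le \delta_i$ satisfied, and (ii) keeping $\alpha_f(j) \le \lambda$. For (ii) I would rewrite $\alpha_f(j)$ using Equation~\eqref{eq:alpha}: collecting the coefficient of $f_j$ and of the earlier terms shows that $\alpha_f(j) \le \lambda$ is equivalent to $f_j \le \frac{j}{n-1}(\lambda-1) - \sum_{t=1}^{j-1}\left(1-\frac{j-t}{n-1}\right) f_t$, matching the second argument of the $\min$ (and I should check the $j=1$ case separately, where it reduces to $f_1 \le \frac{\lambda-1}{n-1}$, consistent with the first bullet). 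Since $f_j$ is set to the minimum of these two caps, both (i) and (ii) hold at step $j$; and because $\alpha_f(x)$ depends only on $f_1,\dots,f_x$ (Observation~\ref{obs:bad-interval}) and likewise each bad-interval constraint for threshold $\gamma_i$ and time $x$ only involves $f_t$ with $t \le x$, no later update can violate a constraint already established. I also need $\sum_t f_t \le 1$: this follows because $\alpha_f(n) \le \lambda < 2 - 1/n$ (Lemma~\ref{lem:better_than_deterministic_soln}) forces the total mass to be strictly less than one — actually I would instead argue that when $\lambda = \OPT$ the construction never ``runs out'' and places all remaining mass; the cleanest route is to simultaneously build $f^*$ and $f$ and show they agree, from which $\sum_t f_t = \sum_t f^*_t = 1$ is inherited. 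Finally, since the algorithm always sets $f_j$ to make one of the two caps tight, by construction every $x$ satisfies the competitiveness-or-tail dichotomy, and $\alpha_f(x) \le \lambda$ everywhere, so $\max_x \alpha_f(x) \le \lambda = \OPT$, meaning $f$ is optimal.

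For uniqueness, I would take any optimal $f^*$ and prove $f^*_j = f_j$ by induction on $j$. Suppose $f^*_t = f_t$ for all $t < j$. By Theorem~\ref{thm:opt-tight-general}, at time $j$ the optimal $f^*$ has either $\alpha_{f^*}(j) = \OPT = \lambda$ or some tail constraint $\sum_{t \in I_{\gamma_i}(j)} f^*_t = \delta_i$ tight. In the first case, solving for $f^*_j$ (using the induction hypothesis for the $t<j$ terms) gives exactly the second argument of the algorithm's $\min$; in the second case it gives exactly one of the terms in the first argument. Either way $f^*_j$ equals one of the quantities over which the algorithm takes a minimum. It remains to rule out $f^*_j$ being strictly smaller than that minimum — i.e.\ to show $f^*_j$ is in fact \emph{equal} to the min, not just to one of the arguments. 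This is the crux: I would argue that if $f^*_j$ were strictly below the cap that the algorithm uses, then the cap corresponding to the constraint $f^*$ makes tight would still be respected with room to spare for all \emph{later} constraints too — more carefully, I would note the algorithm's $f_j$ is the unique value that makes exactly the ``binding-first'' constraint tight while leaving all others slack-or-satisfied, and any $f^*_j < f_j$ would leave \emph{every} constraint at time $j$ strictly slack, contradicting Theorem~\ref{thm:opt-tight-general}; any $f^*_j > f_j$ would violate whichever constraint the algorithm's min came from. Hence $f^*_j = f_j$, completing the induction.

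The main obstacle is this last point in the uniqueness argument: Theorem~\ref{thm:opt-tight-general} only tells us \emph{some} constraint is tight at each $x$, so I must verify that the particular constraint the algorithm identifies as binding is consistent with whichever constraint $f^*$ makes tight — in principle $f^*$ could make a \emph{different} constraint tight at a larger value of $f^*_j$, which would then violate the algorithm's binding constraint. Ruling this out requires showing the family of caps is ``coherent'': the smallest cap at step $j$ is a genuine upper bound on $f^*_j$ (because exceeding it violates either feasibility or $\alpha_{f^*}(x) \le \OPT$ at some $x \ge j$, using that earlier probabilities already match), and conversely $f^*_j$ cannot be below it without leaving all constraints at $j$ slack. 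I would handle the ``exceeding violates a later constraint'' direction by the same monotonicity observation used for feasibility (each constraint at time $x$ depends only on $f_t, t\le x$, and increasing $f_j$ weakly increases $\alpha_f(x)$ for $x\ge j$ and weakly increases every bad-interval sum containing $j$), so that the step-$j$ cap is exactly the binding upper bound any optimal solution must respect.
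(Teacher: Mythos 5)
Your proposal is correct and takes essentially the same route as the paper: the paper also proves $f_j = f^*_j$ by induction, arguing in the inductive step that each argument of the algorithm's $\min$ is a genuine upper bound on $f^*_j$ (exceeding the tail cap violates feasibility via Lemma~\ref{lem:bad-interval}, exceeding the competitiveness cap forces $\alpha_{f^*}(j) > \lambda$, contradicting optimality), so $f^*_j \le f_j$; and then that $f^*_j < f_j$ would leave every constraint at $j$ strictly slack, contradicting Theorem~\ref{thm:opt-tight-general}. Your initial detour toward verifying feasibility of $f$ separately is unnecessary (and you correctly back off to the ``build $f^*$ and $f$ simultaneously'' plan), since once $f = f^*$ is established all properties of $f$ are inherited from $f^*$.
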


\section{Single Tail Constraint} \label{sec:general-gamma}

In this section we consider a single tail constraint, $(\gamma, \delta)$. To streamline our presentation, we assume $\gamma$ is an integer, but our analysis should be easily generalizable to any $\gamma \geq 2 - 1/n$.

\subsection{Analysis Overview}
We start the analysis by defining disjoint intervals $P_0, P_1, P_2, \ldots$, where the optimum purchase distribution $f_t$ exhibits a recurring structure (Section~\ref{sec:defining-P}).  
For $\gamma = 2$ which we assume for the illustration purpose here, the intervals are defined as $P_0 = [1, \frac{1}{2}(n-1)], P_1 = [ \frac{1}{2}(n-1) + 1, \frac{3}{4}(n-1)],   P_2 = [\frac{3}{4}(n-1) + 1, \frac{7}{8}(n-1)] \ldots$. Note that the intervals have exponentially decreasing lengths. 
The key observation we make in Section~\ref{sec:defining-P} is that 
every $P_j$ is a bad interval and therefore the probability mass of $f_t$ on $P_j$ is at most $\delta$ (Lemma~\ref{lem:interval-bounded-general-gamma}). 

We show how $f_t$ accumulates $\delta$ probability mass in each interval $P_j$. We first consider $P_0$ in Section~\ref{sec:p0}. Recall that $f_t$ must make the competitive constraint or the tail constraint tight at every time (Theorem~\ref{thm:opt-tight-general}). Obviously, the tail constraint doesn't get tight until $f_t$ accumulates $\delta$ mass in the beginning. The time, denoted as $\hat t$, is shown to be in the first interval $P_0$ (Lemma~\ref{lem:first-delta-hit}). Then using the fact that the competitiveness constraint must be satisfied for all $t \leq \hat t$, we can show that $f_t$ is roughly proportional to $e^{t / n}$ for all $t \leq \hat t$, then drop to 0 because the total probability mass of $f$  must be at most $\delta$ in $P_0$.

The subsequent intervals $P_j$, $j \geq 1$, are considered in Section~\ref{sec:pj}.
Here, the key observation we make is that the tail constraint is satisfied for all $t > \hat t$ (Lemma~\ref{lem:mass-tight-general-gamma}).
Thus we have a sequence of equations and by solving them, we obtain $f_t = f_{2t - (n-1) - 1} + f_{2t - (n-1)}$. Here, when $t \in P_j$, the times appearing in the right-hand-side, $2t - (n-1) - 1$ and $2t - (n-1)$ are both in $P_{j-1}$. Intuitively, this implies that $f_t$ grows twice faster in $P_j$ than in $P_{j-1}$. In fact, by a careful induction we can precisely show that $f$'s probability mass at time $t$ for $t \in P_j$ must be equal to that of some $2^j$ consecutive time steps in $P_0$ (Lemma~\ref{lem:Pj-general-gamma}). Thus, $f_t$ grows in $P_j$ with $2^j$ factor larger exponent than it does in $P_0$ (Corollary~\ref{cor:big-exponentials-general-gamma}). It then accumulates $\delta$ probability mass in $P_j$ and drops to 0 because the probability mass shouldn't exceed $\delta$ in $P_j$
(Lemma~\ref{lem:Pj-general-gamma}).

Finally, we take $n \rightarrow \infty$ in Section~\ref{sec:continuous-general-gamma} to obtain a more intuitively looking pdf in the continuous setting (Corollary~\ref{cor:continuous-general-gamma}). We note that we consider the discrete version for analysis because the recursive argument needs considerable care and it seems easier in the discrete setting. 

\subsection{Defining Disjoint Intervals, $P_j$, $j \geq 0$}

\label{sec:defining-P}

We will assume  that $n - 1$ is a sufficiently large power of $\gamma$, this will make the notation simpler since we will be interested in the optimum solution for sufficiently large values of $n$.  

Let 
\[
\ell_j =\frac{n-1}{\gamma-1}\left(1 - \frac{1}{\gamma^j}\right) + 1 \qquad \text{ and } \qquad r_j = \frac{n-1}{\gamma-1}\left(1 - \frac{1}{\gamma^{j+1}}\right)
\]
We define an interval $P_j = [\ell_j, r_j]$ for $j \geq 0$. Note that $\ell_0 = 1$, and more generally:
\begin{align}
P_0 &= \left\{t : 1 \leq t \leq \frac{n-1}{\gamma}\right\}  \nonumber \\
P_j &= \left\{t : \frac{(n-1)}{(\gamma - 1)} \cdot \frac{(\gamma^j -1)}{\gamma^j} +1  \leq t \leq \frac{(n-1)}{(\gamma - 1)}\cdot \frac{(\gamma^{j+1} - 1)}{\gamma^{j+1}} \right\} & \text{for all } j \geq 1 \label{eqn:def-P}
\end{align}  
We can check that $|P_j| = \frac{n-1}{\gamma^{j+1}}$. Hence these intervals are non-empty for all $j < \log_\gamma n$. Moreover they are disjoint. Simple algebra shows the following. 

\begin{claim}
    \label{claim:rl-general-gamma}
    For all $j \geq 1$, we have 
    \begin{itemize}
        \item $\ell_{j-1}  = \gamma \ell_j  - (n-1) - (\gamma-1)$; and 
        \item $r_{j-1}  = \gamma r_j - (n-1)$
    \end{itemize}    
\end{claim}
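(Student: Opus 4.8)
The plan is to reduce both identities to one short algebraic substitution. I would introduce the shorthand $M = \frac{n-1}{\gamma-1}$, so that the two boundary points become $\ell_j = M\left(1 - \gamma^{-j}\right) + 1$ and $r_j = M\left(1 - \gamma^{-(j+1)}\right)$. The single fact that drives everything is that $n-1 = (\gamma-1)M$, i.e.\ $\gamma M - (n-1) = M$; I would state this at the outset and then both parts fall out of expanding $\gamma\ell_j$ and $\gamma r_j$.

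For the $r$-identity, compute $\gamma r_j = \gamma M - \gamma M\,\gamma^{-(j+1)} = \gamma M - M\gamma^{-j}$, and then subtract $n-1$, using $\gamma M - (n-1) = M$, to get $\gamma r_j - (n-1) = M - M\gamma^{-j} = M\left(1 - \gamma^{-j}\right)$, which is exactly $r_{j-1}$ since $r_{j-1} = M\left(1 - \gamma^{-((j-1)+1)}\right)$. For the $\ell$-identity, compute $\gamma\ell_j = \gamma M - M\gamma^{-(j-1)} + \gamma$, subtract $(n-1) + (\gamma-1)$, cancel the additive $\gamma$ against $-(\gamma-1)$ and the leftover $+1$, and apply $\gamma M - (n-1) = M$ once more to obtain $\gamma\ell_j - (n-1) - (\gamma-1) = M - M\gamma^{-(j-1)} + 1 = M\left(1 - \gamma^{-(j-1)}\right) + 1 = \ell_{j-1}$.

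There is no genuine obstacle here; each identity is a one-line manipulation once $n-1 = (\gamma-1)M$ is in hand. The only things to watch are the exponent bookkeeping ($\gamma\cdot\gamma^{-j} = \gamma^{-(j-1)}$) and the additive constants: the $+1$ offset is present in $\ell_j$ but not in $r_j$, which is precisely why the $\ell$-recursion carries the extra $-(\gamma-1)$ correction term (it absorbs the mismatch $\gamma\cdot 1 - 1$ between the shifted offsets) while the $r$-recursion does not.
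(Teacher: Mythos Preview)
Your proposal is correct and follows essentially the same approach as the paper: direct substitution of the definitions of $\ell_j$ and $r_j$ followed by elementary simplification. Your introduction of the shorthand $M=(n-1)/(\gamma-1)$ and the isolated identity $\gamma M-(n-1)=M$ streamlines the bookkeeping, and you spell out the $\ell$-case whereas the paper only writes out the $r$-case and leaves the other to ``a similar algebra,'' but there is no substantive difference in strategy.
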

\begin{proof}
By definition of $r_j$, we have,
\begin{align*}
    \gamma \cdot r_j - n + 1 =  \gamma \left(\frac{n-1}{\gamma - 1} \left (1 - \frac{1}{\gamma^{j+1}}\right)\right)  - (n-1)
    & = \frac{n-1}{\gamma - 1} \left(\gamma - \frac{1}{\gamma^j}  - (\gamma - 1)\right)
    = \frac{n-1}{\gamma - 1} \left(1 - \frac{1}{\gamma^j}\right) 
     = r_{j-1}, 
\end{align*}
which proves the second claim. The first claim follows a similar algebra. 
\end{proof}

We will first show that the probability mass inside each of them is bounded. 

\begin{lemma} \label{lem:interval-bounded-general-gamma}
    For any feasible solution $f$, we have that $\sum_{t \in P_j} f_t \leq \delta$ for every $j < \log_\gamma n - 1$ 
\end{lemma}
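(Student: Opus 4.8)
The plan is to recognize each interval $P_j$ as a \emph{bad interval} in the sense of Definition~\ref{def:bad_interval_def}, and then apply Lemma~\ref{lem:bad-interval} (specialized to the single tail constraint $(\gamma,\delta)$) to conclude that its probability mass is at most $\delta$. Concretely, I will show $P_j = I_\gamma(r_j)$ for every $j$ in the stated range, after which feasibility of $f$ gives $\sum_{t \in P_j} f_t = \sum_{t \in I_\gamma(r_j)} f_t \le \delta$ directly.

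First I would dispatch the base case $j = 0$. Here $r_0 = \tfrac{n-1}{\gamma}$, so $\gamma r_0 - n + 1 = 0$, and the formula in Definition~\ref{def:bad_interval_def} gives $I_\gamma(r_0) = \{t : \max(0, 0) < t \le r_0\} = \{1, 2, \dots, \tfrac{n-1}{\gamma}\} = P_0$. For $j \ge 1$ I would use Claim~\ref{claim:rl-general-gamma}, whose second bullet states $r_{j-1} = \gamma r_j - (n-1)$, i.e.\ $\gamma r_j - n + 1 = r_{j-1}$; since $r_{j-1} \ge r_0 = \tfrac{n-1}{\gamma} > 0$, the truncation is nontrivial and $\max(0, \gamma r_j - n + 1) = r_{j-1}$. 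Moreover, directly from the definitions, $\ell_j = \tfrac{n-1}{\gamma-1}\bigl(1 - \tfrac{1}{\gamma^j}\bigr) + 1 = r_{j-1} + 1$. Hence $I_\gamma(r_j) = \{t : r_{j-1} < t \le r_j\} = \{\ell_j, \ell_j + 1, \dots, r_j\} = P_j$. Along the way I would record the harmless bookkeeping: $r_j < \tfrac{n-1}{\gamma-1} \le n-1 < n$ for $\gamma \ge 2$, so $r_j \in [n]$ and $I_\gamma(r_j)$ is genuinely one of the constraints in Lemma~\ref{lem:bad-interval}; and the standing assumption that $n-1$ is a power of $\gamma$, together with $j < \log_\gamma n - 1$, ensures $r_j$ and $\ell_j = r_{j-1}+1$ are integers, so these index sets really are the integer intervals $P_j$ as written in~\eqref{eqn:def-P}.

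Finally, since $f$ is feasible for the single tail constraint $(\gamma, \delta)$, Lemma~\ref{lem:bad-interval} gives $\sum_{t \in I_\gamma(x)} f_t \le \delta$ for all $x \in [n]$; taking $x = r_j$ and using $I_\gamma(r_j) = P_j$ yields $\sum_{t \in P_j} f_t \le \delta$, which is the claim. I do not anticipate any genuine obstacle: the entire argument rests on the algebraic identity $\gamma r_j - (n-1) = r_{j-1}$ from Claim~\ref{claim:rl-general-gamma} plus the elementary relation $\ell_j = r_{j-1} + 1$, which together pin the left endpoint of the bad interval $I_\gamma(r_j)$ exactly at $\ell_j$. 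The only mild care needed is the endpoint/integrality bookkeeping over the range $j < \log_\gamma n - 1$ and the degenerate handling of $j = 0$, where the lower truncation in $I_\gamma$ is the trivial one (the $\max$ with $0$).
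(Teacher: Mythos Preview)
Your proposal is correct and follows essentially the same route as the paper: identify $P_j$ with the bad interval $I_\gamma(r_j)$ via the identity $\gamma r_j - (n-1) = r_{j-1}$ from Claim~\ref{claim:rl-general-gamma}, then invoke feasibility. If anything, you are slightly more careful than the paper---you separate out the $j=0$ case (where the $\max$ with $0$ kicks in), verify $r_j \in [n]$ and the integrality bookkeeping, and you cite Lemma~\ref{lem:bad-interval} (the correct feasibility statement) rather than Theorem~\ref{thm:opt-tight-general} as the paper does.
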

\begin{proof}
From the definition of $I_{\gamma}(x)$, we have that
    \begin{align*}
        I_{\gamma}(r_j) &= \{ t : \gamma r_j - n + 1 < t \leq r_j \} = \{ t: r_{j-1} < t \leq r_j\} = P_j
    \end{align*}
    where the penultimate equality holds due to Claim~\ref{claim:rl-general-gamma}, and     
    the final equality is due to the strict inequality of the lower bound on $t$.  So the tail constraint  at $r_j$ implies that $\sum_{t \in P_j} f_t = \sum_{t \in I_\gamma(r_j)} f_t \leq \delta$ due to 
    Theorem~\ref{thm:opt-tight-general}, as claimed. 
\end{proof}

\subsection{Understanding the Optimum Solution's Structure for $t \in P_0$}
\label{sec:p0}

We are going to claim that for any fixed $i \geq 0$, by setting $\delta$ small enough (as a function of $i$, not $n$) the structure inside of every $P_j$ for $j \leq i$ is both simple and surprising: there is a prefix which is a $\exp(\gamma^j)$ function, and then it drops to $0$ (at least for $n$ large enough that $P_j$ is nontrivially large).  So, in other words, the optimal solution can exhibit an \emph{arbitrarily large number of drops to $0$}, and can also exhibit \emph{arbitrarily large growth}!  This is in stark contrast to the classical solution, which is a simple exponential function everywhere.

Fix $i \geq 0$, and let $\delta < 1/(2\gamma(i+1))$. Assume $n$ is sufficiently large for now; later we will require
$n \geq 2 \gamma^{i+1} +1$. 
Let $\hat t$ be the largest integer such that
\begin{equation} \label{eq:hat-t-general-gamma}
\sum_{t=1}^{\hat t} \frac{\lambda-1}{n-1} \left( 1 + \frac{1}{n-1}\right)^{t-1} \leq \delta.
\end{equation}

\begin{lemma}    
    \label{lem:first-delta-hit}
    For all sufficiently large $n$, it is the case that $\hat t < n/\gamma$, i.e., $\hat t  \in P_0$
\end{lemma}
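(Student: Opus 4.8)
The plan is to evaluate the geometric sum defining $\hat t$ in closed form and show that it already exceeds $\delta$ just past the right endpoint of $P_0$. Summing the series in \eqref{eq:hat-t-general-gamma} gives $\sum_{t=1}^{m}\frac{\lambda-1}{n-1}\left(1+\frac{1}{n-1}\right)^{t-1}=(\lambda-1)\left(\left(1+\frac{1}{n-1}\right)^{m}-1\right)$, which is strictly increasing in $m$, so $\hat t$ is exactly the largest integer $m$ for which this quantity is at most $\delta$. Since $n-1$ is assumed to be a power of $\gamma$, the number $(n-1)/\gamma$ is an integer while $n/\gamma$ is not, so $\lceil n/\gamma\rceil=(n-1)/\gamma+1=:m_0$ and $P_0=\{t:1\le t\le (n-1)/\gamma\}$. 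I would therefore show that the closed-form sum, evaluated at $m=m_0$, is strictly larger than $\delta$; this forces $\hat t<m_0$, i.e.\ $\hat t\le (n-1)/\gamma<n/\gamma$, so $\hat t\in P_0$. (That $\hat t\ge 1$ is immediate for large $n$ since the $m=1$ term is $\frac{\lambda-1}{n-1}<\delta$.)

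The next step is to lower bound the two factors of $(\lambda-1)\left(\left(1+\frac{1}{n-1}\right)^{m_0}-1\right)$ separately. For the exponential factor, $\left(1+\frac{1}{n-1}\right)^{m_0}\ge\left(1+\frac{1}{n-1}\right)^{(n-1)/\gamma}=\left[\left(1+\frac{1}{n-1}\right)^{n-1}\right]^{1/\gamma}$, and since $\left(1+\frac{1}{n-1}\right)^{n-1}$ increases to $e$ this tends to $e^{1/\gamma}$ from below; a quantitative version using $\ln(1+x)\ge x-x^2/2$ gives $\left(1+\frac{1}{n-1}\right)^{(n-1)/\gamma}\ge e^{1/\gamma}e^{-1/(2\gamma(n-1))}$. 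For the factor $\lambda-1$, I would invoke the classical randomized lower bound for ski rental: every feasible $f$ is in particular a randomized strategy, and the tail constraint can only raise the optimum above the unconstrained randomized optimum, so $\lambda-1\ge \frac{1}{(1+1/n)^n-1}>\frac{1}{e-1}$. Putting these together, the sum at $m_0$ is at least $\frac{1}{e-1}\left(e^{1/\gamma}e^{-1/(2\gamma(n-1))}-1\right)$, which converges to $\frac{e^{1/\gamma}-1}{e-1}$ as $n\to\infty$.

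It remains to check that this limiting value beats $\delta$ with a definite margin so that the inequality survives for all sufficiently large $n$. Using $e^{1/\gamma}-1>\frac{1}{\gamma}+\frac{1}{2\gamma^2}$ together with $e-1<2$, one gets $\frac{e^{1/\gamma}-1}{e-1}>\frac{1}{2\gamma}\cdot\frac{2+1/\gamma}{e-1}>\frac{1}{2\gamma}$, while by hypothesis $\delta<\frac{1}{2\gamma(i+1)}\le\frac{1}{2\gamma}$. Hence for all sufficiently large $n$ the sum at $m_0$ exceeds $\delta$, which gives $\hat t<n/\gamma$ and completes the proof. The mildly delicate points, and where I expect the real work to be, are: (i) recognizing that a \emph{lower} bound on $\lambda$ is what is needed (a larger $\lambda$ makes each summand larger, hence $\hat t$ smaller), so the already-proved bound $\lambda<2-1/n$ is useless here and one must use $\lambda\ge e/(e-1)$; and (ii) verifying the constant-factor margin $\frac{e^{1/\gamma}-1}{e-1}>\frac{1}{2\gamma}$, which is precisely where the hypothesis $\delta<1/(2\gamma(i+1))$ enters. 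Everything else is routine manipulation of geometric series and of $\left(1+\frac{1}{n-1}\right)^{n-1}\to e$.
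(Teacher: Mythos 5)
Your proposal is correct and follows essentially the same route as the paper: both arguments use the lower bound $\lambda \ge \frac{e}{e-1}$ inherited from the unconstrained randomized optimum, evaluate the geometric sum in closed form, and show that at the end of $P_0$ it already exceeds $\frac{1}{2\gamma}\ge\delta$ via $e^{1/\gamma}-1>\frac{1}{\gamma}$ and $e-1<2$. Your treatment is marginally more careful about the integrality of $n/\gamma$ and the finite-$n$ error term, but the substance is identical.
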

\begin{proof}
    Since the optimum competitive ratio is at least $\frac{e}{e - 1}$ with no tail constraints, we have $\lambda \geq \frac{e}{e - 1}$. Knowing that the left-hand-side is increasing in $\hat t$, it suffices to show that 
$$
\sum_{t=1}^{n / \gamma} \frac{ e / (e - 1) -  1}{n-1} \left( 1 + \frac{1}{n-1}\right)^{t-1} > \frac{1}{2\gamma} \geq \delta.
$$
Indeed, for all sufficiently large $n$, we have, 
\begin{align*}
\sum_{t=1}^{n / \gamma} \frac{ e / (e - 1) -  1}{n-1} \left( 1 + \frac{1}{n-1}\right)^{t-1} = \frac{1}{e -1} \Big((1 + \frac{1}{n-1})^{n/\gamma} - 1\Big)
\geq \frac{e^{1/\gamma} - 1}{e - 1} \geq \frac{\nicefrac{1}{\gamma}}{e-1} >\frac{1}{2\gamma}, 
\end{align*}
which is no smaller than $\delta$ by definition.    
\end{proof}

Let $\{f_t\}_{t \in [n]}$ be the optimal solution, which we know by Theorem~\ref{thm:alg-optimal} is returned by our algorithm.  We first show that $P_0$ consists of an exponentially increasing function, followed by the zero function.
\begin{lemma} \label{lem:P0-general-gamma}
    $f_t = \frac{\lambda-1}{n-1} \left( 1 + \frac{1}{n-1}\right)^{t-1}$ for all $t \leq \hat t$, and $f_t = 0$ for all $\hat t + 2 \leq t \leq \frac{n-1}{\gamma}$.
\end{lemma}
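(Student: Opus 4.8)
The plan is to apply Theorem~\ref{thm:alg-optimal}: the optimal $f$ is exactly what the algorithm of Section~\ref{sec:algorithm} produces, so it suffices to trace which of the two terms in the update rule is the active minimum at each step $j \in P_0 = [1, \frac{n-1}{\gamma}]$. Throughout this range $j < \frac{n-1}{\gamma-1}$, so $j \in I_\gamma(j)$, and moreover $\gamma j - n + 1 \le 0$, which means $I_\gamma(j) = [j]$. Specializing the algorithm to the single constraint $(\gamma,\delta)$, the update therefore reads $f_j = \min\bigl(\delta - \sum_{t=1}^{j-1} f_t,\ \frac{j}{n-1}(\lambda-1) - \sum_{t=1}^{j-1}(1 - \frac{j-t}{n-1})f_t\bigr)$, where the first term is the remaining tail slack and the second is, by construction, the unique value of $f_j$ that would make $\alpha_f(j) = \lambda$. (For $j=1$ the rule gives $f_1 = \min(\delta, \frac{\lambda-1}{n-1})$.)

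\textbf{The exponential prefix ($t \le \hat t$).} I would prove by induction on $j$ that $f_j = \frac{\lambda-1}{n-1}(1+\frac{1}{n-1})^{j-1}$ and that the competitiveness term is the active minimum, for all $j \le \hat t$. For the base case, $\lambda \le 2 - 1/n$ gives $\frac{\lambda-1}{n-1} \le 1/n < \delta$ for $n$ large, so $f_1 = \frac{\lambda-1}{n-1}$, and then $\alpha_f(1) = 1 + (n-1)f_1 = \lambda$. For the inductive step, note first that setting $f_1,\dots,f_{\hat t}$ all equal to the stated exponential keeps $\alpha_f$ constant at $\lambda$ by the ``if'' direction of Lemma~\ref{lem:discrete-exponential} (applied with $a=2$, $x'=\hat t$; the needed condition $\sum_{t=1}^{\hat t}(\text{exponential}) \le \delta < 1$ holds). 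Hence, given $f_1,\dots,f_{j-1}$ equal to the exponential (the inductive hypothesis), the unique value of $f_j$ keeping $\alpha_f(j)=\lambda$ — i.e. the competitiveness term — is precisely $\frac{\lambda-1}{n-1}(1+\frac{1}{n-1})^{j-1}$. It remains to check the tail slack is at least this large, i.e. $\sum_{t=1}^{j}\frac{\lambda-1}{n-1}(1+\frac{1}{n-1})^{t-1} \le \delta$; since the left side is increasing in $j$ and $j \le \hat t$, this is exactly the defining inequality \eqref{eq:hat-t-general-gamma} of $\hat t$. So $f_j$ equals the competitiveness term, completing the induction.

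\textbf{The switch and the zero tail.} At $t = \hat t + 1$ the competitiveness term is still $\frac{\lambda-1}{n-1}(1+\frac{1}{n-1})^{\hat t}$, but maximality of $\hat t$ gives $\sum_{t=1}^{\hat t+1}\frac{\lambda-1}{n-1}(1+\frac{1}{n-1})^{t-1} > \delta$, so the tail slack $\delta - \sum_{t=1}^{\hat t}f_t$ is strictly smaller; hence $f_{\hat t+1} = \delta - \sum_{t=1}^{\hat t}f_t$ and $\sum_{t=1}^{\hat t+1}f_t = \delta$ (this is why the lemma claims nothing about $f_{\hat t+1}$). Finally, for $\hat t + 2 \le t \le \frac{n-1}{\gamma}$ I would induct again: assuming $f_s = 0$ for $\hat t+2 \le s < t$, the prefix mass $\sum_{s=1}^{t-1}f_s$ is still exactly $\delta$, so the tail-slack term is $0$, and thus $f_t = \min(0,\ \text{competitiveness term})$. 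Since the algorithm's output is a genuine probability distribution (Theorem~\ref{thm:alg-optimal}), $f_t \ge 0$, which forces $f_t = 0$. (Equivalently: $\alpha_f(t-1)\le\lambda$ and $\alpha_f$ cannot increase without adding mass, so the competitiveness term is $\ge 0$.) If $\hat t+2 > \frac{n-1}{\gamma}$ the claim is vacuous, but Lemma~\ref{lem:first-delta-hit} already places $\hat t$ strictly inside $P_0$ with room to spare.

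\textbf{Main obstacle.} The only genuinely delicate point is identifying the competitiveness term with the next value of the exponential in the inductive step — everything else is bookkeeping about when the running prefix mass hits $\delta$, plus the observation that a legitimate distribution cannot go negative. That identification is exactly what Lemma~\ref{lem:discrete-exponential} buys us, so once it is invoked the proof is essentially mechanical.
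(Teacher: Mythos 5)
Your proposal is correct and follows essentially the same route as the paper: both identify $f_1$ from the algorithm's update rule, invoke Lemma~\ref{lem:discrete-exponential} to show the competitiveness-tight prefix is the stated exponential, use the maximality in the definition of $\hat t$ to locate the switch to the tail constraint at $\hat t+1$, and conclude $f_t=0$ on $[\hat t+2, \frac{n-1}{\gamma}]$ because $I_\gamma(t)=[1,t]$ there and the prefix mass already equals $\delta$. Your version is somewhat more explicit about which branch of the algorithm's $\min$ is active at each step (and about non-negativity forcing $f_t=0$), but this is a presentational refinement of the same argument rather than a different proof.
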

\begin{proof}
   Let's start with the first part of the lemma, and focus on the $t \leq \hat t$ case.  For sufficiently large $n$ (as a function of $\delta$), we know from the definition of our algorithm that $f_1 = \frac{\lambda-1}{n-1}$, as claimed, which makes the competitiveness constraint tight for time $1$. 
    By Theorem~\ref{thm:opt-tight-general}, the competitiveness constraint will stay tight until there is $\delta$ mass in the bad interval for some $t$.  Lemma~\ref{lem:discrete-exponential} implies that to keep the competitiveness constraint tight, $f_t = \left(1 + \frac{1}{n-1}\right) f_{t-1}$.  Hence $f_t =  \frac{\lambda-1}{n-1} \left( 1 + \frac{1}{n-1}\right)^{t-1}$ as claimed until we have accumulated $\delta$ mass total, which by definition occurs at time $\hat t +1$, implying the first part of the lemma. 
    
   For the second part of the lemma, note that by definition of $\hat t$ the tail constraint first becomes tight at time $\hat t +1$, i.e., $\sum_{t=1}^{\hat t +1 } f_t = \delta$.  But for all $t \in [\hat t +2, n/\gamma]$, we know that $I_2(t) = [1,t]$.  Hence for all $t \in [\hat t+2, n/\gamma]$, we know that $\sum_{t' \in I_2(t) \setminus \{t\}} f_{t'} = \delta$, and hence $f_t = 0$ as claimed.
\end{proof}

\subsection{Understanding the Optimum Solution's Structure for $t \in P_j$, $j \geq 1$}
\label{sec:pj}

We now prove that the tail constraint is tight for a large range of values.
\begin{lemma} \label{lem:mass-tight-general-gamma}
    For all $t \in \cup_{j : 0 \leq j \leq i} P_j \setminus [\hat t]$, the tail constraint is tight.
\end{lemma}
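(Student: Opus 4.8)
The plan is to show that at every time in question the \emph{competitiveness} constraint is slack, so that Theorem~\ref{thm:opt-tight-general} forces the \emph{tail} constraint to be tight. I would split the argument according to whether $t$ lies in $P_0$ or in some $P_j$ with $1 \le j \le i$.

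\emph{The case $t \in P_0$, $t \ge \hat t + 1$.} Since $n-1$ is a power of $\gamma$, $r_0 = (n-1)/\gamma$ is an integer and Lemma~\ref{lem:first-delta-hit} gives $\hat t \le r_0$, so this sub‑case covers exactly $[\hat t+1, r_0]$ (and is empty if $\hat t = r_0$). For such $t$ we have $\gamma t - n + 1 \le \gamma r_0 - n + 1 = 0$, hence $I_\gamma(t) = [1,t]$. By Lemma~\ref{lem:P0-general-gamma} and the computation in its proof, the tail constraint first becomes tight at $\hat t+1$, i.e.\ $\sum_{t'=1}^{\hat t+1} f_{t'} = \delta$, and $f_{t'} = 0$ for all $\hat t+2 \le t' \le r_0$. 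Therefore $\sum_{t' \in I_\gamma(t)} f_{t'} = \sum_{t'=1}^{t} f_{t'} = \delta$, so the tail constraint is tight at $t$.

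\emph{The case $\tau \in P_j$, $1 \le j \le i$.} Here $\tau \ge \ell_j > r_0 \ge \hat t$, so the restriction ``$\setminus[\hat t]$'' is automatic. I would bound $\alpha_f(\tau)$ bluntly using only feasibility of $f$. By~\eqref{eq:alpha}, using $\tfrac{n+t-1}{\tau} \le 1 + \tfrac{n-1}{\tau}$ for every $t \le \tau$ and $\sum_{t \le \tau} f_t \le 1$, we get $\alpha_f(\tau) \le 1 + \tfrac{n-1}{\tau}\sum_{t\le\tau} f_t$. Decomposing $[1,\tau] = P_0 \cup \cdots \cup P_{j-1} \cup [\ell_j,\tau]$, Lemma~\ref{lem:interval-bounded-general-gamma} bounds the mass on each $P_{j'}$ ($j'<j$) by $\delta$; and since $\gamma\tau - n + 1 \le \gamma r_j - n + 1 = r_{j-1} < \ell_j$ by Claim~\ref{claim:rl-general-gamma}, we have $[\ell_j,\tau] \subseteq I_\gamma(\tau)$, so feasibility (Lemma~\ref{lem:bad-interval}) bounds the mass on $[\ell_j,\tau]$ by $\delta$ too; hence $\sum_{t\le\tau} f_t \le (j+1)\delta$. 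Finally $\tau \ge \ell_j \ge \tfrac{n-1}{\gamma-1}\cdot\tfrac{\gamma-1}{\gamma} = \tfrac{n-1}{\gamma}$, so $\tfrac{n-1}{\tau} \le \gamma$. Combining these and using $\delta < \tfrac{1}{2\gamma(i+1)}$,
\[
\alpha_f(\tau) \;\le\; 1 + \gamma(j+1)\delta \;\le\; 1 + \gamma(i+1)\delta \;<\; \tfrac32 \;<\; \tfrac{e}{e-1} \;\le\; \lambda = \OPT,
\]
where the last inequality uses that $\OPT$ is at least the unconstrained optimum $\tfrac{e}{e-1}$, as already observed in the proof of Lemma~\ref{lem:first-delta-hit}. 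Thus the competitiveness constraint is slack at $\tau$, and Theorem~\ref{thm:opt-tight-general} forces $\sum_{t\in I_\gamma(\tau)} f_t = \delta$.

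I do not expect a deep obstacle: the $j\ge1$ case needs only feasibility, the elementary inequality $\tfrac{n+t-1}{\tau}\le 1+\tfrac{n-1}{\tau}$, and the fact that every $P_j$ with $j\ge1$ begins at a time $\ge (n-1)/\gamma$. The one thing requiring care is the index bookkeeping—pinning down $I_\gamma(t)$ on $P_0$, verifying $[\ell_j,\tau]\subseteq I_\gamma(\tau)$, and confirming $\tau\ge (n-1)/\gamma$ for $j\ge1$—all of which reduce to the endpoint identities in Claim~\ref{claim:rl-general-gamma} and the definition~\eqref{eqn:def-P} of $P_j$. It is tempting to instead reconstruct $f$ inside $P_j$ via the recursion used later in Section~\ref{sec:pj}, but that would be circular here; the crude upper bound on $\alpha_f(\tau)$ already drops below $\tfrac{e}{e-1}$ because the total mass on $[1,\tau]$ is only $O((i+1)\delta) = o(1)$.
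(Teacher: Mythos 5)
Your proof is correct and follows essentially the same route as the paper's: the $P_0$ case reduces to the mass computation in Lemma~\ref{lem:P0-general-gamma}, and for $j\ge 1$ you bound $\alpha_f(\tau)\le 1+(j+1)\frac{n-1}{\tau}\delta<\frac32<\lambda$ via Lemma~\ref{lem:interval-bounded-general-gamma} and then invoke Theorem~\ref{thm:opt-tight-general}, exactly as the paper does (your inequality $\frac{n+t-1}{\tau}\le 1+\frac{n-1}{\tau}$ is the same bound the paper applies termwise to $\frac{n+t'-1-\tau}{\tau}\le\frac{n-1}{\tau}$). The only cosmetic difference is that you spell out the $P_0$ case and use feasibility on $[\ell_j,\tau]\subseteq I_\gamma(\tau)$ where the paper cites Lemma~\ref{lem:interval-bounded-general-gamma} for all of $P_j$; both are sound.
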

\begin{proof}
    For $P_0$, this is implied by the proof of Lemma~\ref{lem:P0-general-gamma}. 
    
    Now consider $1 \leq j \leq i$, and let $t \in P_j$.  We know from Lemma~\ref{lem:interval-bounded-general-gamma} that there can be at most $\delta$ total mass inside of each $P_j$.  This allows us to bound $\alpha_f(t)$ as follows.  
    \begin{align*}
        \alpha_f(t) &= \sum_{t'=1}^t \frac{n+t'-1}{t} f_{t'} + 1 - \sum_{t'=1}^t f_{t'} = 1 + \sum_{t'=1}^t \frac{n+t'-1-t}{t} f_{t'} \\
        &= 1 + \sum_{j' = 0}^{j-1} \sum_{t' \in P_{j'}} \frac{n+t'-1-t}{t} f_{t'} + \sum_{t' \in P_j :  t' \leq t} \frac{n+t'-1-t}{t} f_{t'} 
        \tag{Disjointness of $P_0, P_1, \ldots, P_j$} 
        \\
        &\leq 1 + \sum_{j'=0}^{j-1} \frac{n+r_{j'} - 1 - t}{t} \delta + \frac{n-1}{t} \delta \tag{Lemma~\ref{lem:interval-bounded-general-gamma}} \\
        &\leq 1 + j\frac{n-1}{t} \delta + \frac{n-1}{t} \delta \tag{$r_{j'} \leq t$ for all $j'$}  \\
        &= 1 + (j+1) \frac{n-1}{t} \delta  \\
        &< 1 + (j+1) \frac{n-1}{t} \frac{1}{2\gamma(i+1)} \tag{def of $\delta$} \\
        &\leq 1 + \frac{n-1}{2\gamma t} \tag{$j \leq i$} \\
        &< 3/2 \tag{$t \geq n/\gamma$}
    \end{align*}

    Since $3/2 < \lambda$, the competitive ratio constraint cannot be tight anywhere, so the tail constraint must be tight everywhere by Theorem~\ref{thm:opt-tight-general}.
\end{proof}

Now we can analyze the solution structure of each $P_j$ for $j \leq i$.  For each such $j$, let $\hat t_j = \lfloor \hat t / \gamma^{j} \rfloor$. 

\begin{lemma} \label{lem:Pj-general-gamma}
    Let $t \in P_j$ for $1 \leq j \leq i$.
    \begin{itemize}
        \item If $t \in [\ell_j, \ell_j + \hat t_j - 1]$, then
        \begin{align*}
            f_t &= \sum_{t' = 1 + \gamma^j(t - \ell_j)}^{1 + \gamma^j(t - \ell_j) + (\gamma^j - 1)} f_{t'}
        \end{align*}
        \item If $t \in [\ell_j + \hat t_j + 1, r_j]$ then $f_t = 0$.
    \end{itemize}
\end{lemma}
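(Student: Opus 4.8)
The plan is to prove the lemma by induction on $j$, the engine being a \emph{one-level recursion} expressing $f$ on $P_j$ via $f$ on $P_{j-1}$, which falls out purely from the tail constraint being tight (Lemma~\ref{lem:mass-tight-general-gamma}). Writing $t=\ell_j+s$ with $0\le s\le|P_j|-1$, the first step is to show that for every $1\le j\le i$,
\[
f_{\ell_j+s}=\sum_{s''=\gamma s}^{\gamma s+\gamma-1} f_{\ell_{j-1}+s''},
\]
with all indices on the right inside $P_{j-1}$. Indeed, the tail constraint is tight at $t$ by Lemma~\ref{lem:mass-tight-general-gamma} (using $\ell_j>\hat t$ for $j\ge 1$, from Lemma~\ref{lem:first-delta-hit}); it is also tight at $t-1$, which for $t>\ell_j$ is again Lemma~\ref{lem:mass-tight-general-gamma}, while for $t=\ell_j$ we have $t-1=r_{j-1}$ with $r_{j-1}>\hat t$ (for $j\ge 2$ because $r_{j-1}\ge r_1>n/\gamma>\hat t$, and for $j=1$ because the estimate in the proof of Lemma~\ref{lem:first-delta-hit} gives $\hat t<r_0=(n-1)/\gamma$). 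Subtracting $\sum_{t'\in I_\gamma(t)}f_{t'}=\delta$ from $\sum_{t'\in I_\gamma(t-1)}f_{t'}=\delta$, and using the explicit form of $I_\gamma$ from Definition~\ref{def:bad_interval_def} together with the fact that $\gamma(t-1)-n+1\ge 0$ throughout $P_j$ for $j\ge 1$, the common portion cancels and what remains is exactly $f_t=\sum_{t'=\gamma(t-1)-n+2}^{\gamma t-n+1}f_{t'}$, a sum of $\gamma$ consecutive values; Claim~\ref{claim:rl-general-gamma} rewrites the index set $\{\gamma(t-1)-n+2,\dots,\gamma t-n+1\}$ as $\{\ell_{j-1}+\gamma s,\dots,\ell_{j-1}+\gamma s+\gamma-1\}\subseteq P_{j-1}$, the containment using $s\le|P_j|-1$ and $|P_{j-1}|=\gamma|P_j|$.

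With the one-level recursion in hand, I would prove by induction on $m\in\{1,\dots,i\}$ that $f_{\ell_m+s}=\sum_{p=\gamma^m s+1}^{\gamma^m(s+1)}f_p$ for all $0\le s\le|P_m|-1$; that is, the value of $f$ at position $s$ of $P_m$ equals the total mass of $f$ over the block of $\gamma^m$ consecutive positions $\{\gamma^m s+1,\dots,\gamma^m(s+1)\}$ of $P_0$. The base case $m=1$ is the one-level recursion together with $\ell_0=1$. For the inductive step, rewrite $f_{\ell_m+s}$ as $\sum_{s''=\gamma s}^{\gamma s+\gamma-1} f_{\ell_{m-1}+s''}$ by the one-level recursion, then apply the inductive hypothesis to each term (legitimate since $s''\le\gamma s+\gamma-1\le|P_{m-1}|-1$), replacing $f_{\ell_{m-1}+s''}$ by $\sum_{p=\gamma^{m-1}s''+1}^{\gamma^{m-1}(s''+1)}f_p$. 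As $s''$ runs through $\{\gamma s,\dots,\gamma s+\gamma-1\}$ these $\gamma$ index blocks, each of length $\gamma^{m-1}$, are consecutive and pairwise disjoint, with union exactly $\{\gamma^m s+1,\dots,\gamma^m(s+1)\}$; summing completes the step.

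Finally, I would read off the lemma by plugging the explicit description of $f$ on $P_0$ (Lemma~\ref{lem:P0-general-gamma}) into this block identity. If $t=\ell_j+s$ with $0\le s\le\hat t_j-1$, the block $\{\gamma^j s+1,\dots,\gamma^j(s+1)\}$ has largest element $\gamma^j(s+1)\le\gamma^j\hat t_j\le\hat t$, so every $f_p$ in it equals $\frac{\lambda-1}{n-1}\left(1+\frac{1}{n-1}\right)^{p-1}$, and since $\gamma^j s+1=1+\gamma^j(t-\ell_j)$ this is exactly the first bullet. If $t=\ell_j+s$ with $\hat t_j+1\le s\le|P_j|-1$, the block's smallest element is $\gamma^j s+1\ge\gamma^j(\hat t_j+1)+1\ge\hat t+2$ (using $\gamma^j\hat t_j=\gamma^j\lfloor\hat t/\gamma^j\rfloor\ge\hat t-\gamma^j+1$) and its largest element is $\gamma^j(s+1)\le\gamma^j|P_j|=(n-1)/\gamma$, so the block sits in $[\hat t+2,(n-1)/\gamma]$ where $f\equiv 0$ by Lemma~\ref{lem:P0-general-gamma}, giving $f_t=0$, the second bullet. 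The main obstacle is the first step: correctly identifying the symmetric difference between $I_\gamma(t)$ and $I_\gamma(t-1)$ through Claim~\ref{claim:rl-general-gamma}, and in particular handling the boundary $t=\ell_j$ where $t-1$ drops into $P_{j-1}$ and one must separately certify that the tail constraint is tight there (for $j=1$ this is precisely where $\hat t<(n-1)/\gamma$ is needed). Once that is in place, the tiling argument in the second step is conceptually the heart but goes through routinely.
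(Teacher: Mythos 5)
Your proof is correct and follows essentially the same route as the paper's: tightness of the tail constraint at consecutive times (Lemma~\ref{lem:mass-tight-general-gamma}) yields the one-level recursion $f_t = f_{\gamma t-(n-1)-(\gamma-1)}+\cdots+f_{\gamma t-(n-1)}$ with indices landing in $P_{j-1}$ via Claim~\ref{claim:rl-general-gamma}, and the tiling induction then produces the block identity over $P_0$. The only (harmless) organizational difference is that you establish the block identity on all of $P_j$ and read off the vanishing part directly from Lemma~\ref{lem:P0-general-gamma} (checking the block sits in $[\hat t+2,(n-1)/\gamma]$), whereas the paper runs a second induction for the second bullet; your explicit verification that the tail constraint is also tight at $t-1=r_{j-1}$ when $t=\ell_j$ is a boundary point the paper glosses over.
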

\begin{proof}
    We proceed by induction on $j$. We begin with the first part of the claim. 
    Consider any $1 \leq j \leq i$. 
    Lemma~\ref{lem:mass-tight-general-gamma} implies that for every $t \in P_j$, the tail constraint is tight at both $t$ and $t-1$.  This implies that 
    \begin{equation}
    \label{eq:recursion}
        f_t = \sum_{t' \in I_\gamma(t-1) \setminus I_\gamma(t)} f_{t'} =  f_{\gamma\cdot t-(n-1) - (\gamma - 1)} + \ldots + f_{\gamma \cdot t- (n - 1)}
    \end{equation}
    First observe that if $t \in P_j$ the entries on the right hand side of Equation \ref{eq:recursion} are in $P_{j-1}$. Let $w_t = t - \ell_j$ then we have:
        \[
        \gamma (\ell_j + w_t)  - (n-1) - (\gamma -1)  = \ell_{j-1} + \gamma w_t.
    \]
    by Claim~\ref{claim:rl-general-gamma}.
    Since $|P_{j-1}| = \gamma|P_j|$, we can conclude that $\ell_{j-1} + \gamma w_t \in P_{j-1}$. A similar calculation holds for the last point in the summation. 

    We will prove the Lemma by induction on $j$. Let $j = 1$ be the base case. Then, $\ell_0 = 1$ and $\ell_1 = (n-1)/\gamma + 1$, and by Claim~\ref{claim:rl-general-gamma}, the summation in the Lemma statement begins at:
    \[\ell_{j-1} + \gamma(t - \ell_j) =  \gamma \cdot t - (n-1) - (\gamma - 1),
    \]
    which is equivalent to the first term in Equation \ref{eq:recursion}. Combined with the fact that both sums carry on for $\gamma-1$ consecutive steps, this proves the base case. 

    For the inductive step, let $j > 1$, let $t \in P_j$ and let $t = \ell_j + w_t$. Then:
    \begin{align*}
        f_t &= \sum_{t' \in I_\gamma(t-1) \setminus I_\gamma(t)} f_{t'}\\
        &=  f_{\gamma\cdot t-(n-1) - (\gamma - 1)} + \ldots + f_{\gamma \cdot t- (n - 1)}\\
        & = f_{\ell_{j-1} + \gamma w_t} + \ldots + f_{\ell_{j-1} + \gamma w_t + \gamma-1} \\
        &= \sum_{t' = 1 + \gamma^{j-1} \gamma w_t}^{1+\gamma^{j-1} \gamma w_t + \gamma^{j-1} - 1} f_{t'} + \ldots + \sum_{t' = 1 + \gamma^{j-1}(\gamma w_t + \gamma - 1)}^{1 + \gamma^{j-1}(\gamma w_t + (\gamma - 1)) + \gamma^{j-1} -1} f_{t'}\\
        & = \sum_{t' = 1 + \gamma^j w_t}^{1 + \gamma^j w_t + \gamma^{j}-1} f_{t'}.
    \end{align*}
Here the penultimate equality follows by induction and the last line follows because of the summands represent disjoint and consecutive intervals. This proves the first claim of the lemma.

        We now consider the second claim of the lemma. The base case starts from $j = 0$, and the claim holds thanks to Lemma~\ref{lem:P0-general-gamma}; recall that $r_0 = (n-1) / \gamma$.
        Now suppose that $t \in [\ell_j + \hat t_j + 1, r_j]$.  Then $w_t = t - \ell_j \geq \hat t_j +1$.  
        So $\gamma w_t \geq \gamma(\hat t_j +1) = \gamma \lfloor \hat t / \gamma^j \rfloor + \gamma \geq \lfloor \hat t / \gamma^{j-1}  \rfloor + 1 = \hat t_{j-1}+1$.  Thus, we have 
        \begin{align*}
            \gamma t - (n-1) - (\gamma - 1)         
         - (\ell_{j-1} + \hat t_{j-1} + 1) 
         &= \gamma (\ell_j + w_t)   - (n-1) - (\gamma - 1)   - (\ell_{j-1} + \hat t_{j-1} + 1)\\            &= \gamma w_t - (\hat t_{j-1} + 1) \geq 0,            
        \end{align*}
        where the last equality follows from Claim~\ref{claim:rl-general-gamma}.                  %
         Further, we already showed above that $\gamma t - (n-1) - (\gamma-1), \ldots, \gamma t - (n-1)$ are all in $P_{j-1}$.
        Therefore, they are all  in $[\ell_{j-1} + \hat t_{j-1} + 1, r_{j-1}]$.  Thus by induction we have that $f_t 
 = f_{t\gamma - (n-1) - (\gamma-1)} + \ldots + f_{t\gamma - (n-1)}  = 0$ as claimed. \qedhere
\end{proof}

We can now combine this with the explicit formulas from Lemma~\ref{lem:P0-general-gamma} to give expressions for $f_t$ for $t \in P_j$.

\begin{lemma} \label{lem:explicit-general-gamma}
    Let $t \in P_j$.
    \begin{itemize}
        \item If $t = \ell_j$ then
        \[ f_t = \sum_{t'=1}^{\gamma^j} \frac{\lambda-1}{n-1} \left(1 + \frac{1}{n-1}\right)^{t'-1}. \]
        \item If $t \in [\ell_j+1, \ell_j + \hat t_j -1 ]$ then
        \[ f_t = \sum_{t' = \gamma^j(t-\ell_j)+1}^{\gamma^j (t-\ell_j+1) } \frac{\lambda-1}{n-1} \left(1 + \frac{1}{n-1}\right)^{t'-1}.  \]
        \item If $t \in [\ell_j + \hat t_j + 1, r_j]$ then $f_t = 0$.
    \end{itemize}
\end{lemma}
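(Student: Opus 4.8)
The plan is to treat this as a bookkeeping corollary of the two preceding structural lemmas: combine the explicit exponential description on $P_0$ (Lemma~\ref{lem:P0-general-gamma}) with the ``telescoping into $P_0$'' identity on $P_j$ (Lemma~\ref{lem:Pj-general-gamma}). First I would dispose of the base case $j = 0$, where $\ell_0 = 1$, $\gamma^0 = 1$, and $\hat t_0 = \hat t$: there the three bullets read $f_1 = \frac{\lambda-1}{n-1}$, then $f_t = \frac{\lambda-1}{n-1}\left(1+\frac{1}{n-1}\right)^{t-1}$ for $t \in [2, \hat t - 1]$, and $f_t = 0$ for $t \in [\hat t + 1, (n-1)/\gamma]$, all of which are immediate from Lemma~\ref{lem:P0-general-gamma}.

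For $j \geq 1$, I would invoke the first part of Lemma~\ref{lem:Pj-general-gamma}: for every $t \in [\ell_j, \ell_j + \hat t_j - 1]$,
\[
f_t = \sum_{t' = \gamma^j(t-\ell_j)+1}^{\gamma^j(t-\ell_j+1)} f_{t'}.
\]
Taking $t = \ell_j$ (so $t - \ell_j = 0$) gives $f_{\ell_j} = \sum_{t'=1}^{\gamma^j} f_{t'}$, while for $t$ in the interior range the sum runs over $t' \in [\gamma^j(t-\ell_j)+1,\ \gamma^j(t-\ell_j+1)]$. The one thing that needs checking is that every index $t'$ on the right-hand side lies in $\{1, \dots, \hat t\}$, since only there does Lemma~\ref{lem:P0-general-gamma} guarantee the closed form $f_{t'} = \frac{\lambda-1}{n-1}\left(1+\frac{1}{n-1}\right)^{t'-1}$. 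The smallest index appearing is $1$, and the largest occurs at $t = \ell_j + \hat t_j - 1$, where it equals $\gamma^j \hat t_j = \gamma^j \lfloor \hat t / \gamma^j \rfloor \leq \hat t$, as required. Substituting the exponential formula for each $f_{t'}$ then produces exactly the first two displayed expressions in the statement.

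For the last bullet, $t \in [\ell_j + \hat t_j + 1, r_j]$, there is nothing further to do: this is precisely the second part of Lemma~\ref{lem:Pj-general-gamma} for $j \geq 1$ and of Lemma~\ref{lem:P0-general-gamma} for $j = 0$. I do not expect any real obstacle here; the only mildly delicate point in the whole argument is the index arithmetic, namely matching the summation limits that come out of Lemma~\ref{lem:Pj-general-gamma} with those in the claim and confirming $\gamma^j \hat t_j \le \hat t$ so that the $P_0$ formula applies throughout the relevant range.
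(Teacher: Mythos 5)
Your proposal is correct and matches the paper's own proof essentially verbatim: both derive the first two bullets by substituting the explicit exponential form from Lemma~\ref{lem:P0-general-gamma} into the telescoping identity of Lemma~\ref{lem:Pj-general-gamma}, with the same key index check $\gamma^j(t-\ell_j+1) \leq \gamma^j \hat t_j = \gamma^j \lfloor \hat t/\gamma^j\rfloor \leq \hat t$, and both read off the third bullet directly from the second part of Lemma~\ref{lem:Pj-general-gamma}. (Your restatement of the $j=0$ range as $[2,\hat t-1]$ rather than $[2,\hat t]$ is a harmless slip, since Lemma~\ref{lem:P0-general-gamma} covers all $t\leq \hat t$.)
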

\begin{proof}
    The first and second cases follow directly from Lemmas~\ref{lem:P0-general-gamma} and~\ref{lem:Pj-general-gamma}, where we use the fact that since $t \leq \ell_j + \hat t_j -1$, we know that $\gamma^j(t - \ell_j + 1) \leq \gamma^j \hat t_j  = \gamma^j ( \lfloor \hat t / \gamma^j  \rfloor  ) \leq \hat t$, and hence the first case of Lemma~\ref{lem:Pj-general-gamma} applies to all indices in the sum.   The final case is directly from the second case of Lemma~\ref{lem:Pj-general-gamma}.
\end{proof}

This clearly implies that the beginning of $P_j$ is a $\gamma^j$ exponential:
\begin{corollary} \label{cor:big-exponentials-general-gamma}
    Let $t \in [\ell_j+1, \ell_t + \hat t_j - 1]$.  Then $f_t = \left(1 + \frac{1}{n-1}\right)^{\gamma^j} f_{t-1}$.
\end{corollary}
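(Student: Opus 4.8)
The plan is to obtain the corollary directly from the explicit formula in Lemma~\ref{lem:explicit-general-gamma} by a simple reindexing of the geometric sum; there is no genuine obstacle here, as the corollary is essentially a restatement of that lemma.

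First I would unify the first two bullets of Lemma~\ref{lem:explicit-general-gamma}. Evaluating the second formula at $t=\ell_j$ gives $\sum_{t'=1}^{\gamma^j}\frac{\lambda-1}{n-1}\left(1+\frac{1}{n-1}\right)^{t'-1}$, which is exactly the first formula, so for every $t\in[\ell_j,\ell_j+\hat t_j-1]$ one has the single expression
\[
f_t \;=\; \sum_{t'=\gamma^j(t-\ell_j)+1}^{\gamma^j(t-\ell_j+1)} \frac{\lambda-1}{n-1}\left(1+\frac{1}{n-1}\right)^{t'-1}.
\]
In particular, for $t\in[\ell_j+1,\ell_j+\hat t_j-1]$ both $t-1$ and $t$ lie in $[\ell_j,\ell_j+\hat t_j-1]$, so this formula applies to $f_t$ and to $f_{t-1}$ simultaneously.

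Next I would compare the two index windows. The window for $f_t$ is $[\gamma^j(t-\ell_j)+1,\ \gamma^j(t-\ell_j+1)]$ and the window for $f_{t-1}$ is $[\gamma^j(t-1-\ell_j)+1,\ \gamma^j(t-\ell_j)]$; both have length $\gamma^j$, and the first is precisely the second shifted up by $\gamma^j$. Substituting $s=t'-\gamma^j$ in the sum defining $f_t$ turns the summation range into exactly the window for $f_{t-1}$ while each exponent $t'-1$ becomes $s+\gamma^j-1$; factoring $\left(1+\frac{1}{n-1}\right)^{\gamma^j}$ out of the sum yields $f_t=\left(1+\frac{1}{n-1}\right)^{\gamma^j} f_{t-1}$, which is the claim.

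Finally I would note that the case $j=0$ is consistent, since then $\gamma^0=1$ and $\ell_0=1$, and the statement just recovers the exponential growth of $P_0$ already established in Lemma~\ref{lem:P0-general-gamma}. The only point requiring any care is the off-by-one bookkeeping in the summation limits, but this is purely mechanical and poses no real difficulty.
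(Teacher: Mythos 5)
Your proof is correct and follows essentially the same route as the paper's: apply the explicit formula from Lemma~\ref{lem:explicit-general-gamma} to both $f_t$ and $f_{t-1}$, shift the summation index by $\gamma^j$, and factor out $\left(1+\frac{1}{n-1}\right)^{\gamma^j}$. Your explicit remark that the first bullet of the lemma is just the second evaluated at $t=\ell_j$ (needed when $t-1=\ell_j$) is a detail the paper leaves implicit, but otherwise the arguments coincide.
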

\begin{proof}
    By Lemma~\ref{lem:explicit-general-gamma}, we know that 
    \begin{align*}
        f_{t} &= \sum_{t' = \gamma^j(t - \ell_j)+1}^{\gamma^j(t +1 - \ell_j)} \frac{\lambda-1}{n-1} \left(1 + \frac{1}{n-1}\right)^{t'-1} \\
        &= \left(1 + \frac{1}{n-1}\right)^{\gamma^j} \sum_{t' = \gamma^j(t - \ell_j)+1}^{\gamma^j(t +1 - \ell_j)} \frac{\lambda-1}{n-1} \left(1 + \frac{1}{n-1}\right)^{t'-1 - \gamma^j} \\
        &= \left(1 + \frac{1}{n-1}\right)^{\gamma^j} \sum_{t' = \gamma^j(t-\ell_j-1)+1}^{\gamma^j(t - \ell_j)} \frac{\lambda-1}{n-1} \left(1 + \frac{1}{n-1}\right)^{t'-1} \\
        &= \left(1 + \frac{1}{n-1}\right)^{\gamma^j} f_{t-1}
    \end{align*}
    as claimed.
\end{proof}

We finally quantify how large $n$ should be compared to $i$. Note that Lemma~\ref{lem:explicit-general-gamma} is well defined if and only if the third case interval $[\ell_j + \hat t_j + 1, r_j]$ is be non-empty, i.e., $\ell_j + \hat t_j + 1 \leq r_j$. Thus, we need the following:
\begin{equation}
    \label{eqn:valid-j}
\frac{(n-1)}{(\gamma - 1)} \cdot \frac{(\gamma^j -1)}{\gamma^j} +1  
 + \lfloor \hat t / \gamma^j \rfloor +1 \leq \frac{n-1}{\gamma-1} ( 1 - 1 / \gamma^{j+1})
\end{equation}

Since $\hat t < \frac{n}{\gamma}$ (Lemma~\ref{lem:first-delta-hit}) and $n-1$ is a power of $\gamma$, we have $\hat t \leq \frac{n-1}{\gamma}$. Thus, equation~(\ref{eqn:valid-j}) is satisfied if 
\begin{equation}
    \nonumber
    \label{eqn:valid-j2}
 \frac{n-1}{\gamma - 1} \cdot \left( 1- \frac{1}{\gamma^j}\right) +1  
 + \frac{n-1}{\gamma^{j+1}} +1 \leq \frac{n-1}{\gamma-1} \left( 1 - \frac{1}{\gamma^{j+1}}\right),
\end{equation}
when $\frac{n-1}{\gamma^{j+1}} \geq 1$. By simplifying the inequality we obtain, 
$n \geq \frac{2\gamma^{j+1}}{ \gamma-1} + 1$. Both inequalities hold when $n \geq 2 \gamma^{i+1} +1$.

\subsection{Continuous Case} \label{sec:continuous-general-gamma}

Recall that in the continuous setting all times are between $0$ and $1$, and it costs $1$ to buy.  This is equal to the limit of the discrete case if we first reparameterize so all times are between $0$ and $1$ (i.e., time $t$ becomes time $t/n$) and then take the limit as $n$ goes to infinity.  That is, we can first just ``rename'' each time $t$ to $t/n$, write the resulting expressions that arise from renaming, and then take the limit as $n$ goes to infinity.  

When we do this renaming for the above results, we first get that the interval $P_j$ is equal to 

\begin{align*}
P_j &= \left\{t : \frac{1}{n} \cdot \frac{(n-1)}{(\gamma - 1)} \cdot \frac{(\gamma^j -1)}{\gamma^j} + \frac{1}{n}  \leq t \leq \frac{1}{n} \cdot \frac{(n-1)}{(\gamma - 1)}\cdot \frac{(\gamma^{j+1} - 1)}{\gamma^{j+1}} \right\} & \text{for all } j \geq 0
\end{align*}

Similarly, $\ell_j$ is defined to be $\frac{1}{n} \cdot \frac{(n-1)}{(\gamma - 1)} \cdot \frac{(\gamma^j -1)}{\gamma^j} + \frac{1}{n}$ , and we define $\hat t$ as before (except now it is a number between $0$ and $1$) and $\hat t_j$ essentially as before, to $\lfloor \hat t n / \gamma^j\rfloor / n$.  Then with this new parameterization, Corollary~\ref{cor:big-exponentials-general-gamma} turns into the following lemma.

\begin{lemma} \label{lem:big-exponentials-reparamaterized-general-gamma}
    Let $t$ be an integer multiple of $1/n$ in the interval  $[\ell_j+(1/n), \ell_t + \hat t_j - (1/n)]$.  Then $f_t = \left(1 + \frac{1}{n-1}\right)^{\gamma^j} f_{t-1}$.
\end{lemma}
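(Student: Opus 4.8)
The plan is to obtain this lemma immediately from Corollary~\ref{cor:big-exponentials-general-gamma} by applying the renaming $t \mapsto t/n$ that defines the continuous parameterization. Recall that the continuous setting is exactly the discrete setting with every time index $s \in [n]$ relabeled as $s/n$, and that the quantities $\ell_j$, $\hat t$, and $\hat t_j$ in the continuous setting are precisely $1/n$ times their discrete counterparts (with $\hat t_j$ carrying an extra floor, as spelled out in the text just before the lemma). So the first step is to take a discrete time index $s$ in the discrete interval to which Corollary~\ref{cor:big-exponentials-general-gamma} applies, write its conclusion $f_s = \left(1 + \frac{1}{n-1}\right)^{\gamma^j} f_{s-1}$, and substitute $t = s/n$ everywhere.

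The only real work is checking that the interval endpoints transform correctly. The discrete interval in Corollary~\ref{cor:big-exponentials-general-gamma} is $[\ell_j + 1,\ \ell_j + \hat t_j - 1]$ in the discrete quantities (reading the ``$\ell_t$'' there as $\ell_j$); dividing through by $n$ gives the interval $[\ell_j/n + 1/n,\ \ell_j/n + \hat t_j/n - 1/n]$, and since the continuous $\ell_j$ equals the discrete $\ell_j$ divided by $n$ and the continuous $\hat t_j = \lfloor \hat t n/\gamma^j\rfloor/n$ equals the discrete $\hat t_j$ divided by $n$, this is exactly $[\ell_j + 1/n,\ \ell_j + \hat t_j - 1/n]$ in continuous notation, matching the lemma statement. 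Likewise $s-1$ relabels to $s/n - 1/n = t - 1/n$, so the conclusion becomes $f_t = \left(1 + \frac{1}{n-1}\right)^{\gamma^j} f_{t-1/n}$; the ``$f_{t-1}$'' in the statement should be read as $f$ evaluated at the previous grid point $t - 1/n$.

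I do not anticipate any genuine obstacle: all the mathematical content is carried by Corollary~\ref{cor:big-exponentials-general-gamma}, and what remains is purely a change of variables. The one point that deserves an explicit (but trivial) verification is consistency of the floor operations --- i.e., that the continuous $\hat t_j = \lfloor \hat t n/\gamma^j\rfloor/n$ really is the relabeling of the discrete $\hat t_j = \lfloor \hat t/\gamma^j\rfloor$ --- which follows directly from the fact that the continuous $\hat t$ is the discrete $\hat t$ divided by $n$, with no estimate required.
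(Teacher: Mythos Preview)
Your proposal is correct and matches the paper's approach exactly: the paper also derives this lemma directly from Corollary~\ref{cor:big-exponentials-general-gamma} by applying the $t \mapsto t/n$ reparameterization, and in fact does not give a separate proof at all. Your explicit verification that the interval endpoints and the floor in $\hat t_j$ transform correctly, and your remark that ``$f_{t-1}$'' must be read as the previous grid point $t-1/n$, are useful clarifications of details the paper leaves implicit.
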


A corollary of this lemma is that, if we consider two times $t' > t$ that are both integer multiples of $1/n$ in $[\ell_j+(1/n), \ell_t + \hat t_j - (1/n)]$, then $f_{t'} / f_t = \left(1 + \frac{1}{n-1}\right)^{(t'-t)n \gamma^j}$.  This is just because there are precisely $(t'-t)n$ integer multiples of $1/n$ between $t'$ and $t$.  

So in the continuous case, where we now take the limit as $n \rightarrow \infty$, we get that 
\[
\frac{f_{t'}}{f_t} = \lim_{n \rightarrow \infty} \left(1 + \frac{1}{n-1}\right)^{(t'-t)n \gamma^j} = e^{\gamma^j (t'-t)},
\]
simply because $\lim_{n \rightarrow \infty} \left(1+\frac{1}{n-1}\right)^n = e$.  Moreover, we now have a simple formula for $\hat t$, since after we reparameterize and take the limit as $n$ goes to $\infty$ in~\eqref{eq:hat-t-general-gamma} we get that $\hat t$ is the value  where $(\lambda-1) \int_0^{\hat t} e^t\, dt = \delta$, which is precisely equal to $\ln\left(1 + \frac{\delta}{\lambda-1}\right)$.  So this is our value of $\hat t$, and $\hat t_j = \hat t / \gamma^j$.  

Finally, we note that when we reparameterize and take the limit as $n$ goes to infinity of the final case of Lemma~\ref{lem:explicit-general-gamma}, we get that in the continuous setting $f(t) = 0$ for $t \in (\ell_j + \hat t_j, \ell_{j+1})$ Hence in the continuous setting we get the following corollary.

\begin{corollary} 
    \label{cor:continuous-general-gamma}
Let $\gamma \geq 2$ be an integer. 
    Let $\{P_j\}_{j \geq 0}$ be a partition of $(0, 1 / (\gamma - 1)]$, defined as $P_0 = (0, \frac{1}{\gamma})$ and  $P_j = (\frac{1}{\gamma-1} \cdot (1 - \frac{1}{\gamma^j}), \frac{1}{\gamma-1} \cdot (1 - \frac{1}{\gamma^{j+1}})]$ for all $j \geq 1$. Let $\ell_j$ be the start point of $P_j$.
    In the continuous version of the ski rental problem with a single tail constraint $(\gamma, \delta)$, the optimum solution $f(t)$ has the following structure in every $P_j$ such that $j 
 < 1 / (6\delta) - 1$:
    \begin{itemize}
        \item $f(t) = c_j \cdot e^{\gamma^j t}$ for $t$ in the interval $(\ell_j, \ell_j + \hat t_j]$, and
        \item $f(t) = 0$ for all $t \in (\ell_j + \hat t_j, \ell_{j+1}]$,
    \end{itemize}
    where $c_j$ and $\hat t_j <  \ell_{j+1} - \ell_j$ are some constants depending on $\delta$ and $j$.
    \end{corollary}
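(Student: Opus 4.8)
The plan is to derive the corollary from the discrete structural results of this section by the reparameterization $t \mapsto t/n$ (which rescales all times into $(0,1]$) followed by the limit $n\to\infty$, exactly as set up in Section~\ref{sec:continuous-general-gamma}. First I would verify that under this renaming the discrete objects converge to the claimed continuous ones: the $+1/n$ and $-1/n$ corrections in the definitions of $P_j$, $\ell_j$, $\hat t_j$ all vanish in the limit, so $P_0\to(0,1/\gamma]$ and $P_j\to(\tfrac{1}{\gamma-1}(1-\gamma^{-j}),\tfrac{1}{\gamma-1}(1-\gamma^{-(j+1)})]$, and these indeed partition $(0,1/(\gamma-1)]$ with $\ell_{j+1}-\ell_j=\gamma^{-(j+1)}$.

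For the exponential part I would invoke Corollary~\ref{cor:big-exponentials-general-gamma}, equivalently Lemma~\ref{lem:big-exponentials-reparamaterized-general-gamma}: if $t'>t$ are reparameterized grid points in the open region $(\ell_j,\ell_j+\hat t_j)$ then there are exactly $(t'-t)n$ grid points between them, so $f_{t'}/f_t=(1+\tfrac{1}{n-1})^{(t'-t)n\gamma^j}\to e^{\gamma^j(t'-t)}$ using $\lim_n(1+\tfrac{1}{n-1})^n=e$. Hence the limiting density equals $c_j\,e^{\gamma^j t}$ on $(\ell_j,\ell_j+\hat t_j]$ for a constant $c_j$, which is pinned down by $\lambda$, $\gamma$ and $j$ and so depends only on $\delta$ and $j$ once $\gamma$ is fixed (for $j=0$ this gives $c_0=\lambda-1$, matching the classical $e^t/(e-1)$ density in the unconstrained case). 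For $\hat t$ itself I would pass to the limit in~\eqref{eq:hat-t-general-gamma}: the sum $\sum_{t=1}^{\hat t}\tfrac{\lambda-1}{n-1}(1+\tfrac{1}{n-1})^{t-1}$ is a Riemann sum converging to $(\lambda-1)\int_0^{\hat t}e^u\,du=\delta$, giving the closed form $\hat t=\ln(1+\tfrac{\delta}{\lambda-1})$ and $\hat t_j=\lfloor\hat t n/\gamma^j\rfloor/n\to\hat t/\gamma^j$. The zero part is immediate from the third bullet of Lemma~\ref{lem:explicit-general-gamma}, which after reparameterization and the limit reads $f(t)=0$ for $t\in(\ell_j+\hat t_j,\ell_{j+1})$.

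Finally I would transfer the validity range. Lemmas~\ref{lem:mass-tight-general-gamma}, \ref{lem:Pj-general-gamma} and~\ref{lem:explicit-general-gamma} hold for all $j\le i$ as soon as $\delta<\tfrac{1}{2\gamma(i+1)}$ and $n\ge 2\gamma^{i+1}+1$; the second requirement is automatically met in the limit $n\to\infty$, leaving only a bound of the form $j\lesssim 1/(\gamma\delta)$ on the index (of which the stated $j<1/(6\delta)-1$ is a convenient instance). This same regime also guarantees $\hat t_j<\ell_{j+1}-\ell_j$: since $\hat t_j=\hat t/\gamma^j$ and $\ell_{j+1}-\ell_j=\gamma^{-(j+1)}$, the condition reduces to $\hat t=\ln(1+\tfrac{\delta}{\lambda-1})<1/\gamma$, which holds because $\lambda\ge e/(e-1)$ and $\delta$ is small. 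The main obstacle is making this limit step rigorous: one must argue that the rescaled discrete optimal solutions $f^{(n)}$ genuinely converge (say, weakly as measures, or pointwise on the relevant grids) to the optimal solution of the continuous problem, and that $\OPT_n\to\OPT_\infty$, so that the object these limiting formulas describe really is the continuous optimum. This can be done via continuity of the objective and of the bad-interval/competitiveness constraints plus a compactness argument on the space of purchase distributions; alternatively one can sidestep the interchange of limits by re-deriving Sections~\ref{sec:p0}--\ref{sec:pj} directly in the continuous model, where Theorem~\ref{thm:opt-tight-general} and Lemma~\ref{lem:interval-bounded-general-gamma} have clean analogues but the recursive induction of Lemma~\ref{lem:Pj-general-gamma} must be redone with integrals in place of sums. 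A minor secondary nuisance is bookkeeping open-versus-closed endpoints and the floor in $\hat t_j$, which affects the density only on a set of measure zero.
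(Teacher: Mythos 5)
Your proposal follows essentially the same route as the paper: reparameterize the discrete results by $t \mapsto t/n$, obtain the $e^{\gamma^j t}$ growth from Corollary~\ref{cor:big-exponentials-general-gamma}, the value $\hat t = \ln(1+\delta/(\lambda-1))$ from the limit of~\eqref{eq:hat-t-general-gamma}, and the zero regions from the third case of Lemma~\ref{lem:explicit-general-gamma}. You are in fact more careful than the paper, which asserts the discrete-to-continuous limit interchange without the compactness/convergence justification you correctly flag as the step needing rigor.
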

\section{Conclusion}

In this work we extended the classic ski rental problem by taking the tail risk into account.  While the problem has been studied for decades, and is perhaps the simplest online problem, there had been no previous investigation of how to balance expected performance and the risk of the randomized online algorithms performing worse than the deterministic option.  We gave a characterization theorem of the optimal purchasing distribution, and used this theorem to prove several surprising properties of the optimal distribution even under a single tail constraint.  At a high level, we showed that the optimal distribution has almost \emph{none} of the nice properties that occur without tail constraints: it is non-monotone, can alternate between regions of zero probability and non-zero probability arbitrarily many times, and can grow arbitrarily quickly even when continuous.  We also gave an explicit description of the special case of pure tail constraints, and developed an algorithm to compute the optimal purchase distribution efficiently.  

We hope that our work inspires further investigation of tail bounds in online algorithms.  As discussed, in online settings we cannot simply repeat an algorithm multiple times in order to convert a bound on the expectation into a bound on the tail, as we usually do in offline computational settings.  Yet surprisingly, there seems to be little work on tail bounds in online settings: even the ski rental problem, the most basic online setting of all, had not been investigated prior to this work!  What about other online settings where there are well-known and well-understood randomized algorithms, e.g., online matching, TCP aggregation, or the many variants of prophet inequalities?  Can we characterize the optimal randomized algorithms  for those problems in the presence of tail bounds?  Do those optimal algorithms differ significantly from the pure expectation setting, as they do for ski rental?

\section*{Acknowledgements}

We would like to thank Robert Kleinberg for fruitful discussions and comments on an earlier draft of this work.

\bibliographystyle{plain}
\bibliography{ski-rental}

\newpage
\appendix
\section{Visualising the Effect of Tail Constraints}
\label{app:figure}
We consider the problem with a single $(2, 0.31)$ tail constraint, set $n = 1000$ and plot three curves: the optimal solution, $f_t$, the expected competitive ratio $\alpha_f(t)$, and the probability of competitive ratio $\alpha_f(x)$ exceeding $\gamma = 2$. 

Visually the plot demonstrates Theorem \ref{thm:opt-tight-general} and Corollary \ref{cor:continuous-general-gamma}. Note that until the the purchase distribution falls to zero, the competitive ratio constraint is tight. This tightness is violated when the purchase distribution falls to $0$, which happens precisely when the tail risk constraint becomes tight. 

When the purchase distribution can be non-zero again (after time $\nicefrac{n}{2} = 500$), $f_t$ grows at $e^{2t}$ until the competitive ratio becomes tight, and the curve shifts to a single exponential, as described by Lemma \ref{lem:discrete-exponential}. The pattern repeats when the tail risk constraint becomes tight, with the recovery now at $e^{4t}$ at time $750$.

\begin{figure}[h!]
\centering
\includegraphics[width=0.55\textwidth]{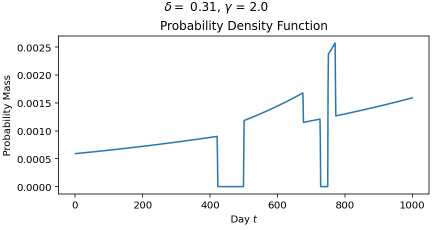}\\
(a) \\
\!\includegraphics[width=0.56\textwidth]{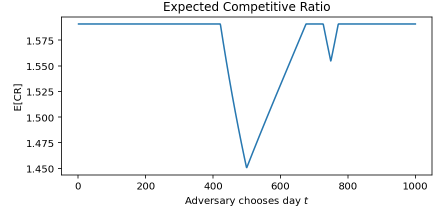}\\
(b) \\
\includegraphics[width=0.53\textwidth]{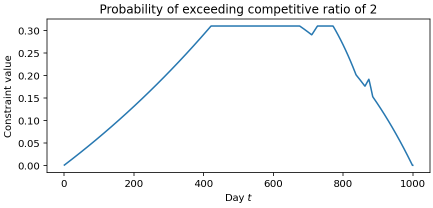}\\
(c) 
\caption{The effect of constraints of the optimal solution for $\delta = 0.31$ and $\gamma = 2$. (a) The purchase distribution, (b) the expected competitive ratio as a function of the adversary's choice, (c) the probability of exceeding $\gamma$ as a function of adversary's choice.
}
\label{fig:constraints}
\end{figure}
\section{The Randomized Algorithm Can Perform Worse Than the Deterministic Algorithm}
    \label{app:worse-than-deterministic}

For simplicity we consider the continuous version of the problem. Here, we show that the probability that the randomized algorithm that buys skis at time $t \in [0, 1]$ according to pdf $\pi$, $e^t / (e -1)$ has a worse competitive ratio with probability $1 / (\sqrt e +1)$ than the deterministic algorithm that buys skis at time 1.

As before, we reserve $x$ to denote the last time Alice skis. Let $T$ be the time Alice buys skis, sampled from the distribution $\pi$. In this case, $\opt = x$ and the algorithm pays $1 + T$ when $T \leq x$, and $x$ otherwise. Thus, we would like to know the following quantity, 
$$\max_{x \in [0, 1]} \underbrace{\Pr[ 1+ T \geq 2x \mbox{ and } T \leq x]}_{\textnormal{Let } h(x) 
   }$$

We consider two cases. 
\paragraph{Case i. $x \leq 1/2$.} In this case $h(x) = \Pr[0 \leq T \leq x] = \frac{1}{e - 1}\int_{t = 0}^{1/2} e^t dt = \frac{1}{\sqrt e +1 }$.
\paragraph{Case ii. $x \geq 1/2$.} In this case $h(x) =\Pr[2x - 1 \leq T \leq x] = \frac{1}{e-1} (e^x - e^{2x -1})$. Let $z = e^x$, then we have $z \in [e^{0.5}, e]$ and $h(x) = \frac{1}{e-1} ( z- z^2 / e)$, which is maximized when $z = e/ 2$, i.e., $x = 1/2$.

\smallskip
In both case, the maximum probability is $1 / (\sqrt e +1) \approx 0.3775$, which is achieved when $x = 1/2$.

\section{Non-Elementary Discontinuities} \label{app:lambert}
We now show that, at least in some cases, the optimal purchase distribution cannot be described using elementary functions.  We do this by showing that the location of certain discontinuities can only be described via the Lambert $W$ function.  For simplicity, we will work in the continuous setting.

Consider a single $(2,\delta)$ tail constraint, and let $\lambda = \opt$ be the optimal expected competitive ratio under this tail constraint.  For concreteness, think of $\delta$ as being something relatively large, e.g., $\delta=0.31$, in which case the optimal solution is given in Figure~\ref{fig:constraints}(a) (except that since we are in the continuous case both axes need to be rescaled). Let $f$ be this optimal solution.

It is not hard to see that the continuous version of Lemma~\ref{lem:discrete-exponential}, and particularly the continuous version of Lemma~\ref{lem:P0-general-gamma}, implies that $f$ is initially $f(t) = (\lambda-1)e^t$ (this is the same reparamaterization as from Section~\ref{sec:continuous-general-gamma}).  Note that if there is no tail constraint then we know that $\lambda = \frac{e}{e-1}$, and hence we get the function $f(t) = \frac{1}{e-1} e^t$, which is precisely the classical randomized solution.

Let $t_1$ be the time when the tail constraint first becomes tight, i.e., when $f(t)$ drops down to $0$.  This is the first time $t_1$ such that 
\[
\int_0^{t_1} (\lambda-1)e^t\, dt = \delta.
\]
If we solve for $t_1$, we get that
\begin{align*}
    & (\lambda-1) (e^{t_1} - 1) = \delta \\
    \implies & (\lambda-1) e^{t_1} = \delta+\lambda-1 \\
    \implies & e^{t_1} = \frac{\delta+\lambda-1}{\lambda-1} \\
    \implies & t_1 = \ln\left(\frac{\delta+\lambda-1}{\lambda-1}\right) = \ln\left( 1 + \frac{\delta}{\lambda-1}\right).
\end{align*}

Now we know from (the continuous version of) Theorem~\ref{thm:opt-tight-general} and the definition of $I_2(t)$ that $f(t) = 0$ for $t_1 < t \leq 1/2$ (which is also easy to see from Figure~\ref{fig:constraints}(a)).  At time $1/2 + dt$ we know from Theorem~\ref{thm:opt-tight-general} that either the tail or the competitiveness constraint must be tight, and since the competitive ratio at this point is strictly less than $\lambda$ (due to the fact that $f(t) = 0$ from $t_1$ to $1/2$), it must be the tail constraint that is tight.  An analysis similar to the base case of Lemma~\ref{lem:Pj-general-gamma} (but in the continuous setting) implies that the only way to keep the tail constraint tight is for $f(t) = 2 \cdot f(2t-1)$, which means that as long as the tail constraint remains tight, 
\[
f(t) = 2 \cdot(\lambda-1) e^{2t-1} = \frac{2(\lambda-1)}{e} e^{2t}.
\]

If the tail constraint remains tight through the interval $1/2 \leq t \leq 3/4$, then $f$ acts exactly as $P_1$ in Lemma~\ref{lem:Pj-general-gamma}: $f(t) = \frac{2(\lambda-1)}{e} e^{2t}$ until time $2t_1 - 1 = 2\ln\left(1 + \frac{\delta}{\lambda-1}\right)$, and then $f(t) = 0$ for $2t_1-1 \leq t \leq 3/4$.  However, if $\delta$ is large enough (e.g., $\delta = 0.31$), then the tail constraint \emph{will not} remain tight throughout the $[1/2, 3/4]$ interval: the competitiveness constraint will become tight before time $3/4$.  Let $t_2$ denote the time at which the competitiveness constraint becomes tight.  Now we can write an expression for $t_2$ by using (the continuous version of) Equation~\eqref{eq:alpha} and our above characterization of $f(t)$ for $t \leq t_2$, to get that $t_2$ is the first time at which
\begin{align*}
    \alpha_f(t_2) = \int_0^{t_1} \frac{1+t}{t_2} (\lambda-1) e^t\, dt + \int_{1/2}^{t_2} \frac{1+t}{t_2} \cdot \frac{2(\lambda-1)}{e} e^{2t}\, dt + 1 - \int_0^{t_1} (\lambda-1) e^t\, dt - \int_{1/2}^{t_2} \frac{2(\lambda-1)}{e} e^{2t}\, dt = \lambda
\end{align*}

Simplifying, we get that
\begin{align*}
    & \frac{1}{t_2} \int_0^{t_1} (1+t)e^t\, dt + \frac{2}{t_2 e} \int_{1/2}^{t_2} (1+t)e^{2t}\, dt - \int_{0}^{t_1} e^t\, dt - \frac{2}{e} \int_{1/2}^{t_2} e^{2t}\, dt = 1 \\
    \implies & \frac{1}{t_2} \left(t_1 e^{t_1}\right) + \frac{1}{2t_2 e} \left(\left(2t_2+1\right)e^{2t_2} - 2e\right) - \left(e^{t_1} - 1\right) - \frac{1}{e}\left(e^{2t_2}-e\right) = 1 \\
    \implies & \frac{1}{t_2}\left(t_1 e^{t_1}\right) + \frac{e^{2t_2}}{2t_2 e} -\frac{1}{t_2} - e^{t_1} = -1 \\
    \implies & \frac{1}{t_2}\left(t_1 e^{t_1}\right) + \frac{e^{2t_2}}{2t_2 e} -\frac{1}{t_2} = \frac{\delta}{\lambda-1} \\
    \implies & \frac{1}{t_2} \left( \left(1 + \frac{\delta}{\lambda-1}  \right) \ln \left( 1 + \frac{\delta}{\lambda-1} \right) + \frac{e^{2 t_2}}{2e} - 1\right) = \frac{\delta}{\lambda-1}
\end{align*}

Note the second term: $\frac{e^{2t_2}}{2t_2 e}$.  Since this does not cancel out with anything, when solving for $t_2$ we must invert this function, which gives us the Lambert W-function.  Hence $t_2$ is non-elementary. 
\section{Omitted Proofs}
    \label{app:omitted}

\subsection{From Section~\ref{sec:prelim}} \label{app:omitted-prelims}

\begin{proof}[Proof of  Lemma~\ref{lem:discrete-exponential}]
    Let's first prove the if direction.  To simplify notation, we will let $Z = \sum_{t=1}^{a-1} (n+t-1) f_t$, so $f_a = \frac{Z}{(a-1)(n-1)}$.  We proceed by induction over $x$ (or really over $x-a$).  
    \paragraph{Base case: $x=a$.}  We can use Equation~\ref{eq:alpha} to analyze the expected competitive ratios.
    \begin{align*}
        \alpha_f(a) &- \alpha_f(a-1) = \left(\frac{1}{a} \sum_{t=1}^a (n+t-1)f_t + 1 - \sum_{t=1}^a f_t\right) - \left(\frac{1}{a-1} \sum_{t=1}^{a-1}(n+t-1)f_t + 1 - \sum_{t=1}^{a-1} f_t \right) \\
        &= \left( \frac{1}{a} - \frac{1}{a-1} \right) \sum_{t=1}^{a-1} (n+t-1)f_t + \frac{1}{a} (n+a-1)f_a - f_a \\
        &= \frac{n-1}{a} f_a - \frac{1}{a(a-1)} Z \\
        &= \frac{1}{a(a-1)} Z - \frac{1}{a(a-1)} X \\
        &=0.
    \end{align*}
    Hence $\alpha_f(a) = \alpha_f(a-1)$ as desired.

    \paragraph{Inductive step.} Let $a < x \leq x'$, and suppose that that $\alpha_f(x-1) = \alpha_f(a-1)$.  Then we have that
    \begin{align*}
        \alpha_f&(x) - \alpha_f(x-1) = \left(\frac{1}{x} \sum_{t=1}^x (n+t-1)f_t + 1 - \sum_{t=1}^x f_t\right) - \left(\frac{1}{x-1} \sum_{t=1}^{x-1} (n+t-1)f_t + 1 - \sum_{t=1}^{x-1} f_t \right) \\
        &=\left(\frac{1}{x} - \frac{1}{x-1}\right) \sum_{t=1}^{x-1} (n+t-1)f_t + \frac{1}{x}(n+x-1) f_x - f_x \\
        &= \frac{n-1}{x} f_x - \frac{1}{x(x-1)} \sum_{t=1}^{x-1} (n+t-1) f_t \\
        &= \frac{n-1}{x} f_a \left(\frac{n}{n-1}\right)^{x-a} - \frac{1}{x(x-1)} \sum_{t=a}^{x-1} (n+t-1) f_a \left(\frac{n}{n-1}\right)^{t-a} - \frac{1}{x(x-1)} Z \\
        &= \frac{n-1}{x} f_a \left(\frac{n}{n-1}\right)^{x-a}  - \frac{1}{x(x-1)} f_a (n-1)\left( (x-1) \left(\frac{n}{n-1}\right)^{x-a} - a + 1 \right) - \frac{1}{x(x-1)} Z \\
        &= \frac{1}{x(a-1)} Z \left(\frac{n}{n-1}\right)^{x-a} - \frac{1}{x(x-1)(a-1)} Z \left( (x-1) \left(\frac{n}{n-1}\right)^{x-a} - a + 1 \right) - \frac{1}{x(x-1)} Z \\
        &= \frac{1}{x(a-1)} Z \left(\frac{n}{n-1}\right)^{x-a} - \frac{1}{x(a-1)} Z \left(\frac{n}{n-1}\right)^{x-a} + \frac{1}{x(x-1)} Z - \frac{1}{x(x-1)} Z \\
        &=0.
    \end{align*}
    Thus $\alpha_f(x) = \alpha_f(x-1)$, completing the induction and the proof of the if direction.  

    The only if direction now follows essentially trivially.  Suppose there were some other way of setting $f_x$ for $a \leq x \leq x'$ so that $\alpha_f(x) = \alpha_f(a-1)$ for all $a \leq x \leq x'$, and call these new values $f'$.  Let $\hat x$ be the  first time at which $f'_x \neq f_x$.  Then it is easy to see that $\alpha_{f'}(\hat x) \neq \alpha_f(\hat x) = \alpha_f(a-1)$ (where the final inequality is thanks to the ``if'' direction proved above.  This is a contradiction, and hence no such $f'$ can exist.  
\end{proof}

\subsection{From Section~\ref{sec:general-tail}} \label{app:omitted-general-tail}

\begin{proof}[Proof of Lemma~\ref{lem:bad-interval}]
    Let $f$ be a feasible distribution.  Then for all $i \in [k]$ and $x \in [n]$, we know that $\Pr_{t \sim f}[\alpha(t,x) > \gamma_i] \leq \delta_i$ since $f$ is feasible.  Hence $\sum_{t \in I_{\gamma_i}(x)} f_t \leq \delta$ by the definition of $I_{\gamma_i}(x)$ as required.  

    Now suppose that $\sum_{t \in I_{\gamma_i}(x)} f_t \leq \delta_i$ for all $i \in [k]$ and for all $x \in [n]$.  Then by the definition of $I_{\gamma_i}(x)$, we have that
    \[
    \Pr_{t \sim f}[\alpha(t,x) > \gamma_i] = \sum_{t \in I_{\gamma_i}(x)} f_t \leq \delta_i
    \]
    as required. 
\end{proof}

\begin{proof}[Proof of Lemma~\ref{lem:better_than_deterministic_soln}]
    Let $f$ be the distribution which puts mass 1 on day $n$ and mass 0 elsewhere.  This distribution satisfies all tail constraints since $\Pr_{t \sim f} [\alpha(t,x) > \gamma_i] \leq \Pr_{t \sim f} [\alpha(t,x) > 2-1/n] = 0 \leq \delta_i$.  Note that $\alpha_{f}(x) \leq 2-1/n$ for all $x \in [n]$.  We construct a new distribution $f'$ by moving $\epsilon>0$ mass from $f_n$ to $f_{n-1}$ and show that $\alpha_{f'}(x) < 2-1/n$.  Let $f'$ be the distribution given by
    \[
    f'_t =
    \begin{cases}
        1-\epsilon \quad & \text{if } t=n \\
        \epsilon \quad & \text{if } t=n-1 \\
        0 \quad & \text{otherwise}.
    \end{cases}
    \]
    To check the tail constraints we note that the only non-empty bad interval that matters corresponds to $x = n-1$ since $I_{\gamma_i}(n) = \emptyset$ for $\gamma_i \geq 2-1/n$ as we assume (Observation 
    \ref{obs:no-bad-n}).    
    Thus as long as we choose $\epsilon$ so that $\sum_{t \in I_{\gamma_i}(n-1)} f'_t = \epsilon \leq \delta_i$ for all $i\in [k]$ we have a feasible distribution.  Next, it can be shown that
    \[
    \alpha_{f'}(x) = \begin{cases}
        1+ \epsilon \quad & \text{if } x=n-1 \\
        2 + \frac{1-\epsilon}{n} \quad & \text{if } x=n \\
        1 \quad & \text{otherwise}.
    \end{cases}
    \]
    Thus if we take $\epsilon \in (0,\min_i \delta_i)$, we have $\alpha_{f'}(x) < 2-1/n$ for all $x$, which certifies that $\OPT < 2-1/n$.
\end{proof}

\begin{proof}[Proof of Lemma~\ref{lem:CR-n}]
    Since $f^*$ is optimal we know that $\alpha_{f^*}(n) \leq \max_{x \in [n]} \alpha_{f^*}(x) = \OPT$, so we just need to prove that $\alpha_{f^*}(n) \geq \opt$.  Suppose for contradiction that $\alpha_{f^*}(n) < \opt$.  Let $t_1 \in [n]$ be the smallest value of $x$ such that $f^*(x) > 0$ (note that $t_1 \leq n-1$ since otherwise $f^*$ is the classic deterministic solution, which we know is non-optimal by Lemma~\ref{lem:better_than_deterministic_soln}).  Obviously $\alpha_{f^*}(x) = 1$ for all $x < t_1$.  Let $\epsilon > 0$ be an extremely small value which we will set later.  Let $f$ be the distribution obtained from $f^*$ by moving $\epsilon$ mass from $t_1$ to $n$, i.e., 
    \[
    f_t = \begin{cases}
        f^*_t - \epsilon & \text{if } t = t_1 \\
        f^*_t + \epsilon & \text{if } t = n \\
        f^*_t & \text{otherwise}
    \end{cases} 
    \]
    Obviously $f$ is still a probability distribution over $[n]$.  Moreover, since $n \not\in I_{\gamma_i}(x)$ for all $i \in [k]$ and $x \in [n]$ due to Observation~\ref{obs:no-bad-n}, this new distribution is also feasible by Lemma~\ref{lem:bad-interval}.  

    Let $B= \{1, 2, \dots, t_1 - 1\}$ and let $M = \{t_1, t_1+1, \dots, n-1\}$.  It is easy to see that $\alpha_f(t) = \alpha_{f^*}(t) = 1 < \opt$ for all $t \in B$.  It is also easy to see that $\alpha_f(t) < \alpha_{f^*}(t) \leq \opt$ for all $t \in M$ because 
    $\alpha_f(t) - \alpha_{f^*}(t) = - \epsilon \frac{n + t_1 -1 -x}{t} < 0$.    
    %
    On the other hand, we have $\alpha_f(n) > \alpha_{f^*}(n)$ because 
    $\alpha_f(n) -  \alpha_{f^*}(n) = -\epsilon \frac{n + t_1 -1 - n}{n} + \epsilon \frac{n + n -1 - n}{n} > 0$. Since $\alpha_{f^*}(n) < \opt$ by assumption we can choose a small enough $\epsilon$ so that $\alpha_f(n) < \opt$.

    But now we have a feasible solution $f$ in which $\alpha_f(t) < \opt$ for all $t \in [n]$.  This contradicts the definition of $\opt$.  Thus $\alpha_{f^*}(n) = \opt$ as claimed.
\end{proof}

\begin{proof}[Proof of Theorem~\ref{thm:alg-optimal}]
    Let $f^*$ be an optimal solution.  We will prove by induction that $f_i = f^*_i$ for all $i \in [n]$, which clearly implies the theorem.  

    For the base case, it is not hard to see that $f_1 = \min(\min_{i \in [k]} \delta_i, \frac{\lambda-1}{n-1})$ is the only possible feasible value which satisfies Theorem~\ref{thm:opt-tight-general} for time $1$.  To see this, first suppose that $f^*_1 > \delta_i$ for some $i \in [k]$.  Then since $1 \in I_{\gamma_i}(1)$ we have that $\sum_{t \in I_{\gamma_i}(1)} f^*_t > \delta_i$, which implies by Lemma~\ref{lem:bad-interval} that $f^*$ is not feasible.  This is a contradiction, and hence $f^*_1 \leq \delta_i$.  Similarly, suppose that $f^*_1 > \frac{\lambda - 1}{n-1}$.  Then $\alpha_{f^*}(1) = 1-f^* + n f^*_1 = f^*(1) (n-1) + 1 > \lambda = \OPT$, which contradicts the optimality of $f^*$.  Hence $f^*_1 \leq f_1$.  On the other hand, suppose that $f^*_1 < f_1$.  Then $f^*_1 < \delta_i$ for all $i \in [k]$ and $\alpha_{f^*}(1) < \lambda$, which contradicts Theorem~\ref{thm:opt-tight-general} for $x=1$.  So $f^*_1 \geq f_1$, and thus $f^*_1 = f_1$.  
    
    For the inductive case, suppose that $f_t = f^*_t$ for all $t < j$.  Let $i \in [k]$ such that $j \in I_{\gamma_i}(j)$, and suppose that $f^*_j > \delta_i - \sum_{t \in I_{\gamma_i}(j) \setminus \{j\}} f_t$.  Then we have
    \begin{align*}
        \sum_{t \in I_{\gamma_i}(j)} f_t &= f_j +  \sum_{t \in I_{\gamma_i}(j) \setminus \{j\}} f_t >\delta_i - \sum_{t \in I_{\gamma_i}(j) \setminus \{j\}} f_t + \sum_{t \in I_{\gamma_i}(j) \setminus \{j\}} f_t = \delta_i ,
    \end{align*}
    which contradicts the feasibility of $f^*$ by Lemma~\ref{lem:bad-interval}.  So $f^*_j \leq \delta_i - \sum_{t \in I_{\gamma_i}(j) \setminus \{j\}} f_t$ for all $i \in [k]$
    
    Similarly, suppose that $f^*_j > \frac{j}{n-1} (\lambda-1) - \sum_{t=1}^{j-1} \left(1 - \frac{j-t}{n-1}\right) f_t$.  Then we have that
    \begin{align*}
        \alpha_{f^*}(j) &= \sum_{t=1}^j \frac{n+t-1}{j} f^*_t + \sum_{t=j+1}^n f^*_t \\
        &= \sum_{t=1}^j \frac{n+t-1}{j} f^*_t + 1 - \sum_{t=1}^j f^*_t \\
        &= 1 + \left( \frac{n+j-1}{j} - 1\right) f^*_j + \sum_{t=1}^{j-1} \left(\frac{n+t-1}{j} - 1\right)f^*_t \\
        &= 1 + \frac{n-1}{j} f^*_j + \sum_{t=1}^{j-1} \left(\frac{n+t-1 - j}{j} \right)f_t  \tag{induction} \\
        &> 1 + \frac{n-1}{j} \left(\frac{j}{n-1} (\lambda-1) - \sum_{t=1}^{j-1} \left(1 - \frac{j-t}{n-1}\right) f_t \right) + \sum_{t=1}^{j-1} \left(\frac{n+t-1 - j}{j}\right)f_t \\
        &= 1 + \lambda - 1 - \sum_{t=1}^{j-1} \frac{n-1 -j +t}{j} f_t + \sum_{t=1}^{j-1} \left(\frac{n+t-1 - j}{j}\right)f_t \\
        &= \lambda.
    \end{align*}
    This is a contradiction since $f^*$ is optimal, so $\alpha_{f^*}(j) \leq \OPT = \lambda$.  Hence $f^*_j \leq \frac{j}{n-1} (\lambda-1) - \sum_{t=1}^{j-1} \left(1 - \frac{j-t}{n-1}\right) f_t$.  Putting this together with the previous inequalities (for each $i \in [k]$), we get that $f^*_j \leq f_j$.

    Now suppose that $f^*_j < f_j$.  Then following the above series of inequalities but switching the $>$ to $<$ implies that $\alpha_{f^*}(j) < \lambda$.  And we also have, for every $i \in [k]$ such that $j \in I_{\gamma_i}(j)$, that
    \begin{align*}
        \sum_{t \in I_{\gamma_i}(j)} f_t &= f_j +  \sum_{t \in I_{\gamma_i}(j) \setminus \{j\}} f_t < \delta_i - \sum_{t \in I_{\gamma_i}(j) \setminus \{j\}} f_t + \sum_{t \in I_{\gamma_i}(j) \setminus \{j\}} f_t = \delta_i.
    \end{align*}
    But now we have a contradiction to Theorem~\ref{thm:opt-tight-general} for $x=j$.  Thus $f^*_j \geq f_j$.  And since we already proved that $f^*_j \leq f_j$, this implies that $f^*_j = f_j$ as claimed.      
\end{proof}

\section{Removing the Assumption on Knowing \texorpdfstring{$\opt$}{OPT}
} \label{sec:guessing_OPT}

Here we complete our algorithm by providing a binary search procedure for our guess of $\OPT$.  To do this we consider a slightly modified algorithm which doesn't explicitly try to enforce $\sum_{t=1}^n f_t = 1$.  Later, in Section~\ref{sec:relating_both_algs}, we will show how this algorithm relates to the algorithm in Section~\ref{sec:algorithm}.  Let $ALG(n,\{(\gamma_i, \delta_i)\}_{i=1}^k, \lambda')$ be the algorithm described below.  Here $\lambda'$ can be thought of as a guess of  $\OPT - 1$; we will be searching for a value of $\lambda'$ such that $\lambda' + 1 \approx \OPT$.

\begin{itemize}
    \item $f_1 = \min \left( \min_{i\in [k]} \delta_i, \frac{\lambda'}{n-1} \right)$
    \item For $x > 1$ set 
    \[
    f_x = \min \begin{cases}
        \min_{i \in [k]} \left(\delta_i - \sum_{t \in I_{\gamma_i}(x) \setminus \{x\}} f_t  \right) \\
        \frac{x}{n-1} \left( \lambda' - \sum_{t < x} \frac{n+t-x-1}{x} f_t \right)
    \end{cases}
    \]
\end{itemize}

Before providing the binary search procedure, we show some properties of the above algorithm which will be useful in analyzing correctness of the search procedure.  First, we show that the algorithm solves the following linear program.

\begin{equation*} \label{eqn:lp_lambda_prime}
\begin{array}{ll@{}ll}
      \text{Maximize} & \displaystyle \sum_{t=1}^n f_t \\
     & \displaystyle\sum_{t \leq x} \frac{n+t-x-1}{x} f_t \leq \lambda' & \quad \forall x \in [n] \\
     & \displaystyle\sum_{t \in I_{\gamma_i}(x)} f_t \leq \delta_i & \quad  \forall i \in [k], \ \forall x \in [n] \\
     & f_t \geq 0 & \quad \forall t \in [n]
\end{array}
\end{equation*}

We refer to this linear program as $LP(\lambda')$ and when the context is clear also use this to denote its optimal value.  The key to analyzing our binary search procedure is given in the following theorem which connects our algorithm to $LP(\lambda')$.  We defer the proof of this theorem to Section~\ref{sec:lp_lambda_prime_analysis}.

\begin{theorem} \label{thm:alg_solves_lp_lambda_prime}
Let $f$ be the output of $ALG(n, \{(\gamma_i, \delta_i)\}_{i=1}^k, \lambda')$, then $f$ is an optimal solution to $LP(\lambda')$.
\end{theorem}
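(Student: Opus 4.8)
The plan is to show two things about the vector $f$ returned by $ALG(n,\{(\gamma_i,\delta_i)\}_{i=1}^k,\lambda')$: that it is feasible for $LP(\lambda')$, and that $\sum_{t=1}^n f_t \geq \sum_{t=1}^n g_t$ for every feasible $g$. Together these give optimality. It is convenient to write $c^{(x)}_t = \frac{n+t-x-1}{x}$ for $1\le t\le x$, so the competitiveness constraint at $x$ is $\sum_{t\le x}c^{(x)}_t f_t \le \lambda'$; note $c^{(x)}_t\ge 0$ (since $x\le n$), that $c^{(x)}_t$ is nondecreasing in $t$ for fixed $x$, and that $c^{(x)}_x = \frac{n-1}{x}>0$. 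Also observe the elementary inequality $c^{(x)}_t < c^{(x-1)}_t$ for all $t<x$ (it reduces to $n+t>1$), and recall from Definition~\ref{def:bad_interval_def} that each bad interval is a contiguous suffix of $\{1,\dots,x\}$: $I_{\gamma_i}(x)=\{m_i+1,m_i+2,\dots,x\}$ for some integer $m_i\ge 0$ whenever it is nonempty, in which case $m_i\le x-1$. Note also that the algorithm's formula for $f_1$ agrees with its general formula with empty prefix sums, so steps $x=1$ and $x>1$ can be treated uniformly.

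\textbf{Feasibility.} The plan is an induction on the step $x$: after $f_x$ is set, $f_x\ge 0$ and every competitiveness and tail constraint supported on $\{1,\dots,x\}$ holds. The constraints at $x$ itself hold immediately because $f_x$ is defined as a minimum of exactly the corresponding upper bounds (a tail constraint at $x$ with $I_{\gamma_i}(x)=\emptyset$ is vacuous, e.g.\ $x=n$ by Observation~\ref{obs:no-bad-n}), and earlier constraints hold by the induction hypothesis. For $f_x\ge 0$ I would check each competing upper bound is nonnegative: using $I_{\gamma_i}(x)\setminus\{x\}\subseteq I_{\gamma_i}(x-1)$ and the tail constraint at $x-1$ gives $\delta_i-\sum_{t\in I_{\gamma_i}(x)\setminus\{x\}}f_t\ge 0$, and using $c^{(x)}_t<c^{(x-1)}_t$ for $t<x$ together with $f_t\ge 0$ and the competitiveness constraint at $x-1$ gives $\sum_{t<x}c^{(x)}_t f_t\le\lambda'$, so the competitiveness-based bound is nonnegative too. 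Since the all-zero vector is feasible and each $f_x$ is bounded above by its own competitiveness constraint, $LP(\lambda')$ has an attained optimum.

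\textbf{Optimality (prefix domination).} The crux is the claim that for every feasible $g$ and every $0\le x\le n$ we have $D_x := \sum_{t\le x}(f_t-g_t)\ge 0$; taking $x=n$ finishes the proof. I would induct on $x$ with base $D_0=0$. By construction $f_x$ attains one of its two upper bounds, so either (a) the competitiveness constraint at $x$ is tight, $\sum_{t\le x}c^{(x)}_t f_t=\lambda'$, or (b) some nonempty tail constraint is tight, $\sum_{t\in I_{\gamma_i}(x)}f_t=\delta_i$. In case (b), feasibility of $g$ gives $\sum_{t=m_i+1}^x(f_t-g_t)\ge 0$, hence $D_x = D_{m_i}+\sum_{t=m_i+1}^x(f_t-g_t)\ge 0$ since $D_{m_i}\ge 0$ by induction ($m_i\le x-1$). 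In case (a), feasibility of $g$ gives $\sum_{t\le x}c^{(x)}_t(f_t-g_t)\ge 0$, and summing by parts yields $\sum_{t=1}^x c^{(x)}_t(f_t-g_t) = c^{(x)}_x D_x + \sum_{t=1}^{x-1}\bigl(c^{(x)}_t-c^{(x)}_{t+1}\bigr)D_t$; since $c^{(x)}$ is nondecreasing each $c^{(x)}_t-c^{(x)}_{t+1}\le 0$ while $D_t\ge 0$ for $t\le x-1$ by induction, so $c^{(x)}_x D_x\ge 0$, and $c^{(x)}_x>0$ gives $D_x\ge 0$. This closes the induction, and combined with feasibility we conclude $f$ is optimal for $LP(\lambda')$.

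I expect the second part, and within it the summation-by-parts step of case (a), to be the main obstacle: it is exactly the place that exploits both structural features of $LP(\lambda')$ at once — that every constraint is supported on a prefix of the indices (so a tight bad interval is a suffix and $D_{m_i}$ is controlled by the induction hypothesis) and that the competitiveness coefficients $c^{(x)}_t$ are monotone in $t$ (so the Abel-summation sign analysis goes through). Without the monotonicity the prefix-domination argument would fail, and one would have to fall back on something like exhibiting an explicit dual feasible solution and invoking LP duality; the direct argument above seems shorter and more in the spirit of the rest of the paper.
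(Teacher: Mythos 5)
Your proof is correct, and the optimality half takes a genuinely different route from the paper's. The feasibility half coincides with the paper's Lemma~\ref{lem:lp_lambda_prime_feas} (same induction, same two observations that $I_{\gamma_i}(x)\setminus\{x\}\subseteq I_{\gamma_i}(x-1)$ and that $\frac{n+t-x-1}{x}<\frac{n+t-x}{x-1}$). For optimality, the paper first proves Lemma~\ref{lem:lp_lambda_prime_tightness}---a tightness characterization of LP optima, obtained by an exchange argument showing the set of indices with all constraints slack must be a suffix of $[n]$---and then identifies the algorithm's output with an LP optimum coordinate by coordinate. You instead prove directly that $f$ prefix-dominates every feasible $g$, i.e.\ $D_x=\sum_{t\le x}(f_t-g_t)\ge 0$ for all $x$, using the suffix structure of bad intervals when a tail bound is the binding one and Abel summation plus monotonicity of the coefficients $c^{(x)}_t=\frac{n+t-x-1}{x}$ when the competitiveness bound is binding. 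The Abel step checks out: with $D_0=0$ one gets $\sum_{t\le x}c^{(x)}_t(f_t-g_t)=c^{(x)}_xD_x+\sum_{t<x}\bigl(c^{(x)}_t-c^{(x)}_{t+1}\bigr)D_t$, the consecutive differences equal $-1/x<0$, the induction hypothesis makes that sum nonpositive, and $c^{(x)}_x=\frac{n-1}{x}>0$ yields $D_x\ge 0$. Your argument is shorter, avoids the exchange argument and any appeal to properties of an LP optimum, and yields the strictly stronger prefix-domination property; what it does not deliver is uniqueness of the optimum, which the paper's coordinate-wise identification gives for free and which mirrors Theorem~\ref{thm:alg-optimal}. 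One reading issue you resolved in the right way: the appendix writes the tail bound as a minimum over all $i\in[k]$, so a vacuous bound $\delta_i$ coming from an empty $I_{\gamma_i}(x)$ could in principle be the unique minimizer, in which case neither of your two cases applies (and the output would in fact fail to be LP-optimal); restricting to $i$ with $x\in I_{\gamma_i}(x)$, as in the Section~\ref{sec:algorithm} version of the algorithm and as your case analysis implicitly assumes, is clearly the intended reading.
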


This implies the following corollary which relates $LP(\lambda')$ back to our problem of finding the optimal distribution, allowing us to complete the search procedure.

\begin{corollary} \label{cor:alg_correctness}
    Let $f \gets ALG(n,\{(\gamma_i, \delta_i)\}_{i=1}^k, \lambda')$.  We have $\sum_t f_t \geq 1$ if and only if $\OPT \leq \lambda'+1$ 
\end{corollary}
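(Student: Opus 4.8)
The plan is to route everything through the linear program $LP(\lambda')$. By Theorem~\ref{thm:alg_solves_lp_lambda_prime}, the algorithm's output $f$ is an optimal solution of $LP(\lambda')$, so $\sum_t f_t$ equals the optimal value $LP(\lambda')$. Hence it suffices to prove the purely LP-level equivalence $LP(\lambda') \geq 1 \iff \OPT \leq \lambda'+1$. The bridge between the two sides is the identity coming from Equation~\eqref{eq:alpha}: for any probability distribution $g$ on $[n]$ and any $x \in [n]$ we have $\alpha_g(x) - 1 = \sum_{t \leq x} \frac{n+t-x-1}{x} g_t$, and every coefficient $\frac{n+t-x-1}{x}$ appearing here is nonnegative for $1 \leq t \leq x \leq n$. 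Combined with Lemma~\ref{lem:bad-interval}, this says that a probability distribution $g$ is feasible for $LP(\lambda')$ exactly when it is tail-feasible and $\alpha_g(x) \leq \lambda'+1$ for all $x$, in which case its LP objective value is $1$.

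For the direction $\OPT \leq \lambda'+1 \Rightarrow LP(\lambda') \geq 1$, I would take an optimal distribution $g^*$ for the original tail-constrained problem (what is elsewhere called $f^*$). Its tail constraints hold, and $\alpha_{g^*}(x) \leq \OPT \leq \lambda'+1$ for all $x$, so by the dictionary above $g^*$ is feasible for $LP(\lambda')$ with objective $\sum_t g^*_t = 1$; hence $LP(\lambda') \geq 1$.

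For the converse $LP(\lambda') \geq 1 \Rightarrow \OPT \leq \lambda'+1$, take an optimal solution $f$ of $LP(\lambda')$ (it exists, e.g.\ it is the algorithm's output), set $S = \sum_t f_t \geq 1$, and let $g = f/S$. Then $g \leq f$ pointwise and $\sum_t g_t = 1$, so $g$ is a probability distribution; it still satisfies all tail constraints (the mass on each bad interval only decreased) and all competitiveness constraints (the coefficients $\frac{n+t-x-1}{x}$ are nonnegative, $g \leq f$, and $\lambda' \geq 0$, which we may assume since the binary search only probes $\lambda' \approx \OPT - 1 > 0$). Therefore $\alpha_g(x) \leq \lambda'+1$ for every $x$, and so $\OPT \leq \max_x \alpha_g(x) \leq \lambda'+1$. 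Combining the two directions with $\sum_t f_t = LP(\lambda')$ from Theorem~\ref{thm:alg_solves_lp_lambda_prime} gives the corollary.

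I do not expect a genuine obstacle here: the structural work is carried by Theorem~\ref{thm:alg_solves_lp_lambda_prime} (which is assumed) and Lemma~\ref{lem:bad-interval}. The only points needing a little care are the correct reading of the LP competitiveness constraint as $\alpha_g(x) \leq \lambda'+1$ \emph{for probability distributions} (the identity above uses $\sum_t g_t = 1$, and a scaled solution is no longer a distribution until renormalized), and the harmless sign assumption $\lambda' \geq 0$ used when scaling down in the converse direction.
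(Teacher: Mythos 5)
Your proposal is correct and follows essentially the same route as the paper: both directions go through Theorem~\ref{thm:alg_solves_lp_lambda_prime}, normalizing the LP solution by $\sum_t f_t \geq 1$ to get a feasible distribution with competitive ratio at most $\lambda'+1$ in one direction, and checking that the optimal distribution $f^*$ is feasible for $LP(\lambda')$ with objective value $1$ in the other. The identity $\alpha_g(x)-1=\sum_{t\le x}\frac{n+t-x-1}{x}g_t$ and the nonnegativity observations you make are exactly the calculations the paper carries out inline.
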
 

\begin{proof}
    For the first direction, suppose that we have $F := \sum_t f_t \geq 1$.  Then define $f' = f/F$ and note that $f'$ is a distribution.  Next we show that $f'$ satisfies all tail constraints and has expected competitive ratio at most $\lambda'+1$, implying that $\OPT \leq \lambda'+1$.

    Fix any $i\in [k]$ and trip length $x$ and consider the associated tail constraint.  We have the following:
    \[
        \sum_{t \in I_{\gamma_i}(x)} f'_t = \sum_{t \in I_{\gamma_i}(x)} f_t/F  \leq \delta_i/F \leq \delta_i 
    \]
    The first inequality follows since from Theorem~\ref{thm:alg_solves_lp_lambda_prime} and the second inequality follows since $F \geq 1$.  Thus $f'$ satisfies all tail constraints

    Now consider a trip length $x$ and the expected competitive ratio under this trip length.  We have:
    \[
    \begin{split}
            \sum_{t \leq x} \frac{n+t-1}{x} f'_t + \sum_{t > x} f'_t  & = \sum_{t \leq x} \frac{n+t-1}{x} f'_t + 1 - \sum_{t\leq x} f'_t \\
            & = \sum_{t\leq x} \frac{n+t-x-1}{x} f'_t + 1 \\
            & = \frac{1}{F}\sum_{t\leq x} \frac{n+t-x-1}{x} f_t + 1 \\
            & \leq \frac{\lambda'}{F} +1 \leq \lambda' + 1
    \end{split}
    \]
    The first equality follows since $\sum_t f'_t = 1$.  The inequalities in the last line follow from Theorem~\ref{thm:alg_solves_lp_lambda_prime} and $F \geq 1$.

    For the other direction, let $f^*$ be an optimal distribution.  In this case the expected competitive ratio of $f^*$ is $\OPT \leq \lambda' + 1$.  We show that $f^*$ is feasible for $LP(\lambda')$ and so the optimal value of $LP(\lambda')$ is at least $\sum_t f^*_t \geq 1$, which by Theorem~\ref{thm:alg_solves_lp_lambda_prime} implies that $\sum_t f_t \geq 1$.

    To start, note that the tail constraints are included in $LP(\lambda')$ and these are satisfied by $f^*$.  What remains is to analyze the other constraints.  To this end, fix $x \in [n]$, and consider the first constraint in $LP(\lambda')$.  We have the following:
    \[ \begin{split}
        \sum_{t \leq x} \frac{n+t-x-1}{x} f^*_t & = \sum_{t \leq x} \frac{n+t-1}{x} f^*_t - \sum_{t \leq x} f^*_t \\
        & = \sum_{t \leq x} \frac{n+t-1}{x} f^*_t + \sum_{t > x} f^*_t -1 \\
        & \leq \OPT - 1 \leq \lambda'
    \end{split}\]
    The second line follows since $\sum_t f^*_t = 1$.  The third line follows since $f^*$ has expected competitive ratio at most $\OPT \leq \lambda'+1$.
\end{proof}

Given the algorithm above as a subroutine, we provide the binary search procedure below, which takes in an accuracy parameter $\epsilon > 0$.  At each step we maintain an interval $[\ell, u]$ such that $\OPT - 1 \in [\ell, u]$.  Initially, we can take $\ell = \frac{e}{e-1}-1$ and $u = 1-1/n$ if all $\gamma_i \geq 2-1/n$ (otherwise we can trivially take $u = n-1$).

\begin{itemize}
    \item $\lambda' \gets \frac{\ell+u}{2}$
    \item $f \gets ALG(n,\{(\gamma_i, \delta_i)\}_{i=1}^k, \lambda')$
    \item If $\sum_t f_t < 1$, then set $\ell \gets \lambda'$ and recurse.
    \item Otherwise, we must have that $\sum_t f_t \geq 1$. Proceed with the following
    \begin{itemize}
        \item Check if $u - \ell \leq \epsilon$.  In this case output $f' = f / \sum_t f_t$
        \item Otherwise recurse with $u = \lambda'$
    \end{itemize}
\end{itemize}

The correctness of the above is guaranteed by the  following theorem.  Below we use $\Gamma$ to denote the length of the initial search range.

\begin{theorem} \label{thm:binary_search}
    The binary search procedure above queries our algorithm $O(\log(\Gamma/\epsilon))$ times and outputs a feasible solution with expected competitive ratio at most $\OPT + \epsilon$.
\end{theorem}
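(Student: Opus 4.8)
The plan is to prove the two claims of Theorem~\ref{thm:binary_search} separately: the bound on the number of queries, and the quality of the returned distribution. Both rest on Corollary~\ref{cor:alg_correctness} (and, beneath it, Theorem~\ref{thm:alg_solves_lp_lambda_prime}), which makes the test $\sum_t f_t \geq 1$ behave like a clean monotone predicate: running $f \gets ALG(n,\{(\gamma_i,\delta_i)\}_i,\lambda')$ yields $\sum_t f_t \geq 1$ exactly when $\OPT \leq \lambda'+1$, and $\sum_t f_t < 1$ exactly when $\OPT > \lambda'+1$. So the procedure is simply a binary search for $\OPT-1$ against this predicate, and the work is to track a loop invariant and account for rounding.

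First I would establish the invariant that $\OPT - 1 \in [\ell,u]$ at the start of every recursive call. For the base case, Lemma~\ref{lem:better_than_deterministic_soln} gives $\OPT < 2-1/n$, hence $\OPT - 1 < 1 - 1/n = u$; and since dropping the tail constraints only enlarges the feasible set, $\OPT$ is at least the unconstrained discrete optimum $\big(1+\tfrac1n\big)^n/\big(\big(1+\tfrac1n\big)^n - 1\big) > \tfrac{e}{e-1}$, so $\OPT - 1 > \tfrac{e}{e-1}-1 = \ell$ (for the general case where some $\gamma_i < 2-1/n$ one uses the trivial bound $\OPT \le n$ and $u = n-1$). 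For the inductive step, if $\sum_t f_t < 1$ then Corollary~\ref{cor:alg_correctness} forces $\OPT - 1 > \lambda'$, so updating $\ell \gets \lambda'$ preserves the invariant; if $\sum_t f_t \geq 1$ then $\OPT - 1 \leq \lambda'$, so updating $u \gets \lambda'$ preserves it. In both cases the new interval has exactly half the length of the old one.

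For the query count: after $k$ recursive calls the interval has length $\Gamma/2^k$, so once $k = \lceil\log_2(\Gamma/\epsilon)\rceil = O(\log(\Gamma/\epsilon))$ we have $u - \ell \leq \epsilon$, and the procedure halts the next time it reaches the ``otherwise'' branch. It must reach that branch: if it stayed in the $\sum_t f_t < 1$ branch forever, $\ell$ would increase toward $u$ and eventually exceed $\OPT - 1$, contradicting the invariant — the only exception is the degenerate case where a queried $\lambda'$ exactly equals $\OPT-1$ (which would then equal $u$ forever after), and in that case we adopt the natural convention that the procedure also halts once $u-\ell\le\epsilon$, rerunning $ALG$ once with $\lambda' = u$ to get a solution with $\sum_t f_t \ge 1$; this adds only $O(1)$ queries. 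Hence the total is $O(\log(\Gamma/\epsilon))$. For the output, when the procedure halts it returns $f' = f/F$ with $F = \sum_t f_t \ge 1$, $f = ALG(n,\{(\gamma_i,\delta_i)\}_i,\lambda')$, and $u-\ell\le\epsilon$. Then $f'$ is a distribution, and by Theorem~\ref{thm:alg_solves_lp_lambda_prime} $f$ is feasible for $LP(\lambda')$, so $\sum_{t\in I_{\gamma_i}(x)} f'_t = F^{-1}\sum_{t\in I_{\gamma_i}(x)} f_t \le \delta_i/F \le \delta_i$ for all $i,x$, i.e.\ $f'$ satisfies every tail constraint; and, exactly as in the first direction of the proof of Corollary~\ref{cor:alg_correctness}, the expected competitive ratio at any $x$ equals $1 + F^{-1}\sum_{t\le x}\tfrac{n+t-x-1}{x} f_t \le 1 + \lambda'/F \le 1+\lambda'$. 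Since $\lambda' = \tfrac{\ell+u}{2} \le u \le \ell + \epsilon \le (\OPT-1)+\epsilon$ by the invariant, this is at most $\OPT+\epsilon$.

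The genuinely substantive content is all pushed into Theorem~\ref{thm:alg_solves_lp_lambda_prime} and Corollary~\ref{cor:alg_correctness} (that $ALG$ optimally solves $LP(\lambda')$, and that the rescaled solution's feasibility for the original problem is captured by $\sum_t f_t \ge 1$); granting those, the main thing to be careful about here is the termination argument — ruling out an infinite run of ``$\sum_t f_t < 1$'' iterations and handling the measure-zero tie $\lambda' = \OPT-1$ cleanly so that the $O(\log(\Gamma/\epsilon))$ bound is honest.
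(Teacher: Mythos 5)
Your proof is correct and follows essentially the same route as the paper's (two-sentence) argument: Corollary~\ref{cor:alg_correctness} turns the test $\sum_t f_t \geq 1$ into a monotone predicate in $\lambda'$, the search interval halves with each query, and the rescaled output inherits feasibility and an expected ratio of at most $1+\lambda' \leq \OPT + \epsilon$ from the invariant $\ell \leq \OPT - 1 \leq u \leq \ell + \epsilon$. You are in fact more careful than the paper: your observations that the procedure only halts inside the $\sum_t f_t \geq 1$ branch, and that the tie $\lambda' = \OPT - 1$ needs an explicit convention to guarantee termination, address details the paper's proof glosses over entirely.
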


\begin{proof}
    The bound on the number of queries follows since the size of the initial search range is $\Gamma$ and we reduce the length by a factor of 2 after each query to $ALG$ until the length of the interval is at most $\epsilon$.  The correctness follows from Corollary~\ref{cor:alg_correctness}.
\end{proof}

\subsection{Analyzing the Algorithm via \texorpdfstring{$LP(\lambda')$}{LP(lambda)}} \label{sec:lp_lambda_prime_analysis}

Now we return to the proving Theorem~\ref{thm:alg_solves_lp_lambda_prime}, which establishes that our modified algorithm solves $LP(\lambda')$. First, it is not hard to see that our algorithm produces a feasible solution.

\begin{lemma} \label{lem:lp_lambda_prime_feas}
    Let $f \gets ALG(n,\{(\gamma_i, \delta_i)\}_{i=1}^k, \lambda')$, then $f$ is feasible for $LP(\lambda')$.
\end{lemma}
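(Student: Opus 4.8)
The plan is to verify directly that the vector $f$ produced by $ALG$ satisfies every constraint of $LP(\lambda')$, by an induction on the loop index $x$ running from $1$ to $n$. The observation that makes this clean is that the constraint of $LP(\lambda')$ indexed by a given $x$ — whether the competitiveness inequality $\sum_{t\le x}\frac{n+t-x-1}{x}f_t\le\lambda'$ or a tail inequality $\sum_{t\in I_{\gamma_i}(x)}f_t\le\delta_i$ — depends only on $f_1,\dots,f_x$ (recall $I_{\gamma_i}(x)\subseteq[x]$), and these entries are never changed after step $x$. So it suffices to show that, immediately after $f_x$ is assigned, the invariant ``$f_x\ge 0$, and every $LP(\lambda')$ constraint indexed by $x$ holds'' is true; the $n$ invariants together then give feasibility. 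Throughout we use that $\lambda'\ge 0$, which holds in every invocation since the binary search never sets $\lambda'$ below $\frac{e}{e-1}-1>0$.

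The base case $x=1$ is immediate: $f_1=\min(\min_i\delta_i,\frac{\lambda'}{n-1})\ge 0$; the competitiveness inequality reads $(n-1)f_1\le\lambda'$, which holds by the second term of the min; and for each $i$ the bad interval $I_{\gamma_i}(1)$ is either empty or $\{1\}$, so the corresponding sum is $0$ or $f_1\le\min_i\delta_i\le\delta_i$. For the inductive step at $x\ge 2$, I would first record two elementary facts. (i) For $1\le t\le x-1$ one has $0\le\frac{n+t-x-1}{x}\le\frac{n+t-x}{x-1}$: nonnegativity holds because $n+t-x-1\ge n-x\ge 0$, and the second inequality follows by cross-multiplying and using $n+t-1\ge 0$ (writing $m=n+t-x$, it reduces to $m+x\ge 1$). (ii) Since $\gamma_i>0$, the ``sliding-window'' bad intervals satisfy $I_{\gamma_i}(x)\setminus\{x\}\subseteq I_{\gamma_i}(x-1)$: both sets have upper endpoint $x-1$, and the lower cutoff $\max(0,\gamma_i x-n+1)$ is weakly larger than $\max(0,\gamma_i(x-1)-n+1)$.

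Using (i) together with $f_t\ge 0$ (inductive hypothesis) and the competitiveness constraint at $x-1$ gives $\sum_{t<x}\frac{n+t-x-1}{x}f_t\le\sum_{t\le x-1}\frac{n+t-(x-1)-1}{x-1}f_t\le\lambda'$, so the second term in the definition of $f_x$ is nonnegative; using (ii) with $f_t\ge 0$ and the tail constraint at $x-1$ gives $\sum_{t\in I_{\gamma_i}(x)\setminus\{x\}}f_t\le\delta_i$, so the first term is nonnegative as well. Hence $f_x\ge 0$. The competitiveness constraint at $x$ then follows because $f_x\le\frac{x}{n-1}\!\left(\lambda'-\sum_{t<x}\frac{n+t-x-1}{x}f_t\right)$ rearranges — the coefficient of $f_x$ in the sum being $\frac{n-1}{x}$ — to $\sum_{t\le x}\frac{n+t-x-1}{x}f_t\le\lambda'$. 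For the tail constraint at $x$: if $I_{\gamma_i}(x)=\emptyset$ it is trivial; otherwise $x$ is the largest element of $I_{\gamma_i}(x)$, so $\sum_{t\in I_{\gamma_i}(x)}f_t=f_x+\sum_{t\in I_{\gamma_i}(x)\setminus\{x\}}f_t\le\delta_i$ by the first term of the min. This closes the induction, and since the $x$-th constraint of $LP(\lambda')$ is untouched after step $x$, the final $f$ satisfies all of them.

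I do not anticipate a genuine obstacle; the content is routine verification. The only points requiring a little care are the two bookkeeping facts (i) and (ii), and the observation that $x\in I_{\gamma_i}(x)$ exactly when $I_{\gamma_i}(x)\ne\emptyset$ (since $x$ is always the largest candidate index in the window), which is what lets the tail-constraint check split cleanly into the two cases ``$I_{\gamma_i}(x)$ empty'' and ``$x\in I_{\gamma_i}(x)$''.
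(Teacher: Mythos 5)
Your proof is correct and follows essentially the same route as the paper's: induction on $x$, establishing $f_x\ge 0$ by showing each term of the $\min$ is nonnegative via the monotonicity of the coefficients $\frac{n+t-x-1}{x}$ and the containment $I_{\gamma_i}(x)\setminus\{x\}\subseteq I_{\gamma_i}(x-1)$, with the $x$-indexed constraints then holding by construction. The only difference is that you spell out the cross-multiplication in fact (i) and the rearrangement showing the competitiveness constraint at $x$ holds, which the paper leaves implicit.
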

\begin{proof}
    We want to show that for all $x\in [n]$, all constraints in $LP(\lambda')$ associated with $x$ are satisfies by $f$.  We carry this out by induction on $x$.  For the base case, we have $f_1 = \min (\min_{i \in [k]} \delta_i, \frac{\lambda'}{n-1} ) \geq 0$.  Clearly, this satisfies all tail constraints for $x=1$.  For the other constraint at $x=1$, we must have $\frac{n-1}{1} f_1 \leq \lambda'$, which is also satisfied by our choice of $f_1$.

    Next consider $x>1$ and assume for induction that all constraints associated with $x' < x$ are satisfied.  The constraints $\sum_{t \leq x} \frac{n+t-x-1}{x} f_t \leq \lambda'$ and $\sum_{t \in I_{\gamma_i}(x)} f_t \leq \delta_i$ are satisfied by construction, thus it remains to show that $f_x \geq 0$.  Consider $a := \min_{i \in [k]} (\delta_i - \sum_{t \in I_{\gamma_i}(x) \setminus x} f_t)$.   For all $i \in [k]$ we have $I_{\gamma_i}(x)\setminus x \subseteq I_{\gamma_i}(x-1)$, so $\sum_{t \in I_{\gamma_i}(x) \setminus x} f_t \leq \sum_{t \in I_{\gamma}(x-1)} f_t \leq \delta_i$ by the induction hypothesis.  This implies that $a \geq 0$.  Now consider $b:=  \frac{x}{n-1}(\lambda' - \sum_{t < x} \frac{n+t-x-1}{x} f_t)$.  We have $\sum_{t < x} \frac{n+t-x-1}{x} f_t \leq \sum_{t \leq x-1} \frac{n+t-x}{x-1} f_t \leq \lambda'$ by the induction hypothesis, which implies $b \geq 0$.  Therefore $f_x = \min(a,b) \geq 0$.
\end{proof}

To complete the proof of Theorem~\ref{thm:alg_solves_lp_lambda_prime}, we need the following lemma which characterizes the optimal solution to $LP(\lambda')$.  At each $x$ at least one of the constraints must be tight in an optimal solution (similar to Theorem~\ref{thm:opt-tight-general}).

\begin{lemma} \label{lem:lp_lambda_prime_tightness}
    Let $f^*$ be an optimal solution to $LP(\lambda')$.  Then for each $x \in [n]$, at least one of the following holds:
    \begin{itemize}
        \item $\sum_{t\leq x} \frac{n+t-x-1}{x} f^*_t = \lambda'$, or
        \item there exists $i \in [k]$ such that $\sum_{t \in I_{\gamma_i}(x)} f_t = \delta_i$.
    \end{itemize}
\end{lemma}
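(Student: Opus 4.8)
The plan is to mirror the proof of Theorem~\ref{thm:opt-tight-general}: assume for contradiction that some day $x$ has \emph{both} its competitiveness constraint $\sum_{t \le x}\frac{n+t-x-1}{x}f^*_t < \lambda'$ and all its tail constraints $\sum_{t\in I_{\gamma_i}(x)}f^*_t < \delta_i$ strictly slack, pick witnesses $t_1 < t_2$, and shift a tiny amount of mass from $t_2$ to $t_1$. The one genuinely new feature is that the objective of $LP(\lambda')$ is $\sum_t f_t$ rather than $\alpha_f(n)$, so a mass-preserving shift does not by itself contradict optimality; it merely creates slack at day $n$, which is then exploited by a second, separate improvement there.

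I first make two preliminary reductions. If $\delta_i = 0$ for some $i$, then since $f^* \ge 0$ feasibility forces $\sum_{t\in I_{\gamma_i}(x)}f^*_t = 0 = \delta_i$ for every $x$, so the tail constraint for $i$ is tight at every $x$ and the lemma is immediate; hence assume $\delta_i > 0$ for all $i$. Second, I claim the competitiveness constraint at $x=n$, which simplifies to $\sum_t \frac{t-1}{n} f^*_t \le \lambda'$, is tight in any optimal $f^*$: because $\gamma_i \ge 2-1/n$ gives $I_{\gamma_i}(n) = \emptyset$, the tail constraints at $n$ are vacuous, and $f_n$ appears in no tail constraint and in no competitiveness constraint for $x<n$; so if the competitiveness constraint at $n$ were slack we could increase $f_n$ by a tiny $\eta>0$, remaining feasible while strictly increasing $\sum_t f_t$ — contradicting optimality. (This is the $LP$ analogue of Lemma~\ref{lem:CR-n}.)

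Now take $t_1$ to be an index at which both the competitiveness constraint and all tail constraints are slack, and let $t_2 > t_1$ be the smallest index at which \emph{some} constraint is tight. Such a $t_2$ exists: otherwise every $x \ge t_1$ has both kinds of constraint slack, in particular $x=n$, contradicting the previous paragraph. By minimality of $t_2$, every $x \in [t_1,t_2)$ has both kinds of constraint slack. As in Theorem~\ref{thm:opt-tight-general}, $f^*_{t_2} > 0$: if $f^*_{t_2}=0$, then using $I_{\gamma_i}(t_2)\subseteq I_{\gamma_i}(t_2-1)\cup\{t_2\}$ the tail constraints at $t_2$ are no larger than at $t_2-1$, and using the elementary inequality $\frac{n+t-t_2-1}{t_2}\le\frac{n+t-t_2}{t_2-1}$ the left-hand side of the competitiveness constraint at $t_2$ is no larger than at $t_2-1$; so both constraints at $t_2$ would be slack, contradicting the choice of $t_2$. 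Let $f$ be obtained from $f^*$ by moving a small $\epsilon>0$ of mass from $t_2$ to $t_1$. Feasibility of $f$ for small $\epsilon$ is checked exactly as in Theorem~\ref{thm:opt-tight-general}: constraints at $x<t_1$ are unchanged; constraints at $x\in[t_1,t_2)$ can only tighten by $O(\epsilon)$ but had positive slack; for $x\ge t_2$ the competitiveness constraint only loosens, and since each bad interval is a suffix of $[x]$, any $I_{\gamma_i}(x)$ containing $t_1$ also contains $t_2$, so its mass is unchanged or decreases. Since $\sum_t f_t = \sum_t f^*_t$ and $f$ is feasible, $f$ is also optimal.

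The final step is the twist. In $f$ the left-hand side of the competitiveness constraint at $n$ equals $\sum_t \frac{t-1}{n} f^*_t + \frac{\epsilon}{n}(t_1-t_2) = \lambda' - \frac{\epsilon}{n}(t_2-t_1) < \lambda'$ (using that this constraint was tight in $f^*$ by the preliminary claim); the tail constraints at $n$ remain vacuous. Applying the argument of the preliminary claim to $f$ — increasing $f_n$ by a sufficiently small $\eta>0$ — yields a feasible solution with objective $\sum_t f_t + \eta$ strictly exceeding the optimal value $LP(\lambda')$, a contradiction. Hence no day can have both kinds of constraint slack. I expect the routine feasibility bookkeeping for the shift (in particular confirming the tail constraints at $x\ge t_2$ behave correctly) to be the most tedious part; the only real new idea relative to Theorem~\ref{thm:opt-tight-general} is recognizing that, because the objective is $\sum_t f_t$, the mass-preserving shift must be chained with the separate "increase $f_n$" improvement to reach a contradiction with optimality.
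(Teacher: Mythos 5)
Your proof is correct and takes a genuinely different route from the paper's. The paper defines $B$ to be the set of days at which all competitiveness and tail constraints are simultaneously slack, shows by an iterated mass-shifting argument that $B$ can be driven to a suffix of $[n]$, and then increases $f_{\min B}$ to contradict optimality. Your proof replaces the iteration with a one-shot contradiction: you first prove the $LP(\lambda')$-analogue of Lemma~\ref{lem:CR-n} --- that the competitiveness constraint at $x=n$ must be tight at any optimum, since its left-hand side involves $f_n$ alone among the constraints at $x<n$ and the tail constraints at $n$ are vacuous, so slack there would allow increasing $f_n$ --- and then perform a single feasibility-preserving, objective-preserving shift of mass from $t_2$ to $t_1<t_2$. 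That shift strictly decreases $\sum_t \tfrac{t-1}{n}f_t$, reopening slack at day $n$ in a solution that is still optimal, so the same $f_n$-bump then yields the contradiction. Both proofs hinge on the same two elementary moves (a mass shift and a terminal coordinate increase), but your version is cleaner: it avoids the paper's iterative ``make $B$ a suffix'' argument and the attendant termination bookkeeping, and you are also more explicit than the paper in verifying $f^*_{t_2}>0$ (via the coefficient inequality $\tfrac{n+t-t_2-1}{t_2}\le\tfrac{n+t-t_2}{t_2-1}$) before subtracting mass from $t_2$, a step the paper's shifting construction implicitly relies on but does not spell out.
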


\begin{proof}
Let $f^*$ be an optimal solution to $LP(\lambda')$ and suppose for contradiction that the condition in the lemma does not hold.  Let $B \subseteq [n]$ be the set of $x$'s where the condition does not hold.  That is for all $x \in B$ and all $i \in [k]$ we have $\sum_{t \in I_{\gamma_i}(x)} f_t < \delta_i$ and $\sum_{t\leq x} \frac{n+t-x-1}{x} f_t < \lambda'$.  We claim that $B = \{x', x'+1,\ldots, n\}$ for some $x' \in [n]$.  If this is the case then we can construct a new solution:
\[
f'_t = \begin{cases}
    f^*_{x'} + \epsilon & \quad \text{ if } t=x' \\
    f^*_t & \quad \text{ otherwise}
\end{cases}
\]
where $\epsilon >0$ is small enough so that $f'$ remains feasible.  Since $B$ is a suffix of $[n]$, increasing $f'_{x'}$ only affects the constraints given by $B$ which are all slack so there exists such an $\epsilon > 0$ which retains feasibility.  However, now we have a contradiction to the optimality of $f^*$ since $\sum_t f'_t > \sum_t f^*_t$.  To complete the proof, we just need to show that the set $B$ is of the form $\{x',x'+1,\ldots,n\}$, i.e., a suffix.

Suppose that $B$ is not a suffix of $[n]$.  Let $t_1 = \min_{x \in B} x$ and let $t_2 = \min_{x \notin B, x > t_1}x$.  Since we assume that $B$ is not a suffix, $t_2$ is well-defined. Our goal is to move $\epsilon > 0$ mass from $t_2$ to $t_1$ such that feasibility is preserved and at least one constraint becomes tight at $t_1$.  After this transformation we will have $t_1 \notin B$ and $t_2 \in B$.  Note that optimality w.r.t. $LP(\lambda')$ is also preserved since this leaves the sum unchanged.  To this end, define a new solution $f'_t$ as follows:
\[
f'_t = \begin{cases}
    f^*_{t_1} + \epsilon & \quad \text{ if } t=t_1 \\
    f^*_{t_2} - \epsilon & \quad \text{ if } t=t_2 \\
    f^*_t & \quad \text{ otherwise}
\end{cases}
\]
for some $\epsilon > 0$ to be chosen later.  It can be easily seen that $\sum_t f'_t = \sum_t f^*_t$, so $f'$ is optimal for $LP(\lambda')$ if $f^*$ is optimal.  We want to show that there exists some $\epsilon > 0$ such that $f'$ is feasible.  There are several cases to check.  First, if $x < t_1$, then the constraints at $x$ aren't affected by our new solution since $f'_t = f^*_t$ for all $t < t_1$.  So $f'$ is feasible for these constraints.  Next consider $x \in \{t_1,t_1+1,\ldots, t_2-1\}$.  By definition of $t_2$, all such $x$ are in $B$ so constraints associated with such $x$'s are slack in $f^*$.  We need to choose $\epsilon > 0$ so that
\[ \sum_{t \leq x} \frac{n+t-x-1}{x} f'_t = \sum_{t \leq x} f^*_t + \frac{n+t_1-x-1}{x} \epsilon \leq \lambda' \]
 and
\[ \sum_{t \in I_{\gamma_i}(x)} f'_t \leq \sum_{t \in I_{\gamma_i}(x)} f^*_t + \epsilon \leq \delta_i \]
for all $i \in [k]$.  Since all such $x \in B$, this is possible.
Finally consider $x \geq t_2$.  We have that
\[ \sum_{t \leq x} \frac{n+t-x-1}{x} f'_t = \sum_{t \leq x} \frac{n+t-x-1}{x} f^*_t + \epsilon \left( \frac{n+t_1-x-1}{x} - \frac{n+t_2-x-1}{x} \right) < \lambda' \]
since $t_1 < t_2$, and
\[ \sum_{t \in I_{\gamma_i}(x)} f'_t \leq \sum_{t \in I_{\gamma_i}(x)} f^*_t + \epsilon - \epsilon \leq \delta_i \]
for all $i \in [k]$.  Note that we may choose $\epsilon >0$ so that at least one constraint at $t_1$ becomes tight (since $\frac{n+t-x-1}{x} \leq \frac{n+t-t_1-1}{t_1}$ for $x \geq t_1$) and the constraints at $t_2$ become slack.  This completes the construction.  At this point, if we update $B$ to be the set of $x$'s that have slack constraints in $f'$, then either $B$ is a suffix or we can repeat the above construction until this is the case, completing the proof.
\end{proof}

Given the previous lemmas, we can complete the proof of Theorem~\ref{thm:alg_solves_lp_lambda_prime}.

\begin{proof}[Proof of Theorem~\ref{thm:alg_solves_lp_lambda_prime}]
    By Lemma~\ref{lem:lp_lambda_prime_feas}, the solution constructed by our algorithm is feasible for $LP(\lambda')$.  Let $f^*$ be an optimal solution to $LP(\lambda')$, we will show by induction that $f_x = f^*_x$ for all $x \in [n]$.

    For the base case, consider $x=1$ and note that in order to be feasible, we must have $f^*_1 \leq f_1 = \min( \min_{i \in [k]} \delta_i, \frac{\lambda'}{n-1})$.  If $f^*_1 < f_1$, then $f^*$ violates Lemma~\ref{lem:lp_lambda_prime_tightness} at $x=1$.  Thus we must have $f^*_1 = f_1$.

    For $x>1$ assume for induction that $f^*_t = f_t$ for all $t < x$.  First, we show that $f^*x \leq f_x$.  By feasibility of $f^*$ and the induction hypothesis, we have that
    \[  \lambda' \geq \sum_{t \leq x} \frac{n+t-x-1}{x} f^*_t = \sum_{t < x} \frac{n+t-x-1}{x} f_t + \frac{n-1}{x}f^*_x \]
    and 
    \[ \delta_i \geq \sum_{t \in I_{\gamma_i}(x)} f^*_t = \sum_{t \in I_{\gamma_i}(x) \setminus x} f_t + f^*_x \]
    for all $i \in [k]$, which together imply that $f^*_x \leq f_x$.  Now suppose that $f^*_x < f_x$, then by induction $f^*$ has all constraints associated with $x$ slack, violating Lemma~\ref{lem:lp_lambda_prime_tightness}.  Thus $f^*_x = f_x$, completing the proof.
\end{proof}

\subsection{Relating the Modified Algorithm Back to Section~\ref{sec:algorithm}} \label{sec:relating_both_algs}

This section relates the original algorithm we presented in Section~\ref{sec:algorithm} which assumed perfect knowledge of $\OPT$ to the modified algorithm we presented in Section~\ref{sec:guessing_OPT}.  In the following, let $f^1$ be the output of the algorithm in Section~\ref{sec:algorithm} when run with $\lambda = \OPT$ and let $f^2$ be the output of the algorithm in Section~\ref{sec:guessing_OPT} run with $\lambda' = \OPT-1$.  We have the following proposition.

\begin{claim}
If $f^1$ and $f^2$ are defined as in the paragraph above  then $f^1 = f^2$.
\end{claim}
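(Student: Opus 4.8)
The plan is to prove the claim indirectly: rather than matching the two update rules coordinate by coordinate, I would show that $f^2$ is \emph{itself} an optimal solution to the tail‑constrained problem, and then invoke the uniqueness half of Theorem~\ref{thm:alg-optimal}, which says the optimal solution is unique and equals $f^1$.

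\textbf{Normalization.} Applying Corollary~\ref{cor:alg_correctness} with $\lambda' = \OPT - 1$ (so that $\OPT \le \lambda'+1$ holds trivially) gives $F := \sum_{t} f^2_t \ge 1$. By Theorem~\ref{thm:alg_solves_lp_lambda_prime}, $f^2$ is feasible for $LP(\lambda')$, so $f^2_t \ge 0$ for all $t$, $\sum_{t \in I_{\gamma_i}(x)} f^2_t \le \delta_i$ for all $i,x$, and $\sum_{t \le x} \frac{n+t-x-1}{x} f^2_t \le \lambda'$ for all $x$. Let $g := f^2 / F$. Then $g$ is a probability distribution; it still satisfies every tail constraint, since each left‑hand side is scaled down by $F \ge 1$; and by Equation~\eqref{eq:alpha}, $\alpha_g(x) = 1 + \frac{1}{F}\sum_{t\le x}\frac{n+t-x-1}{x} f^2_t \le 1 + \lambda'/F$ for every $x$. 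Since $\OPT$ is the minimum worst‑case competitive ratio over feasible distributions, $\OPT \le \max_x \alpha_g(x) \le 1 + \lambda'/F = 1 + (\OPT-1)/F$, which rearranges to $(\OPT-1)(1 - 1/F) \le 0$. As $\OPT \ge \frac{e}{e-1} > 1$ (tail constraints can only increase the optimum beyond the unconstrained value), this forces $F \le 1$; combined with $F \ge 1$ we get $F = 1$.

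\textbf{Conclusion.} Hence $f^2 = g$ is already a probability distribution, it satisfies all tail constraints, and $\max_x \alpha_{f^2}(x) \le 1 + \lambda' = \OPT$, so $f^2$ attains the minimum and is an optimal solution in the sense of Section~\ref{sec:characterization}. By the uniqueness assertion of Theorem~\ref{thm:alg-optimal}, $f^2 = f^1$, proving the claim.

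\textbf{Main obstacle.} The only delicate step is establishing $F = 1$: the inequality $F \le 1$ genuinely uses the minimality of $\OPT$ (a sub‑distribution feasible for $LP(\OPT-1)$ with mass strictly above $1$ would, after renormalization, beat $\OPT$), together with $\OPT > 1$; everything else is bookkeeping with Equation~\eqref{eq:alpha} and the already‑established Theorems~\ref{thm:alg-optimal} and~\ref{thm:alg_solves_lp_lambda_prime} and Corollary~\ref{cor:alg_correctness}. A more computational alternative would be to induct directly on $x$: the first coordinates agree since $\lambda'/(n-1) = (\lambda-1)/(n-1)$, the competitiveness term agrees since $\frac{x}{n-1}\big(\lambda' - \sum_{t<x}\frac{n+t-x-1}{x} f_t\big) = \frac{x}{n-1}(\lambda-1) - \sum_{t=1}^{x-1}\big(1 - \frac{x-t}{n-1}\big) f_t$, and the tail terms agree because $I_{\gamma_i}(x)=\emptyset$ whenever $x\notin I_{\gamma_i}(x)$ and such a constraint $i$ never involves any coordinate $\ge x$; but ruling out that these extra vacuous terms ever bind in the Section~\ref{sec:guessing_OPT} minimum seems to route through optimality again, so the indirect argument above is cleaner.
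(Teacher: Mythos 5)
Your proof is correct, but it takes a genuinely different route from the paper's. The paper's own proof is a one-line direct argument: with $\lambda = \OPT$ and $\lambda' = \OPT - 1$, the two update rules compute the same quantity at every step (the competitiveness terms match by the algebraic identity you record, $1 - \frac{x-t}{n-1} = \frac{n+t-x-1}{n-1}$, and the tail terms match once one reads the Section~\ref{sec:guessing_OPT} minimum as ranging only over constraints with nonempty bad interval, since $I_{\gamma_i}(x) = \emptyset$ exactly when $x \notin I_{\gamma_i}(x)$). You instead argue indirectly: Corollary~\ref{cor:alg_correctness} gives $F = \sum_t f^2_t \geq 1$, and your renormalization argument — that $F > 1$ would yield a feasible distribution beating $\OPT$ — forces $F = 1$, whence $f^2$ is itself an optimal distribution and uniqueness in Theorem~\ref{thm:alg-optimal} finishes the job. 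The trade-off is real: the paper's route is shorter and purely syntactic but quietly depends on the two minimizations having the same effective index set (the degenerate $\delta_i$ terms you flag at the end), whereas your route leans on the heavier, already-proved machinery of Theorems~\ref{thm:alg-optimal} and~\ref{thm:alg_solves_lp_lambda_prime} and sidesteps the coordinate-by-coordinate comparison entirely; the observation that $F = 1$ exactly at $\lambda' = \OPT - 1$ is a clean byproduct the paper does not state. Note only that your argument inherits whatever reading of the modified algorithm makes Theorem~\ref{thm:alg_solves_lp_lambda_prime} true, so it does not fully escape the index-set subtlety — it just relocates it into a citation.
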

\begin{proof}
This follows by simply noting that when $\lambda = \OPT$ and $\lambda' = \OPT-1$, then each step of both algorithms compute the same quantity.
\end{proof}

\section{Special Case of Single Tail Constraint: \texorpdfstring{$\delta = 0$}{delta=0}} \label{sec:pure}
A particularly simple setting is where there is only a single tail constraint, and it is \emph{pure}: $\delta = 0$.  In other words, we must have probability $0$ of having competitive ratio worse than $\gamma$.  In this setting we will not only be able to explicitly write the optimal solution, but will additionally be able to give $\opt$ as an explicit function of $\gamma$.  The exact expressions are given in the following two theorems.  We assume for simplicity that $\gamma - 1$ divides $n-1$, and to simplify notation we will let $\lambda = \opt$.

\begin{theorem} \label{thm:pure-opt}
    Suppose that there is a single tail constraint $(\gamma, 0)$.  Then the best competitive ratio achievable under that tail constraint is exactly
    \[
        \lambda = \opt = 1+ \frac{\gamma-1}{1 + \frac{\gamma}{n-1}  \left(n\left( \left( \frac{n}{n-1}\right)^{(n-1)\left(1 - \frac{1}{\gamma-1}\right)} - 1\right) + 1\right)} 
    \]
\end{theorem}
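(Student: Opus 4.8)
The plan is to combine the packing reformulation (Lemma~\ref{lem:bad-interval}) with the algorithmic characterization (Theorem~\ref{thm:alg-optimal}) and the ``exponential keeps $\alpha$ constant'' lemma (Lemma~\ref{lem:discrete-exponential}) to nail down the exact shape of the optimal distribution, and then read off $\lambda=\opt$ from the single remaining scalar equation $\sum_t f_t = 1$.

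First I would locate the forced zeros. By Lemma~\ref{lem:bad-interval}, a pure tail constraint $(\gamma,0)$ is equivalent to $\sum_{t\in I_\gamma(x)}f_t=0$ for every $x$, i.e.\ $f_t=0$ whenever $t$ lies in some bad interval. A one-line computation from Definition~\ref{def:bad_interval_def} shows that $t\in I_\gamma(x)$ for some $x$ if and only if $(\gamma-1)t<n-1$, i.e.\ $t<\frac{n-1}{\gamma-1}=:t_1$; moreover $I_\gamma(x)=\emptyset$ for every $x\ge t_1$, since $I_\gamma(x)$ is nonempty only when $(\gamma-1)x<n-1$. Hence the feasible distributions are exactly those supported on $\{t_1,t_1+1,\dots,n\}$, and automatically $\alpha_f(x)=1$ for all $x<t_1$.

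Next I would extract the shape of the optimum. By Theorem~\ref{thm:alg-optimal} the optimal $f^*$ is the output of the algorithm of Section~\ref{sec:algorithm} run with $\lambda=\opt$. Since no bad interval contains any index $j\ge t_1$, the tail term in the algorithm's update is vacuous for all such $j$, so the algorithm sets $f^*_j$ to exactly the value making $\alpha_{f^*}(j)=\opt$; in particular $\alpha_{f^*}(x)=\opt$ for every $x\in\{t_1,\dots,n\}$. Imposing $\alpha_{f^*}(t_1)=\opt$ with $f^*_t=0$ for $t<t_1$ gives $f^*_{t_1}=\frac{(\opt-1)t_1}{n-1}=\frac{\opt-1}{\gamma-1}$, and then Lemma~\ref{lem:discrete-exponential} applied with $a=t_1+1$ forces $f^*_t=\big(1+\tfrac1{n-1}\big)^{\,t-t_1-1}f^*_{t_1+1}$ for $t>t_1$, where $f^*_{t_1+1}=\frac{(n+t_1-1)f^*_{t_1}}{t_1(n-1)}=\frac{\gamma(\opt-1)}{(n-1)(\gamma-1)}$ using the identity $n+t_1-1=\gamma t_1$.

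Finally I would impose $\sum_{t=1}^n f^*_t=1$. Summing the geometric tail gives $\sum_{t=t_1+1}^n f^*_t=\frac{\gamma(\opt-1)}{\gamma-1}\big((1+\tfrac1{n-1})^{\,n-t_1}-1\big)$, hence $1=\frac{\opt-1}{\gamma-1}\big[1+\gamma\big((1+\tfrac1{n-1})^{\,n-t_1}-1\big)\big]$, i.e.\ $\opt=1+\frac{\gamma-1}{1+\gamma((1+1/(n-1))^{n-t_1}-1)}$. The rest is bookkeeping: substitute $n-t_1=(n-1)\big(1-\tfrac1{\gamma-1}\big)+1$ to peel off one factor of $\tfrac{n}{n-1}$, set $A=\big(\tfrac{n}{n-1}\big)^{(n-1)(1-1/(\gamma-1))}$, and rewrite $\gamma\big(\tfrac{n}{n-1}A-1\big)=\tfrac{\gamma}{n-1}\big(n(A-1)+1\big)$ to match the displayed expression. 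The main obstacle is the middle step, namely justifying that the optimum genuinely keeps $\alpha_{f^*}$ pinned at $\opt$ throughout $[t_1,n]$ rather than exploiting the slack created by the (now vacuously satisfied) tail constraint; the cleanest route is the algorithm-based argument above, though one could instead give a direct exchange argument showing that, at a fixed competitive-ratio ceiling, the exponential profile uniquely maximizes total probability mass and therefore is the unique profile that can be normalized to a distribution.
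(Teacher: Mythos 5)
Your proposal is correct and follows essentially the same route as the paper: force $f_t=0$ for $t<t_1=\frac{n-1}{\gamma-1}$, argue that the competitiveness constraint is tight on all of $[t_1,n]$ so that Lemma~\ref{lem:discrete-exponential} yields the exponential profile with $f_{t_1}=\frac{\lambda-1}{\gamma-1}$ and $f_{t_1+1}=\frac{\gamma(\lambda-1)}{(n-1)(\gamma-1)}$, and then solve $\sum_t f_t=1$ for $\lambda$; your algebra, including the identity $n+t_1-1=\gamma t_1$ and the final rewriting of the geometric sum, matches the paper's. The only divergence is in justifying tightness of the competitiveness constraint on $[t_1,n]$: the paper cites Theorem~\ref{thm:opt-tight-general} directly, which is slightly delicate for exactly the reason you flag (with $\delta=0$ and $I_\gamma(x)=\emptyset$ the tail constraint is vacuously tight, so the disjunction alone does not force the competitiveness constraint to bind), whereas your appeal to Theorem~\ref{thm:alg-optimal} --- the tail term in the algorithm's update is a minimum over an empty index set for $j\ge t_1$, so the algorithm necessarily sets $f_j$ to the competitiveness value --- cleanly closes that gap.
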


\begin{theorem} \label{thm:pure-solution}
    Let $f$ be the optimal purchase distribution under a single pure tail constraint $(\gamma, 0)$.  Then
    \begin{align*}
        f_t = \begin{cases} 0 & t < \frac{n-1}{\gamma-1} \\ 
        \frac{\lambda-1}{\gamma}-1 & t = \frac{n-1}{\gamma-1} \\
        \frac{\gamma(\lambda-1)}{(n-1)(\gamma-1)} \cdot \left( 1 + \frac{1}{n-1}\right)^{t - t_1 - 1} & t > \frac{n-1}{\gamma-1} 
        \end{cases}
    \end{align*}
\end{theorem}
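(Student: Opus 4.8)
The plan is to show that a pure $(\gamma,0)$ constraint simply deletes a prefix of the index set and leaves an otherwise unconstrained ski-rental instance, after which both $f$ and $\lambda=\opt$ can be read off from the algorithm of Section~\ref{sec:algorithm}. \emph{Step 1 (locating the forced zeros).} By Lemma~\ref{lem:bad-interval} the constraint $(\gamma,0)$ is equivalent to $f_t=0$ for every $t$ lying in some bad interval $I_\gamma(x)=\{t:\max(0,\gamma x-n+1)<t\le x\}$. Writing $t_1:=\frac{n-1}{\gamma-1}$ (an integer by the divisibility hypothesis), one checks that $\gamma x-n+1\ge x$ exactly when $x\ge t_1$, so $I_\gamma(x)=\emptyset$ for every $x\ge t_1$, while for $x<t_1$ the set $I_\gamma(x)$ is nonempty and contained in $[1,t_1-1]$; taking $x=m$ gives $m\in I_\gamma(m)$ for each $m<t_1$, so $\bigcup_{x<t_1}I_\gamma(x)=\{1,\dots,t_1-1\}$. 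Hence the tail constraint forces exactly $f_t=0$ for $t<t_1$ and imposes nothing on $f_{t_1},\dots,f_n$ (note $t_1\le n-1$, so $t_1+1\le n$).

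\emph{Step 2 (running the algorithm).} By Theorem~\ref{thm:alg-optimal} the optimal $f$ is the output of the algorithm of Section~\ref{sec:algorithm} with $\lambda=\opt$. Tracing it: for $j<t_1$ we have $j\in I_\gamma(j)$ and, by induction, no mass placed yet, so the tail term equals $\delta-0=0$ and $f_j=0$; at $j=t_1$, since $t_1\notin I_\gamma(t_1)$ the tail term is vacuous, so $f_{t_1}=\frac{t_1}{n-1}(\lambda-1)=\frac{\lambda-1}{\gamma-1}$, making the competitiveness constraint tight; and for every $j>t_1$, again $I_\gamma(j)=\emptyset$, so the algorithm keeps the competitiveness constraint tight, i.e.\ $\alpha_f(x)=\lambda$ for all $x\in[t_1,n]$ (while $\alpha_f(x)=1<\lambda$ for $x<t_1$, consistent with $\opt=\lambda$). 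By Lemma~\ref{lem:discrete-exponential} applied with $a=t_1+1$ — valid since $\sum_{t\le t_1}f_t=f_{t_1}<1$ because $\lambda<2-1/n\le\gamma$ by Lemma~\ref{lem:better_than_deterministic_soln} — keeping $\alpha_f$ constant forces $f_x=(1+\tfrac1{n-1})^{x-t_1-1}f_{t_1+1}$ with $f_{t_1+1}=\frac{(n+t_1-1)f_{t_1}}{t_1(n-1)}=\frac{\gamma f_{t_1}}{n-1}$, using $n+t_1-1=\gamma t_1$. Substituting $f_{t_1}=\frac{\lambda-1}{\gamma-1}$ yields $f_t=\frac{\gamma(\lambda-1)}{(n-1)(\gamma-1)}(1+\tfrac1{n-1})^{t-t_1-1}$ for $t>t_1$, i.e.\ the claimed distribution (the degenerate case $\gamma=2$, where the support is $\{n-1,n\}$, agrees by direct substitution into the algorithm's last step).

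\emph{Step 3 (pinning down $\lambda$).} Since the $f$ above is a probability distribution, $f_{t_1}+\sum_{t=t_1+1}^n f_t=1$. Evaluating the geometric sum $\sum_{t=t_1+1}^n(1+\tfrac1{n-1})^{t-t_1-1}=(n-1)\big((1+\tfrac1{n-1})^{n-t_1}-1\big)$, this collapses to $\frac{\lambda-1}{\gamma-1}\Big(1+\gamma\big((1+\tfrac1{n-1})^{n-t_1}-1\big)\Big)=1$, whence $\lambda=1+\frac{\gamma-1}{1+\gamma((1+\frac1{n-1})^{n-t_1}-1)}$; rewriting $n-t_1=1+(n-1)(1-\tfrac1{\gamma-1})$ matches the expression in Theorem~\ref{thm:pure-opt}, and $(1+\tfrac1{n-1})^{n-t_1}\to e^{1-1/(\gamma-1)}$ gives the stated $n\to\infty$ limit (which equals $2$ at $\gamma=2$ and $\tfrac{e}{e-1}$ as $\gamma\to\infty$).

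The only genuinely delicate part is Step~1: one must verify that $t_1=\frac{n-1}{\gamma-1}$ is the exact boundary — so the divisibility assumption really is used, to keep both $t_1$ and $n-t_1$ integral — and that $I_\gamma(t_1)$ is empty rather than a singleton, since everything downstream hinges on the support of the optimum being exactly $\{t_1,\dots,n\}$; once that is in place, Steps~2 and~3 are bookkeeping (propagating the recurrence via Lemma~\ref{lem:discrete-exponential}) and routine geometric-series algebra.
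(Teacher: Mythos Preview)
Your proof is correct and follows essentially the same route as the paper's: identify the forced zeros via the bad-interval structure, compute $f_{t_1}=\frac{\lambda-1}{\gamma-1}$ from tightness of the competitiveness constraint at $t_1$, and then propagate via Lemma~\ref{lem:discrete-exponential}. The only cosmetic difference is that you organize Step~2 around the algorithm (Theorem~\ref{thm:alg-optimal}) whereas the paper invokes Theorem~\ref{thm:opt-tight-general} directly to conclude $\alpha_{f^*}(t)=\lambda$ for $t\ge t_1$; the resulting computations are identical, and your inclusion of Step~3 simply folds in the proof of Theorem~\ref{thm:pure-opt} as well.
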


It turns out to be easiest to first prove Theorem~\ref{thm:pure-solution} and then prove Theorem~\ref{thm:pure-opt}.

\begin{proof}[Proof of Theorem~\ref{thm:pure-solution}]
Note that in the rest of this proof, $\lambda = \opt$ but since we have not yet proved Theorem~\ref{thm:pure-opt} we will not instantiate this to any particular value; it is simply whatever the optimal expected competitive ratio is.  

Since $\delta = 0$, if $t \in I_{\gamma}(x)$ for some $x \in [n]$ then we must have $f_t = 0$.  It is easy to see that $I_{\gamma}(x)$ is nonempty if and only if $x \in I_{\gamma}(x)$, which happens if and only if $x < \frac{n-1}{\gamma-1}$.  So we know that $f_t = 0$ for all $t < \frac{n-1}{\gamma-1}$, and that $I_{\gamma}(t) = \emptyset$ for all $t \geq \frac{n-1}{\gamma-1}$.  Thus Theorem~\ref{thm:opt-tight-general} implies that $\alpha_{f^*}(t) = \lambda$ for all $t \geq \frac{n-1}{\gamma-1}$, where $f^*$ is the optimal solution.  

Let $t_1 =  \frac{n-1}{\gamma-1} $ be the first time where $I_{\gamma}(t_1) = \emptyset$, and so $\alpha_{f^*}(t_1) = \lambda$ and $f_t = 0$ for all $t < t_1$.   
 So we have
\begin{align*}
    \lambda &= \alpha_{f^*}(t_1) = \sum_{t \leq t_1} \frac{n+t-1}{t_1} f_t + 1 - \sum_{t \leq t_1} f_t 
    = \frac{n+t_1-1}{t_1} f_{t_1} + 1 - f_{t_1} 
    = \frac{n-1}{t_1} f_{t_1} + 1 = (\gamma-1) f_{t_1} + 1
\end{align*}
Rearranging, we get that $f_{t_1} = \frac{\lambda-1}{\gamma-1}$.

Since we know that $f_t = \lambda = \alpha_{f^*}(t_1)$ for all $t \geq t_1$, we can apply Lemma~\ref{lem:discrete-exponential} with $a-1 = t_1$ to obtain all of the other probabilities.  In particular, Lemma~\ref{lem:discrete-exponential} implies that 
\begin{align*}
    f_{t_1+1} &= \frac{1}{t_1 (n-1)} (n+t_1-1) f_{t_1} = \frac{\gamma-1}{(n-1)^2} \left( n + \frac{n-1}{\gamma-1} - 1\right) \frac{\lambda-1}{\gamma-1} = \frac{\gamma(\lambda-1)}{(n-1)(\gamma-1)} = \frac{\gamma}{n-1} f_{t_1} 
\end{align*}
Lemma~\ref{lem:discrete-exponential} further implies that $f_t = \left(1+\frac{1}{n-1}\right) f_{t-1}$ for all $t > t_1+1$.  So we have that for all $t \geq t_1+1$,
\begin{align*}
    f_t = \frac{\gamma(\lambda-1)}{(n-1)(\gamma-1)} \cdot \left( 1 + \frac{1}{n-1}\right)^{t - t_1 - 1}. \qedhere
\end{align*}
\end{proof}

We can now use Theorem~\ref{thm:pure-solution} to prove Theorem~\ref{thm:pure-opt}.

\begin{proof}[Proof of Theorem~\ref{thm:pure-opt}]
We can  use that fact that $f$ must be a distribution to give an exact characterization of $\lambda$ in terms of $n$ and $\gamma$.  We have that
\begin{align*}
    1&= \sum_{t=1}^n f_t =f_{t_1} + \sum_{t = t_1+1}^n f_{t} = \frac{\lambda-1}{\gamma-1} + \sum_{t = t_1+1}^n \frac{\gamma(\lambda-1)}{(n-1)(\gamma-1)}  \left( 1 + \frac{1}{n-1}\right)^{t - t_1 - 1} \\
    &= \frac{\lambda-1}{\gamma-1} + \frac{\gamma(\lambda-1)}{(n-1)(\gamma-1)} \sum_{t = t_1 + 1}^n \left( 1 + \frac{1}{n-1}\right)^{t - t_1 - 1} \\
    &= \frac{\lambda-1}{\gamma-1} + \frac{\gamma(\lambda-1)}{(n-1)(\gamma-1)} \sum_{i=0}^{n-t_1-1} \left( 1 + \frac{1}{n-1}\right)^i \\
    &= \frac{\lambda-1}{\gamma-1} + \frac{\gamma(\lambda-1)}{(n-1)(\gamma-1)} \cdot \left(n\left( \left( \frac{n}{n-1}\right)^{n-t_1-1} - 1\right) + 1\right) \\
    &= \frac{\lambda-1}{\gamma-1} + \frac{\gamma(\lambda-1)}{(n-1)(\gamma-1)} \cdot \left(n\left( \left( \frac{n}{n-1}\right)^{(n-1)\left(1 - \frac{1}{\gamma-1}\right)} - 1\right) + 1\right) 
\end{align*}

Multiplying both sides by $\gamma - 1$ gives us
\begin{align*} 
\gamma-1 &= \lambda-1 + \frac{\gamma(\lambda-1)}{n-1}  \left(n\left( \left( \frac{n}{n-1}\right)^{(n-1)\left(1 - \frac{1}{\gamma-1}\right)} - 1\right) + 1\right)  \\
&= (\lambda - 1) \left( 1 + \frac{\gamma}{n-1}  \left(n\left( \left( \frac{n}{n-1}\right)^{(n-1)\left(1 - \frac{1}{\gamma-1}\right)} - 1\right) + 1\right)\right).
\end{align*}

We can now solve for $\lambda$, giving us
\begin{align*}
\lambda &= 1+ \frac{\gamma-1}{1 + \frac{\gamma}{n-1}  \left(n\left( \left( \frac{n}{n-1}\right)^{(n-1)\left(1 - \frac{1}{\gamma-1}\right)} - 1\right) + 1\right)} 
\end{align*}
as claimed.
\end{proof}

Note that if we take the limit as $n \rightarrow \infty$ and use the fact that $\lim_{x \rightarrow \infty}(1+1/x)^x = e$, we get that $\lambda$ approaches
\begin{align*}
    1 + \frac{\gamma-1}{1 + \gamma \left(e^{1 - \frac{1}{\gamma-1}} - 1\right) } 
\end{align*}

In particular, when $\gamma = 2$ we get that $\lim_{n \rightarrow \infty} \lambda = 2$, i.e., we recover the classical deterministic bound, and when $\gamma \rightarrow \infty$ we get that $\lim_{n \rightarrow \infty} \lambda = 1 + \frac{1}{e-1} = \frac{e}{e-1}$, i.e., we recover the classical randomized bound.

\end{document}